\pgfplotsset{compat=1.17}
\newcommand{\kron}{\otimes}
\newcommand{\identity}{{\boldsymbol 1}}
\newcommand{\norm}[1]{\left\lVert#1\right\rVert}
\newcommand{\ketbra}[2]{\ket{#1}\!\bra{#1}}
\DeclarePairedDelimiter\ceil{\lceil}{\rceil}
\DeclarePairedDelimiter\floor{\lfloor}{\rfloor}
\newtheorem{theorem}{Theorem}[section]
\newtheorem{lemma}{Lemma}[section]
\newtheorem{fact}[theorem]{Fact}
\newtheorem{define}[theorem]{Definition}
\newtheorem{corr}{Corollary}[section]
\title{Leveraging Hamiltonian Simulation Techniques \\to Compile Operations on Bosonic Devices}
\date{\today}
\author[1, 2, 3]{Christopher Kang$^*$}%
\author[4,5,6,7]{Micheline B.~Soley\thanks{These two authors contributed equally}}
\author[11, 8]{Eleanor Crane}
\author[6,12]{Steven M. Girvin}
\author[2, 9, 10]{Nathan Wiebe}
\affil[1]{School of Computer Science, University of Washington, Seattle, WA 98195 USA}
\affil[2]{Pacific Northwest National Laboratory, Richland, WA 99354 USA}
\affil[3]{Department of Computer Science, University of Chicago, Chicago, IL 60637 USA}
\affil[4]{Department of Chemistry, University of Wisconsin-Madison, Madison, WI 53706 USA}
\affil[5]{Department of Physics, University of Wisconsin-Madison, Madison, WI 53706 USA}
\affil[6]{Yale Quantum Institute, Yale University, New Haven, CT 06520 USA 
}
\affil[7]{Department of Chemistry, Yale University, New Haven, CT 06511 USA}
\affil[8]{Joint Quantum Institute \& Joint Center for Quantum Information and Computer Science, NIST/University of Maryland, College Park, MD 20742 USA}
\affil[9]{Department of Computer Science, University of Toronto, ON M5G 1Z7 Canada}
\affil[10]{Canadian Institute For Advanced Research/Institut Canadien de Recherches Avanc\'ees, ON M5G 1Z7 Canada}
\affil[11]{Research Laboratory of Electronics, Massachusetts Institute of Technology, Cambridge, MA 02139 USA}
\affil[12]{Department of Physics, Yale University, New Haven, CT 06511 USA}
\newcommand{\change}[1]{{#1}}
\newcommand{\qq}{qumode-qubit}
\newcommand{\qumode}{qumode}
\newcommand{\qumodes}{{\qumode}s}
\newcommand{\bch}{\mathrm{BCH}}
\newcommand{\trotter}{\mathrm{Trotter}}
\begin{document}
\date{}

\maketitle

\begin{abstract}
    Circuit quantum electrodynamics enables the combined use of qubits and oscillator modes. Despite a variety of available gate sets, many hybrid qubit-boson (i.e., qubit-oscillator) operations are realizable only through optimal control theory, which is oftentimes intractable and uninterpretable. We introduce an analytic approach with rigorously proven error bounds for realizing specific classes of operations via two matrix product formulas commonly used in Hamiltonian simulation, the Lie--Trotter--Suzuki and Baker--Campbell--Hausdorff product formulas. We show how this technique can be used to realize a number of operations of interest, including polynomials of annihilation and creation operators, namely $(a)^p (a^\dagger)^q$ for integer $p, q$.  We show examples of this paradigm including obtaining universal control within a subspace of the entire Fock space of an oscillator, state preparation of a fixed photon number in the cavity, simulation of the Jaynes--Cummings Hamiltonian,  and simulation of the Hong-Ou-Mandel effect.  This work demonstrates how techniques from Hamiltonian simulation can be applied to better control hybrid qubit-boson devices.
\end{abstract}

\newpage
\tableofcontents

\newpage
\section{Introduction}

Today, many quantum computing architectures are homogeneous, with the same type of qubit used throughout the device. From devices made of superconducting qubits \cite{arute2019quantum,bravyi2022future,reagor2018demonstration} to ion trap qubits \cite{wright2019benchmarking}, prior work largely focuses on linking qubits of the same type together in fault-tolerant ways. However, there is emerging work \cite{C2QA_ISA, crane2024hybrid, chakram2021seamless, stavenger2022bosonic,stein2023microarchitectures} 
studying the use of heterogeneous quantum computers that leverage two or more types of quantum architectures (e.g., qubits and oscillator modes). Heterogeneous devices hold promise because they can be tailored for specific physical simulation problems, which would be especially useful in applications like material discovery~\cite{Holstein_Knorzer_2022}, molecular simulation~\cite{Wang2020FCFs,WangConicalIntersection}, topological models~\cite{PhysRevB.98.174505} or lattice gauge theories~\cite{C2QA_LGT}.

In particular, hybrid \qq~models \cite{blais2021circuit} hold some advantages: for example, microwave qumodes have long lifetimes and large accessible Hilbert spaces, making them
attractive targets for quantum error correction~\cite{blais2021circuit}. 
Introducing \qumodes~also enables new physical gates, 
such as the M{\o}lmer-S{\o}rensen gate \cite{molmer-sorensen-gate} while at the same time enabling new forms of transduction between qubits and qumodes \cite{Boissonneault_Dispersive_2009}.
Oscillator interactions have unique features, like nonlinearities, which are challenging to simulate even with homogeneous quantum architectures~\cite{stavenger2022bosonic}.

Efficiently compiling logical operations to physical pulses is a critical, but computationally expensive task. 
In theory, pulse design techniques like optimal control theory (OCT)~\cite{,werschnik2007quantum} 
can produce pulses that implement arbitrary quantum transformations on a hybrid \qq~system. These techniques have been applied to a variety of physical systems, including NMR~\cite{khaneja2005optimal}, superconducting transmon qubits~\cite{Koch-OCT-2017}, and \qumodes~\cite{ozguler_numerical_2022,anders_petersson_optimal_2022,ma_quantum_2021}. In practice, OCT is computationally intensive and inflexible, requiring pulses to be recompiled on a case-by-case basis. Furthermore, OCT is almost always uninterpretable, yielding only a pulse which performs the desired operation without providing any physical intuition. 
Experimentally, these limitations prevent the high-fidelity realization of quantum algorithms; theoretically, the complexity of our quantum circuits often inappropriately ignores the classical cost of required compilation.

Inspired by recent experimental progress \cite{SNAP-PhysRevLett.115.137002,eickbusch2021fast}, we introduce an extensible control scheme for a universal, hybrid \qq~quantum computer (\cref{tab:all-formulas}).  
Specifically, we show how block-encoded operators can be manipulated 
using the Lie--Trotter--Suzuki and Baker--Campbell--Hausdorff matrix product formulas. We thus enable the creation of instruction set architectures (ISAs) that 
can be analytically compiled to experimentally available gate sets.  Prior art, namely \textcite{jacobs_engineering_2007}, has studied similar techniques to compile operations; we generalize these techniques to work in a variety of domains, including in settings with multiple \qumodes~and more exotic operators, and prove concrete error bounds on these techniques.  

We develop two parallel approaches, one which primarily uses the creation and annihilation operators which we refer to as based on `Fock methods,' and the other primarily relying on position and momentum operators which we refer to as based on `phase-space methods.' We demonstrate that both methods can be used to generate an ISA for \qq~devices. \change{Because our methods operate in both Fock and phase-space, we can achieve transformations that are natively described in either picture; for example, {exponentials of polynomials of annihilation and creation operators for Hamiltonian simulation and state preparation in Fock space and controlled parity and beam splitter operators in phase space}}. Our methods obtain almost-linear asymptotic scaling.

Furthermore, we use the previously mentioned formulas to realize a number of operations of interest, including polynomials of annihilation and creation operators, namely $a^p {a^\dagger}^q$ for integer $p, q$. These block-encoded operations are crucial for quantum signal processing (QSP) and certain problems in quantum simulation. Finally, we give examples for the Hamiltonian of a nonlinear material and applications to key unsolved problems in quantum simulation such as the Fermi-Hubbard model. 
While these approaches are expensive in terms of raw gate counts, because they are analytic they provide intuition into synthesizing the gate robustly. Furthermore, these gate sequences can be used as a starting point for optimal control methods, helping to avoid cold start issues.

\section{Preliminaries}
\begin{figure}
    \centering
    \includegraphics[width=\textwidth]{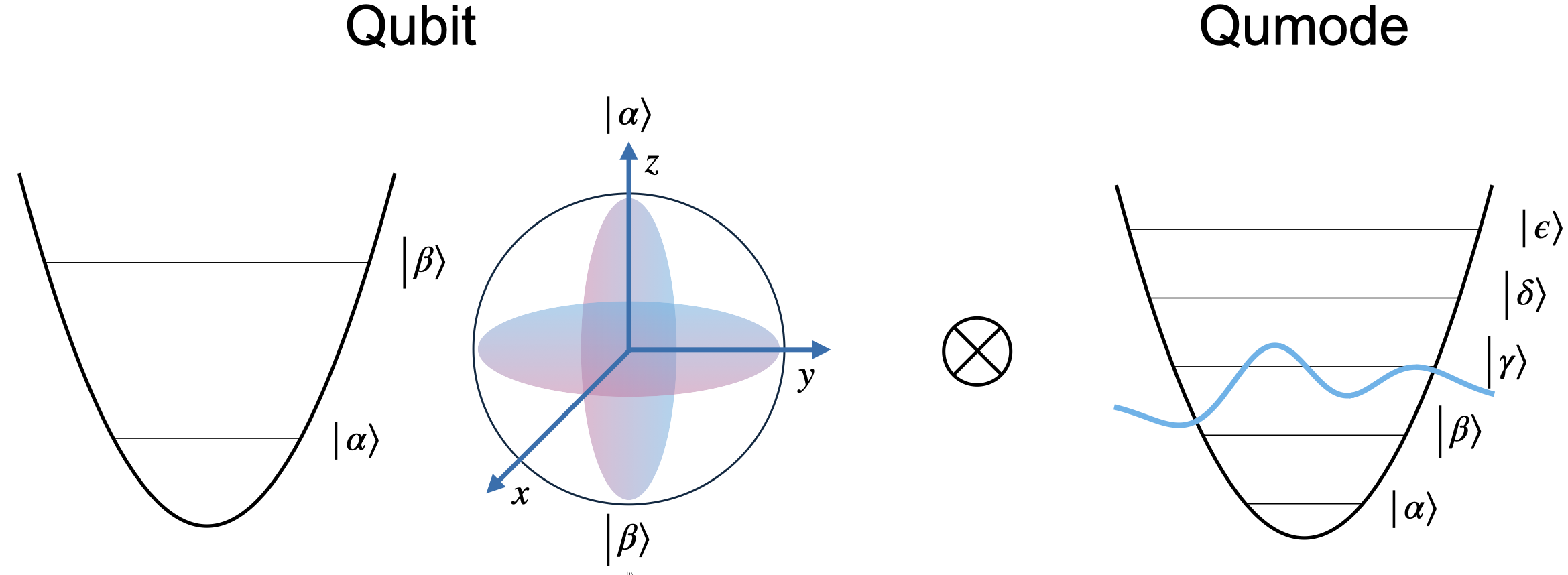}
    \caption{A Qubit-Qumode Hilbert space. On the left, the qubit is a typical 2 dimensional Hilbert space. On the right, the qumode consists of Fock states of value at most $\Lambda = 4$. }
    \label{fig:device-picture}
\end{figure}

{In this section, we introduce the hybrid \qq~architecture we operate on and the matrix product formulas we will use.}

We aim to provide a generic toolbox to build unitary transformations on hybrid \qq~devices. Such devices are common in quantum systems, spanning circuit quantum electrodynamics (superconducting qubits coupled to microwave photons) to ion-trap quantum computing (for which the mechanical modes of oscillation are coupled to atomic qubits).  The challenge is that fundamentally different insights are needed to compile unitaries in the hybrid setting over typical binary-based approaches.

We review the mathematical properties of \qumode~quantum mechanics that are needed to understand the basic operations considered for the ISA architecture that we consider.  
Specifically, we present an analytic instruction set architecture (ISA) based on the Lie--Trotter--Suzuki (Trotter) and Baker--Campbell--Hausdorff (BCH) decompositions for decomposition of gates of the form $U=e^{i\hat{H}\sigma^{j}}$, where $\hat{H}$ is a Hermitian
operator composed of phase-space operators and Pauli gates. Before jumping into the specific details of our gate operations, we need to review the basics of qumode quantum mechanics as well as the mathematical results needed to use these qumode operations to compile a given unitary transformation.

\subsection{A hybrid \qq~device}
We first produce a mathematical description of a \qumode, then describe operations which can be performed on the \qumode~and qubit (\cref{fig:device-picture}).
Qumodes store bosonic states. Bosons are are already commonplace in quantum computing experiments: photons (energy quanta of the electromagnetic field) are used in photonic chips, cavity QED, and hybrid circuit QED, while phonons (quanta of mechanical vibrations) are used to couple ion-spin qubits in ion traps. However, we are interested in using the \qumode~as an explicit computational resource, rather than as a conduit for entangling operations or source of noise.

\subsubsection{Representing the qumode}
There are two different bases that are commonly used to describe the state of the qumode: 
\begin{enumerate}
    \item \textit{Phase-space representation}, where operators are written in terms of position ($\hat{x}$) and momentum ($\hat{p}$) operators
    \item \textit{Fock-space representation}, where operators are written in terms of \qumode~creation ($a^\dagger$) and annihilation ($a$) operators.
\end{enumerate}

In the phase-space representation, the computational basis corresponds to the strength of the electric field in the case of photons (or equivalently the position of a mechanical oscillator for vibrational systems). We refer to this with the operator $\hat{x}$ and we have that for any $x\in \mathbb{R}$, $\hat{x} \ket{x} = x \ket{x}$. We also use the corresponding operator for momentum $\hat{p}$.
This describes the magnetic field for a photonic system.  In practice, cutoffs are imposed on the values of the field and further discretization error on the gates and the outputs prevents arbitrary precision readout; however, for simplicity we ignore the latter issue in order to provide a simpler if less realistic computational model and ignore the issue that even when cutoffs are imposed the vector space does not strictly form a Hilbert space without also including spatial discretization.

In the Fock-space representation, we track the number of bosons (number of photons or the energy level of the harmonic oscillator for the vibrational case) in the computational basis.  In this representation the computational basis is defined to be an eigenvector of the boson number operator $\hat{n} \ket{n} = n \ket{n}$, where $\hat{n} = a^\dagger a$ is the number operator and $a$ and $a^\dagger$ add and remove a boson from the system, respectively.  Formally this spectrum is countably infinite, but after truncation it forms a finite-dimensional Hilbert space and thus can be thought of as a qudit.  For example, assuming a cutoff $\Lambda = 3$ on the boson number $\hat n$
\begin{align}
    P_3 a^\dagger P_3 = \begin{bmatrix}
    0 & 0 & 0 & 0 \\
    1 & 0 & 0 & 0 \\
    0 & \sqrt{2} & 0 & 0 \\
    0 & 0 & \sqrt{3} & 0 
    \end{bmatrix} \qquad P_3 a P_3 = \begin{bmatrix}
    0 & 1 & 0 & 0\\
    0 & 0 & \sqrt{2} & 0 \\
    0 & 0 & 0 & \sqrt{3} \\
    0 & 0 & 0 & 0
    \end{bmatrix}.
\end{align}
Here $P_\Lambda$ is the projector onto the subspace of the cavity containing at most $\Lambda$ photons
\begin{align}
    P_\Lambda : P_\Lambda \ket{n} = \begin{cases}
    \ket{n} & n \leq \Lambda \\
    0 & \textrm{otherwise}
    \end{cases}.
\end{align}

\change{Observe that this Hilbert space has dimension $\Gamma + 1$ corresponding to the Fock states from $\ket
0$ to $\ket{\Gamma}$.}  Truncation of the Hilbert space is required to provide error bounds, otherwise the remainder terms become undefined.
Provided that an appropriate cutoff is picked for the system, the discrepancies between the truncated and untruncated systems will often be negligible. For notational clarity, we assume a cutoff of $\Lambda$ for all further equations and assume the annihilation and creation operators implicitly have the projectors $P_\Lambda$.

To incorporate the qubit's state, we 
take the tensor product of the qubit and \qumode~Hilbert spaces, so that the state space is $ \mathcal{H}_2 \kron \mathcal{H}_{\Lambda + 1}$. This means that, for example, a computational basis state will be of the form $\ket{q} \kron \ket{m} $ where the state $\ket{q}$ here can be thought of as the union of the qubits in the system, and $\ket{m}$ represents a qumode state 
where the system is either in position $x=m$ for the phase-space encoding or has $m$ photons if the Fock-space encoding is used.

\subsubsection{Representing operations and measurements via block encodings}
Depending on whether the qumode is described in Fock or phase space, there exist two pairs of complementary \qumode~operators. In Fock space, the operators are creation and annihilation operators $a^\dagger, a$; in phase-space, the operators are position and momentum $\hat{x}, \hat{p}$.

Our techniques can be applied in both spaces because
Fock and phase-space operators have the equivalencies
\begin{align}
    \hat{x} = \frac{1}{2} (a + a^\dagger) \qquad &\Leftrightarrow \qquad a = \hat{x} + i \hat{p} \label{eq:AnnihilationOperatortoPhaseSpace},\\
    \hat{p} = -\frac{i}{2} (a - a^\dagger) \qquad &\Leftrightarrow \qquad a^\dagger = \hat{x} - i \hat{p}, \label{eq:CreationOperatortoPhaseSpace}
\end{align}
and commutation relations
\begin{align}
    [\hat{x}, \hat{p}] &= \frac{i}{2} \label{eq:Commutatorxp}, \\
    [a, a^\dagger] &= 1 \label{eq:CommutatorCreationAnnihilation}.
\end{align}

To illustrate the \qq~ISA, we define three types of operations with examples:
\begin{enumerate}
    \item \textbf{Qubit-exclusive}: these include typical qubit gates like the \textit{Pauli} ($X, Y, Z$) gates, \change{ADDED below:}
    \begin{align}
        X = \begin{bmatrix}
            0 & 1 \\
            1 & 0
        \end{bmatrix}, \qquad Y = \begin{bmatrix}
            0 & -i \\
            i & 0
        \end{bmatrix}, \qquad \begin{bmatrix}
            1 & 0 \\
            0 & -1
        \end{bmatrix},
    \end{align}
    and \textit{Hadamard} ($H$) and \textit{phase} ($S$) gates
     \begin{align}
        H = \frac{1}{\sqrt{2}} \begin{bmatrix}
            1 & 1 \\
            1 & -1
        \end{bmatrix}, \qquad S = \begin{bmatrix}
            1 & 0 \\
            0 & i
        \end{bmatrix}.
    \end{align}
    \item \textbf{Qumode-exclusive}: we assume that linear optical operations (which are at most quadratic in the field operators) can be performed on the \qumode. This includes the \textit{displacement} operations $e^{ \alpha a^\dagger + \alpha^* a }$ for $\alpha \in \mathbb{C}$, \textit{phase delays} (or phase-space rotations) $e^{-i\alpha  a^\dagger a }$, and \textit{squeezing} operations $e^{\alpha (a^\dagger)^2 -\alpha^{*}a^2}$. Qumodes can also be entangled with other qumodes via \textit{beamsplitter} operations $e^{(\alpha a^\dagger b - \alpha^{*}a b^{\dagger})}$ where $b$ is the creation operator acting on a different \qumode. In this work, we primarily focus on the single-\qumode~case.  The multi-mode case with beamsplitters is discussed in \cite{C2QA_ISA,C2QA_LGT} but without the rigorous convergence bounds provided in the present work.
    \item \textbf{Qumode-qubit entangling}: there are several entangling operations between individual \qumodes~and qubits that widely appear in the circuit QED literature. Two common operations we consider are the \textit{conditional displacement} operation \cite{eickbusch2021fast} $e^{-i\sigma^z \otimes (\alpha a^\dagger + \alpha a)}$ and the \textit{Selective Number-dependent Arbitrary Phase} (SNAP) gate \cite{SNAP-PhysRevLett.115.137002} $e^{-i  \sigma^z \otimes \sum_n\alpha_n \hat P_n }$, where $\hat P_n=|n\rangle\langle n|$ is the projector onto the $n$th Fock state.
\end{enumerate}
Note that, for clarity, we use uppercase letters for qubit-exclusive gates and lowercase letters for \qumode-exclusive or \qumode-qubit gates. For example, we use $S, X, H$ for qubit-exclusive gates and $\sigma^i$ notation for Paulis in hybrid gates.

Our compilation strategy describes operations in terms of qubit-exclusive gates and the $\mathcal{S}_1$ gate, a primitive \qumode-qubit entangling gate (\cref{defn:SX}). $\mathcal{S}_1$ is a useful ``block-encoding" primitive to compose complex gates because it embeds a first-order Fock operator in the off-diagonal blocks. ``Block-encoded" matrices refer to how ``blocks"/submatrices of a larger matrix can be expressed as an existing matrix. Block encodings are frequently used in quantum algorithm design \cite{martyn2021grand, camps_approximate_2020}.

\begin{define}[$\mathcal{S}_1$ primitive gate]\label{defn:SX}
For any $t>0$ and any positive integer cutoff $\Lambda$, we define $\mathcal{S}_1$ to be the unitary acting on the Hilbert space $\mathcal{H}_2 \otimes \mathcal{H}_{\Lambda + 1}$ that has the following representation as a block matrix
\begin{align}
    \mathcal{S}_1 = \exp \left(it \begin{bmatrix}
    0 & a^{\dagger} \\
    a & 0
    \end{bmatrix} \right).
\end{align}
Note that $\mathcal{S}_1$ can itself be decomposed into conditional displacements (see \cref{obtaining_s1}) or implemented directly via OCT \cite{PhysRevX.4.041010,PhysRevA.91.043846,Rosenblum2018,Rosenblum2023}. Also note that the block encoding can also be expressed as the sum of qumode-qubit tensor products
\begin{align}\label{eq:JC}
    \begin{bmatrix}
    0 &  a^\dagger  \\
     a & 0
    \end{bmatrix} = \ket{0}\bra{1} \kron a^\dagger  + \ket{1}\bra{0} \kron  a.
\end{align}
\end{define}

We also consider the broader class of block-encoded Hamiltonians:
\begin{define}[Block encodings]
For a qumode operator $A$ acting upon $\mathcal{H}_{\Lambda + 1}$, we denote the joint block-encoded Hamiltonian to be
    \begin{align}\label{eq:blockencodedmatrix}
        \mathcal{B}_A = \begin{bmatrix}
            0 & A \\
            A^\dagger & 0
        \end{bmatrix},
    \end{align}
so that the subscript is the upper right block and the lower left block is the transpose and complex conjugate to preserve Hermiticity. I.e., $\mathcal{B}_A$ is Hermitian for any $A$, and thus is a suitable Hamiltonian. Note that $\mathcal{B}_A$ can describe a \qq~Hamiltonian acting on $\mathcal{H}_2 \otimes \mathcal{H}_{\Lambda + 1}$. Furthermore, if $A$ Hermitian, $\exp it \mathcal{B}_A = \exp{it\sigma^x A}$, otherwise it corresponds to $\exp{it( \ket{0}\bra{1} \kron A  + \ket{1}\bra{0} \kron A^\dagger)}$.
\end{define}

In this block-encoding notation, observe that $\mathcal{S}_1 = \exp it  \mathcal{B}_{a^\dagger}$. 
Throughout this work, we consider increasingly exotic $A$ matrices that can be created via polynomials of qumode operators.

Finally, to simplify notation, we oftentimes write tensor products implicitly as follows
\begin{align}
    \sigma^i \otimes M = \sigma^i M.
\end{align}
This notation is used extensively to abbreviate the action of $\sigma^i$  on $\mathcal{H}_2$ and $M$ (which is  comprised of qumode operators) on $\mathcal{H}_{\Lambda + 1}$.

We further assume that the qubit can be measured directly, but the \qumode~can only be measured by entangling it with a qubit and reading out the state of the qubit to obtain a single classical bit of information about the qumode state \cite{C2QA_ISA}.

\subsection{Matrix product formulas}
\begin{figure}
    \centering
    \includegraphics[width=\textwidth]{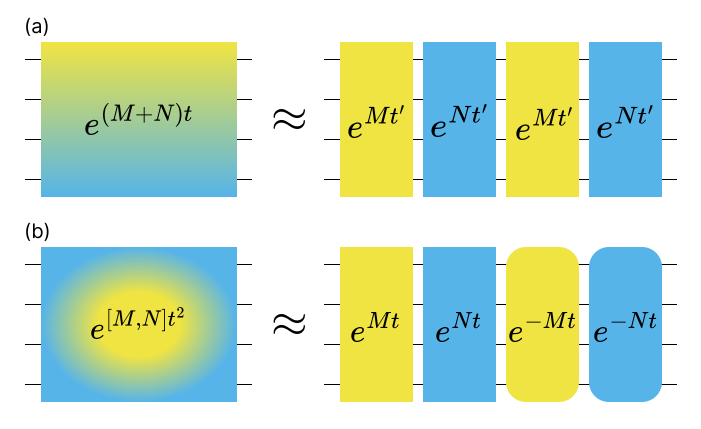}
    \caption{A visualization of the Trotter and BCH formulas. (a) A Trotterized formula that approximates $e^{(M + N) t}$ via $e^{M t}, e^{N t}$ evaluated at $t'$. (b) A BCH product formula which approximates $e^{[M, N] t^2}$ via $e^{M t}, e^{N t}$ evaluated at $t, -t$.}
    \label{fig:product-formula-visualization}
\end{figure}

Matrix product formulas describe the behavior of products of matrix exponentials (namely $e^A e^B$). These formulas are well-known in Hamiltonian simulation~\cite{childs2021theory,berry2007efficient,su2021fault} and are used to approximate a discretized version of the time evolution operator $e^{-iHt}$ using the Trotter formula. Thus, they often have rigorous error bounds that describe how $e^{-iHt}$ can be approximated given constituent $e^{-iH_jt}$. In our setting, $e^{-iHt}$ will be the operation we seek to implement using the $\mathcal{S}_1$ gate. 

As stated above, we use two product formulas: the BCH formula and the Trotter formula (\cref{fig:product-formula-visualization}). The BCH formula is used to create a commutator (or anticommutator) of operators. The Trotter formula is used to add these commutators and anticommutators together. We introduce the informal theorems below:
\begin{theorem}[Informal Trotter theorem from \cite{berry2007efficient}]
    Suppose we may implement $e^{M \lambda}$ and $e^{N \lambda}$ for arbitrary $\lambda \in \mathbb{R}$ and anti-Hermitian $M, N$. Then, a $p^\text{th}$ order Trotter formula has the error scaling
    \begin{align}
        \trotter_{2p}(M\lambda, N \lambda) & = e^{(A+B)\lambda} + \mathcal{O}((\norm{M + N} \lambda)^{2p + 1}),
    \end{align}
    requiring no more than $4 \cdot 5^{ p -1}$ exponentials.
\end{theorem}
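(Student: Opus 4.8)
The plan is to recapitulate Suzuki's recursive construction of higher-order product formulas, which is precisely the argument behind the bound quoted from \cite{berry2007efficient}. Note first that after the implicit $P_\Lambda$ truncation the operators $A$ and $B$ are bounded, so all matrix exponentials and their products admit absolutely convergent power-series expansions in $\lambda$, and the remainder terms below are well defined. I would start from the base case: the symmetric second-order formula
\begin{align}
    \trotter_2(A\lambda, B\lambda) = e^{A\lambda/2}\, e^{B\lambda}\, e^{A\lambda/2}.
\end{align}
Expanding in $\lambda$ shows the $\mathcal{O}(\lambda)$ terms agree with $e^{(A+B)\lambda}$. The structural fact that does the real work is the time-reversal symmetry $\trotter_2(-A\lambda,-B\lambda) = \trotter_2(A\lambda,B\lambda)^{-1}$: it forces $\log\!\big(e^{-(A+B)\lambda}\trotter_2(A\lambda,B\lambda)\big)$ to be an \emph{odd} function of $\lambda$, so its lowest-order correction is $\mathcal{O}(\lambda^3)$, and a standard Taylor remainder estimate yields $\trotter_2(A\lambda,B\lambda) = e^{(A+B)\lambda} + \mathcal{O}((\max(\norm{A},\norm{B})\lambda)^3)$. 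This uses only $3$ exponentials, within the claimed $4\cdot 5^{0}$.

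Next comes the recursive step. Writing $\trotter_{2k-2}(s)$ for the order-$(2k-2)$ formula with all of its arguments scaled by $s$, define $u_k = 1/\!\left(4 - 4^{1/(2k-1)}\right)$ and
\begin{align}
    \trotter_{2k} = \trotter_{2k-2}(u_k)^{2}\ \trotter_{2k-2}(1-4u_k)\ \trotter_{2k-2}(u_k)^{2}.
\end{align}
The $5$ scaling factors sum to $1$, so the leading $(A+B)\lambda$ term is preserved. I would then argue by induction on $k$ that (i) $\trotter_{2k}$ again satisfies the time-reversal symmetry, hence its error expansion contains only odd powers of $\lambda$; and (ii) writing $\trotter_{2k-2}(s\lambda) = \exp\!\big((A+B)s\lambda + E_{2k-1}(s\lambda)^{2k-1} + \mathcal{O}(\lambda^{2k+1})\big)$ and combining the five factors through the Baker--Campbell--Hausdorff expansion, the surviving order-$(2k-1)$ contribution is proportional to $4u_k^{2k-1} + (1-4u_k)^{2k-1}$, which vanishes by the definition of $u_k$. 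Hence each recursion promotes the error order by $2$, and starting from the order-$2$ base and iterating $p-1$ times gives an order-$2p$ formula with error $\mathcal{O}((\norm{A+B}\lambda)^{2p+1})$. For the gate count, if $N_{2k}$ is the number of exponentials in $\trotter_{2k}$, the recursion gives $N_{2k}\le 5\,N_{2k-2}$ (merging adjacent like-exponentials can only help), and with $N_2\le 4$ this unrolls to $N_{2p}\le 4\cdot 5^{p-1}$.

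The main obstacle I anticipate is not the algebra but making the error bookkeeping genuinely rigorous: one must bound the tails of the Taylor/BCH expansions uniformly in operator norm, which needs a bound on the relevant nested commutators of $A$ and $B$ in terms of $\max(\norm{A},\norm{B})$, and then repackage this as the stated $\norm{A+B}$-dependent bound; one also has to check that the implied constant in $\mathcal{O}(\cdot)$, while it grows through the $p-1$ levels of recursion, stays finite for each fixed $p$. Since the statement is an informal restatement, I would ultimately defer the explicit remainder estimates to \cite{berry2007efficient} (and Suzuki's original papers) and keep the emphasis on the three structural mechanisms: symmetry kills the even-order errors, the choice of $u_k$ cancels the leading odd-order error at each level, and the $5$-fold branching of the recursion is what produces the $5^{p-1}$ factor in the exponential count.
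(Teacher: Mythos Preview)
Your sketch is correct and reproduces the standard Suzuki recursive argument that underlies the bound in \cite{berry2007efficient}. Note, however, that the paper does not give its own proof of this statement: it is quoted as an informal version of Lemma~1 of \cite{berry2007efficient} (restated more precisely in the appendix as \cref{fact:trotter}) and is used as a black box, so there is no ``paper's proof'' to compare against beyond the citation itself.
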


\begin{theorem}[Informal BCH theorem from \cite{Childs_2013}]\label{thm:BCH}
    Suppose we can implement the operators $e^{M \lambda}, e^{N \lambda}$ for $\lambda \in \mathbb{R}$ and anti-Hermitian $M, N$. Then, a BCH formula of order $p$ has the error scaling
    \begin{align}
        \bch_p(M\lambda, N \lambda) &= e^{[M, N] \lambda^2} + \mathcal{O}((\max (\norm{M}, \norm{N}) \lambda)^{2p + 1}), 
    \end{align}
    requiring no more than $8 \cdot 6^{ p -1}$ exponentials.
\end{theorem}

These product formulas are defined recursively and are comprised of sequences of $e^{M \lambda}, e^{N \lambda}$ gates evaluated at varying values of $\lambda$. While the sequences may be long, as there is an exponential dependence on $p$, it is important to note that these error bounds are known to be loose, so experimental implementations may benefit from improved accuracy. 

Finally, note in our notation these formulas take the logarithm matrix and output sequences of matrix exponentials which approximate a desired exponential. Additionally, note that Hamiltonians can be `conjugated', i.e. we can transform a Hamiltonian from $H$ to $U H U^\dagger$ for any unitary $U$ via:
\begin{align}
    U \exp (-i t H) U^\dagger = \exp (-it U H U^\dagger) 
\end{align}
Conjugation, Trotter, and BCH give us immense flexibility for the Hamiltonian manipulations which can be achieved (\cref{tab:all-formulas}).

\section{Producing Anticommutators and Exponential Products}
In this section, we formalize our technique to build polynomials of block encodings in both Fock and phase space. This is notable because it shows how Hermiticity is not a requirement for our compilation scheme: whereas phase-space operators $\hat{x}$ and $\hat{p}$ are Hermitian, Fock operators $a, a^\dagger$ are decidedly non-Hermitian.

We first show how our hybrid \qq~architecture allows for the synthesis of anticommutators of Hermitian operators and, by proxy, matrix products of phase-space operators. We then use similar techniques with non-Hermitian operators such as Fock-space operators to manipulate block encodings of matrices. Finally, we contextualize these methods with asymptotic error bounds, providing theoretical analyses of our proposed techniques.

\begin{table}
    \small
    \centering
    \begin{tabularx}{\textwidth}{ccll}
        \hline
        \textbf{Formula} & \textbf{Target} &\textbf{Preconditions} & \textbf{Reference} \\
        \hline
        $\textrm{BCH}(it \sigma^i B , it \sigma^i A )$ & $\exp( t^2 [A, B]) \identity$ &$A, B$ Hermitian &\cref{thm:BCH}  \\
        $\textrm{BCH}(it \sigma^j A , it \sigma^k B ) $ & $\exp(- i t^2 \sigma^i \{ A, B \} )$ &$A, B$ Hermitian &\cref{eq:sigmazterm} \\
                \hline
         $\textrm{BCH}(X \cdot  i t \mathcal{B}_B \cdot X, it \mathcal{B}_A )$ & $\exp\sigma^z (t^2 (AB - (AB)^\dagger) )$ & $[A,B]=0$ &\cref{eq:C2}\\
         $\textrm{BCH}(S \cdot it \mathcal{B}_A \cdot S^\dagger, X \cdot it \mathcal{B}_B \cdot  X) $ & $\exp(i t^2 \sigma^z (AB + (AB)^\dagger) ) $ & $[A,B]=0$ & \cref{eq:AC2}\\
         $\textrm{BCH}(S\cdot  it \mathcal{B}_A \cdot S^\dagger, X \cdot it \mathcal{B}_B \cdot X) $ & $\exp(2 i t^2 \sigma^z AB  )$ & \makecell[l]{$[A,B]=0$, \\$AB = (AB)^\dagger $} &\cref{eq:Product} \\
         $X \cdot \textrm{Trotter}\Big(t \sigma^y (AB - (AB)^\dagger) , $ & $\exp\Big(2 it \begin{bmatrix}
            0 & AB \\
            (AB)^\dagger & 0
        \end{bmatrix} \Big) $ & $[A,B]=0$ &\cref{thm:general-adder-error}\\
            \qquad$ it \sigma^x (AB + (AB)^\dagger) \Big) \cdot X$ & \\
         $\textrm{BCH}(S \cdot i t \mathcal{B}_{A} \cdot S^\dagger, X \cdot i t \mathcal{B}_{B} \cdot X)$ & $\exp\left(it \begin{bmatrix}
            2 AB & 0 \\
            0 & -BA - (BA)^\dagger
        \end{bmatrix}\right)$ & $AB = (AB)^\dagger$ &\cref{lem:multiplication-alg}
    \end{tabularx}
    \caption{
    Overview of techniques for synthesizing particular unitary transformations and the quantum gates needed. Our formulas allow the manipulations of broad classes of Hamiltonian block encodings, denoted $\mathcal{B}_A, \mathcal{B}_B$. Each row contains the formula used, the target to approximate, the preconditions, and a reference to the location of the precise statement of the performance of the method. Most formulas use `conjugated' hamiltonians ($U H U^\dagger$) which can be achieved via conjugation of the exponential. The formula provided denotes hybrid gates with $\sigma^i$ terms, and single-qubit gates are capitalized (e.g., $S, X, H$). The bounds on the number of gates depend on the accuracy required of the approximation and are given in the corresponding theorems. 
    }
    \label{tab:all-formulas}
\end{table}

\subsection{Intuition of polynomial building via anticommutators in phase space}
Our \qq~architecture uses BCH to natively implement anticommutator-like exponentials by exploiting the qubit's Hilbert space. In conjunction with Trotter, these anticommutators are used to build larger block encodings; we generate nontrivial transformations like beamsplitters and the Hong-Ou Mandel effect in~\cref{sec:applications}.

To begin, we recall the  \qq~commutators
\begin{align}
    [\hat{x}, \hat{p}] = \frac{i}{2} \qquad \sigma^{i} & =\epsilon_{ijk}\frac{i}{2}\left[\sigma^{j},\sigma^{k}\right],\label{eq:PauliCommutator}
\end{align}
where $\epsilon_{ijk}$ is the Levi-Civita symbol, and $i,j,k \in \{x,y,z\}$. We use these relations, as well as the Pauli product identity 
\begin{align}
\sigma^{i} & =\epsilon_{ijk}i\sigma^{j}\sigma^{k}\label{eq:PauliProduct}
\end{align}
to decompose the anticommutator $\sigma_{i} \left\{ A,B\right\} $ in terms of a \qumode-qubit commutator
\begin{align}
    [i \sigma^j A, i \sigma^k B] &= - \sigma^j \sigma^k AB + \sigma^k \sigma^j BA \nonumber \\
    &= i \epsilon_{ijk} \sigma^i AB - i \sigma^i BA \nonumber \\
    &= i \epsilon_{ijk} \sigma^i \{ A, B \} \label{eq:sigmazterm}.
\end{align}
Note this assumes $A, B \in \mathcal{H}_{\Lambda + 1}$ are Hermitian so that $\exp[i \sigma^j A]$ is unitary and commutes with $\sigma^{i}$, as is the case when $A, B$ are mode-only operators. Thus, by using a hybrid qubit-cavity operation of the form $\exp i A \sigma^j, \exp i B \sigma^k$, the BCH formula can convert commutators into anticommutators. 

Finally, because $\frac{1}{2}[A, B] + \frac{1}{2} \{A , B\} = AB$, we use the Trotter formula to produce
\begin{align}
    \trotter \left(\frac{1}{2} \sigma^i [A, B], \frac{1}{2} \sigma^i \{ A, B \} \right) \approx \exp (\sigma^i AB),
\end{align}
assuming we may implement $\exp \left(\frac{1}{2} \sigma^i [A, B]\right)$ via a traditional BCH formula.

\subsection{Polynomial building non-Hermitian block encodings in Fock space}
In this section, we show how to achieve $A^q$ for an arbitrary \qumode~operator $A$. We also show that our techniques work in a multi-\qumode~setting. This extends the prior techniques, which require $A$ to be Hermitian. When $A$ is a Fock space operator, this corresponds to realizing arbitrary powers of $a$ and $a^{\dagger}$, which is known to generate a universal set of operations on the \qumode. This is useful for the simulation of nonlinear materials, which naturally lead to terms that are polynomial in $a, a^\dagger$, as well as quantum signal processing~\cite{martyn2021grand}. 

Our method again uses the qubit coupling to induce a phase in the compound \qq~system, similar to the previous section. We begin with block encodings as described in Eq.~\ref{eq:blockencodedmatrix}. 
To manipulate the block encodings, we begin by recognizing that qubit-only operations can modify the exponential via ``conjugation,",i.e.
\begin{align}
    U e^A U^\dagger = e^{U A U^\dagger}.
\end{align}
Thus, given any block encoding, we can also create the auxiliaries
\begin{align}
    X \cdot \exp \left( it \mathcal{B}_A \right) \cdot X &=  \exp \left( it X \cdot \begin{bmatrix}
        0 & A \\
        A^\dagger & 0
    \end{bmatrix} \cdot X \right) = \exp it \begin{bmatrix}
        0 & A^\dagger \\
        A & 0
    \end{bmatrix} = \exp \left( it \mathcal{B}_{A^\dagger} \right), \\
    S \cdot \exp \left( it \mathcal{B}_A \right) \cdot S^\dagger &= \exp \left( it S \cdot \begin{bmatrix}
        0 & A \\
        A^\dagger & 0
    \end{bmatrix} \cdot S^\dagger \right) = \exp it \begin{bmatrix}
        0 & -i A \\
        i A^\dagger & 0
    \end{bmatrix} = \exp (it \mathcal{B}_{-iA}),
\end{align}
recalling that $S$ is a qubit phase gate. Applying BCH yields the commutators
\begin{align}
    \Big[X \cdot it \mathcal{B}_A \cdot X, it \mathcal{B}_A \Big] 
    &= - t^2 \left( \begin{bmatrix}
        (A^\dagger)^2 & 0 \\
        0 & A^2
    \end{bmatrix} - \begin{bmatrix}
        A^2 & 0 \\
        0 & (A^\dagger)^2
    \end{bmatrix} \right) \nonumber \\
    &= t^2 \sigma^z (A^2 - (A^\dagger)^2), \\
    \left[S \cdot it \mathcal{B}_A \cdot S^\dagger, X \cdot it \mathcal{B}_A \cdot X \right] 
    &= -t^2 \left( \begin{bmatrix}
        - i A^2 & 0 \\
        0 & i (A^\dagger)^2
    \end{bmatrix} - \begin{bmatrix}
        i (A^\dagger)^2 & 0 \\
        0 & - iA^2
    \end{bmatrix} \right) \nonumber \\
    &= i t^2 \sigma^z (A^2 + (A^\dagger)^2).
\end{align}
These commutators themselves can be conjugated. Recall that $H Z H = X$ and $SH Z HS^\dagger = Y$
\begin{align}
    SH \cdot \left[it \mathcal{B}_{A^\dagger}, it \mathcal{B}_A \right] \cdot HS^\dagger &=t^2 \sigma^y (A^2 - (A^\dagger)^2), \\
    H \cdot \left[it \mathcal{B}_{-iA} , it\mathcal{B}_{A^
    \dagger} \right] \cdot H &= i t^2 \sigma^x (A^2 + (A^\dagger)^2),
\end{align}
so that
\begin{align}
    i t^2 \sigma^x (A + (A^\dagger)^2) + t^2 \sigma^y (A^2 - (A^\dagger)^2) = 2 it^2 \begin{bmatrix}
        0 & (A^\dagger)^2 \\
        A^2 & 0
    \end{bmatrix}.
\end{align}
Thus, using only $\mathcal{B}_A(t)$ gates, we can approximate $\mathcal{B}_{A^2}$. 

We now lift this procedure to produce $\exp it \mathcal{B}_{AB}$ given $\exp it \mathcal{B}_A, \exp it \mathcal{B}_B$ for commuting $A, B \in \mathcal{H}_{\Lambda + 1}$. Observe the following commutators (whose exponentials we can implement via BCH)
\begin{align}
    \left[ i \tau \mathcal{B}_{B^\dagger}, i \tau \mathcal{B}_A \right]
    &= \tau^2 \begin{bmatrix}
        AB - (AB)^\dagger & 0 \\
        0 & (BA)^\dagger - BA
    \end{bmatrix}, \\
    \left[ i \tau \mathcal{B}_{-iA}, i \tau \mathcal{B}_{B^\dagger} \right]
    &= i\tau^2 \begin{bmatrix}
        AB + (AB)^\dagger & 0 \\
        0 & -BA - (BA)^\dagger
    \end{bmatrix}. \label{eq:also-alg2}
\end{align}
Provided that $[A, B] = 0$ and via conjugation,
the equality simplifies to
\begin{align}
    SH \cdot  \Big[ it \mathcal{B}_{B^\dagger}, it \mathcal{B}_{A}\Big] \cdot HS^\dagger &= 
    \tau^2 \sigma^y(AB - (AB)^\dagger)  \label{eq:C2},\\
    H \cdot \Big[ it \mathcal{B}_{-iA}  , it \mathcal{B}_{B^\dagger}  \Big] \cdot H &= 
    i\tau^2 \sigma^x (AB + (AB)^\dagger) \label{eq:AC2}.
\end{align}
Via Trotter, we can directly implement the sum:
\begin{align}
    \tau^2 (AB - (AB)^\dagger) \sigma^y  + i\tau^2 \sigma^x  (AB + (AB)^\dagger)   = 2 i \tau^2 \begin{bmatrix}
        0 & (AB)^\dagger \\
        AB & 0
    \end{bmatrix}\label{eq:Product}.
\end{align}
We select $\tau = \sqrt{\frac{t}{2}}$ to obtain the desired time and conjugate by $\sigma^x$ to produce the desired matrix. 

This procedure is described in \cref{alg:adder} and thus allows us to approximate $\exp i t \mathcal{B}_{AB}$. This process can be repeated iteratively, assuming $AB$ commutes with $B$; for example, if $A = B = a$, then this process can be used to produce higher powers $a^k, (a^\dagger)^k$.

Our formulas require $[A, B] = 0$ to build higher order polynomials. This requirement is tolerable, as we still may achieve a broad class of transformations, including homogeneous polynomials of $a$ or $a^\dagger$. 
Our formulas can also be extended to more general cases where the operators commute, e.g. when the synthesized unitary operates on two different \qumodes, as in the conditional beamsplitter (a gate that acts as a beamsplitter controlled on an ancillary qubit).

\cref{alg:adder} is an extension of the prior commutator approaches in phase space because the $\sigma^i = - \frac{i}{2} [\sigma^j, \sigma^k]$ relation is natively expressed in the algorithm; i.e., if we have $\mathcal{B}_{A} = \mathcal{B}_B =  \mathcal{B}_{\hat{x}} = \exp it \hat{x} \sigma^x$, the ``$\textrm{Left}$" term vanishes and the``$\textrm{Right}$" term is the commutator we would apply.

Finally, in \cref{alg:mult}, we demonstrate how to implement $AB$ if $AB = (AB)^\dagger$. This process places $AB$ in the upper left block, which is useful to exactly execute $\exp it AB$ by preparing the qubit state to $\ket{0}$, but it prevents the process from being executed recursively. Simplifying the commutator from \cref{eq:also-alg2} finds
\begin{align}
    \left[ i \tau \mathcal{B}_{-i A}, i \tau \mathcal{B}_{B^\dagger}  \right] = i \tau^2 \begin{bmatrix}
        2 AB & 0 \\
        0 & -BA - (BA)^\dagger
    \end{bmatrix}.
\end{align}

\subsection{Error analysis}
The prior description of our algorithm assumes errorless product formulas. However, the BCH and Trotter formulas  introduce errors which must be accounted for, especially when applying our algorithm recursively. In this section, we cite the error scaling of the general addition algorithm described in \cref{alg:adder} and the multiplication algorithm described in \cref{alg:mult}. The proofs and full results are included in \cref{apndx:error-analysis}.

\begin{algorithm}[t]
\caption{ADD($\mathcal{B}_A (t), \mathcal{B}_B(t), p_l, p_r, t$) }\label{alg:adder}
\begin{algorithmic}
\Require $[A, B] = 0$, $\norm{\mathcal{B}_A - \exp it \begin{bmatrix}
    0 & A \\
    A^\dagger & 0
\end{bmatrix} } \in \mathcal{O}((c_A t)^{p_A})$, $\norm{\mathcal{B}_B - \exp it \begin{bmatrix}
    0 & B \\
    B^\dagger & 0
\end{bmatrix} } \in \mathcal{O}((c_B t)^{p_B})$, $p_A, p_B \geq 1$, $t > 0$
\Ensure $\mathcal{B}_{AB}$ where $\norm{\mathcal{B}_{AB}(t) - \exp it \begin{bmatrix}
    0 & AB \\
    (AB)^\dagger & 0
\end{bmatrix}} \in \mathcal{O}((Ct)^{\min (p_A, p_B) / 2})$ for constant $C$
\State $q \coloneqq \max ( \ceil{\frac{1}{2}(\min(p_l, p_r) - 1)}, 1 )$
\State $s \coloneqq \max ( \ceil{\frac{1}{2} ( \min(p_l, p_r) - 1)}, 1 )$
\State $\tau \coloneqq \sqrt{t / 2}$
\State Left $\coloneqq \bch_{q, 1}(X \cdot i \tau \mathcal{B}_B \cdot X, i \tau \mathcal{B}_A)$
\State Right $\coloneqq \bch_{q, 1}(S\cdot i \tau  \mathcal{B}_A \cdot S^\dagger, X \cdot i \tau  \mathcal{B}_B \cdot X)$
\State Left' $\coloneqq SH \cdot \textrm{Left} \cdot HS^\dagger$
\State Right' $\coloneqq H \cdot \textrm{Right} \cdot H$ 
\State \Return $X \cdot \trotter_s(\textrm{Left'}, \textrm{Right'}) \cdot X$
\end{algorithmic}
\end{algorithm}

\begin{algorithm}[t]
\caption{MULT($\mathcal{B}_A (t), \mathcal{B}_B(t), p_l, p_r, t$) }\label{alg:mult}
\begin{algorithmic}
\Require $AB = (AB)^\dagger$, $p_l, p_r \geq 1$, $t > 0$
\Ensure An upper-left block encoding $\mathcal{M}_{AB}$ where $\norm{\mathcal{M}_{AB}(t) - \exp it \begin{bmatrix}
    AB & 0 \\
    0 & \frac{1}{2} ( - BA - (BA)^\dagger)
\end{bmatrix}} \in \mathcal{O}( (C^2 t)^{\min (p_A, p_B) / 2} )$ for constant $C$
\State $q \coloneqq \max ( \ceil{\frac{1}{2}(\min(p_l, p_r) - 1)}, 1 )$
\State $\tau \coloneqq \sqrt{t / 2}$
\State \Return $\textrm{BCH}_q(S \cdot i \tau \mathcal{B}_{A} \cdot S^\dagger, X \cdot i \tau \mathcal{B}_{B} \cdot X) $
\end{algorithmic}
\end{algorithm}

\begin{restatable}{theorem}{algproduct}\label{thm:general-adder-error}
    Suppose we have approximations $\widetilde{\mathcal{B}}_A(t), \widetilde{\mathcal{B}}_B(t)$ with the error scaling
    \begin{align}
        \norm{\exp it \widetilde{\mathcal{B}}_A - \exp it \mathcal{B}_A} &\in \mathcal{O}( (ct)^{p_A} ), \\
        \norm{
        \exp it\widetilde{\mathcal{B}}_B - \exp it\mathcal{B}_B} &\in \mathcal{O} ((ct)^{p_B} ), 
    \end{align}
    for some constant $c$ and order $p_A, p_B \geq 1$ where $[A,B]=0$. Then, the application of \cref{alg:adder} will yield the scaling
    \begin{align}
        \norm{{\rm{ADD}} ( it \widetilde{\mathcal{B}}_A, it \widetilde{\mathcal{B}}_B) - \exp it \mathcal{B}_{AB}} \in \mathcal{O} \left((C_{TOTAL} t)^{\min(p_A, p_B) / 2}\right),
    \end{align}
    with $C_{TOTAL} = \max( \norm{AB}, \norm{BA}, C_{BCH}^2)$ and $C_{BCH} = \max( \norm{A}, \norm{B}, c)$,
    using no more than $ 1.07 \cdot 30^q $ exponentials, where $q = \max (\ceil{\frac{\min(p_1, p_2) - 1}{2}}, 1) $. 
\end{restatable}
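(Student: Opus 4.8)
The plan is to \emph{instrument} \cref{alg:adder}: first pin down the exact, error-free transformation computed by the idealized algorithm, then attach one error term to each approximation step, and finally sum the contributions using unitary invariance of the operator norm together with the elementary fact that a product of $N$ unitaries, each within $\epsilon$ of its target, lies within $N\epsilon$ of the product of the targets.

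First, for the exact identity, write $\tau=\sqrt{t/2}$ and use the precondition $[A,B]=0$: the commutator identities \cref{eq:C2} and \cref{eq:AC2} (with $t$ replaced by $\tau$) show that the \emph{ideal} outputs of the two order-$q$ $\bch$ calls, after the Clifford conjugations $SH(\cdot)HS^{\dagger}$ and $H(\cdot)H$, are $e^{G_1}$ and $e^{G_2}$ with $G_1=\tau^{2}(AB-(AB)^{\dagger})\sigma^{y}$ and $G_2=i\tau^{2}(AB+(AB)^{\dagger})\sigma^{x}$, both anti-Hermitian, hence genuine unitaries. By \cref{eq:Product}, $G_1+G_2=2i\tau^{2}\begin{bmatrix}0&(AB)^{\dagger}\\AB&0\end{bmatrix}$, so $2\tau^{2}=t$ together with the final $X$-conjugation makes the ideal output exactly the block encoding of $AB$ stated in the theorem.

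Next I would control three error sources and show each is $\mathcal{O}((C_{TOTAL}t)^{\min(p_A,p_B)/2})$. \emph{(i) Input error}: replacing $\widetilde{\mathcal{B}}_A(\tau)$ and $\widetilde{\mathcal{B}}_B(\tau)$ by the exact block encodings costs $\mathcal{O}((c\tau)^{p_A})$, $\mathcal{O}((c\tau)^{p_B})$ per occurrence, and each occurs $\mathcal{O}(30^{q})$ times, contributing $\mathcal{O}(30^{q}(c\tau)^{\min(p_A,p_B)})$ with $(c\tau)^{m}=(c^{2}t/2)^{m/2}$. \emph{(ii) $\bch$ error}: by \cref{thm:BCH} each order-$q$ $\bch$ block differs from its ideal exponential by $\mathcal{O}((\max(\norm{A},\norm{B})\tau)^{2q+1})=\mathcal{O}((\max(\norm{A},\norm{B})^{2}t/2)^{(2q+1)/2})$, and propagating this through the outer $\trotter$ wrapper --- a product of $\mathcal{O}(5^{q})$ such blocks, each run at a rescaled (and possibly argument-swapped) time of modulus $\le\tau$ --- only multiplies it by $\mathcal{O}(5^{q})$. \emph{(iii) $\trotter$ error}: by the $\trotter$ theorem recalled above, $\trotter_{s}$ applied to the ideal $e^{G_1},e^{G_2}$ lies within $\mathcal{O}(\norm{G_1+G_2}^{2s+1})=\mathcal{O}((t\norm{AB})^{2s+1})$ of $e^{G_1+G_2}$, and $\norm{G_1+G_2}$ likewise governs a $t\norm{BA}$ term in the mirror block. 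The crux is the arithmetic $q=s=\max(\ceil{(\min(p_A,p_B)-1)/2},1)\Rightarrow 2q+1\ge\min(p_A,p_B)\Rightarrow(2q+1)/2\ge\min(p_A,p_B)/2$; combined with the $\tau=\sqrt{t/2}$ reparametrization, and assuming $t$ small enough that higher powers of $C_{TOTAL}t$ are subdominant, each of (i)--(iii) becomes $\mathcal{O}((C_{TOTAL}t)^{\min(p_A,p_B)/2})$ with $C_{TOTAL}=\max(\norm{AB},\norm{BA},C_{BCH}^{2})$ and $C_{BCH}=\max(\norm{A},\norm{B},c)$. Summing and absorbing the $30^{q},5^{q}$ prefactors into the $\mathcal{O}$-constant gives the error bound; for the gate count, each order-$q$ $\bch$ uses $\le 8\cdot 6^{q-1}$ block-encoding exponentials and the outer order-$q$ $\trotter$ invokes $\le 4\cdot 5^{q-1}$ of these blocks, for $\le 4\cdot 5^{q-1}\cdot 8\cdot 6^{q-1}=32\cdot 30^{q-1}=\tfrac{32}{30}\cdot 30^{q}\le 1.07\cdot 30^{q}$ exponentials, the Clifford conjugations being uncounted single-qubit gates.

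The main obstacle is the bookkeeping of the \emph{nested} approximations: making sure the order-$q$ error of the inner $\bch$ calls is not amplified when those blocks are inserted (at rescaled, sign-flipped times) into the outer $\trotter$ formula --- this is exactly where unitary invariance and the telescoping product bound carry the argument --- and tracking the $\tau=\sqrt{t/2}$ substitution consistently so the exponent genuinely collapses from $\min(p_A,p_B)$ to $\min(p_A,p_B)/2$. A secondary nuisance is that the three error families carry mixed powers of $t$ (some $\sqrt{t}$, some $t$), so the single clean exponent $\min(p_A,p_B)/2$ and the constant $C_{TOTAL}$ only emerge after restricting to the small-$t$ regime in which the $\mathcal{O}(\cdot)$ is taken.
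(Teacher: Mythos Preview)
Your proposal is correct and follows essentially the same route as the paper. The only organizational difference is that the paper first packages the ``intrinsic-formula error plus linear accumulation of implementation error'' argument into two standalone lemmas (\cref{lem:bch-with-approx} and \cref{lem:trotter-with-approx}) and then invokes them in sequence, whereas you unroll the same three error sources (input error via the telescoping product bound, intrinsic $\bch$ error, intrinsic $\trotter$ error) directly inside the main proof; the arithmetic $2q+1\ge\min(p_A,p_B)$, the $\tau=\sqrt{t/2}$ reparametrization, and the gate count $4\cdot5^{q-1}\cdot8\cdot6^{q-1}\le1.07\cdot30^{q}$ are identical in both.
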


\begin{restatable}{theorem}{algmult}\label{lem:multiplication-alg}
    Suppose we have approximate block encodings $\widetilde{\mathcal{B}}_A, \widetilde{\mathcal{B}}_B$  with the error
    \begin{align}
        \norm{\exp it \widetilde{\mathcal{B}}_A - \exp it \begin{bmatrix}
            0 & A \\
            A^\dagger & 0
        \end{bmatrix}} &\in \mathcal{O}( (ct)^{p_A}), \\
        \norm{\exp it \widetilde{\mathcal{B}}_B - \exp it \begin{bmatrix}
            0 & B \\
            B^\dagger & 0
        \end{bmatrix}} &\in \mathcal{O} ((ct)^{p_B}),
    \end{align}
    for constant $c$ and $p_A, p_B \geq 1$ where $AB = (AB)^\dagger$.
    Then, \cref{alg:mult} has the error
    \begin{align}
        \norm{ {\rm{MULT}}( it \widetilde{\mathcal{B}}_A,   it \widetilde{\mathcal{B}}_B) - \exp it \begin{bmatrix}
            AB & 0 \\
            0 & \frac{1}{2}( - BA - (BA)^\dagger)
        \end{bmatrix}} \in \mathcal{O}\left((C^2 t)^{\min (p_A, p_B) / 2}\right),
    \end{align}
    with $C = \max(\norm{A}, \norm{B}, c)$, using no more than $8 \cdot 6^{q - 1}$ exponentials where $q = \max ( \ceil{\frac{\min(p_A, p_B) - 1}{2}}, 1) $.
\end{restatable}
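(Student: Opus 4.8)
The plan is to decompose the error of \cref{alg:mult} into two independent pieces — the intrinsic truncation error of the order-$q$ BCH formula, and the error introduced by substituting the approximate block encodings $\widetilde{\mathcal{B}}_A,\widetilde{\mathcal{B}}_B$ for exact ones — and then combine them via the triangle inequality. First I would record the exact algebraic fact underlying the algorithm: with $\tau=\sqrt{t/2}$, the nested commutator that the BCH formula approximates is precisely the one computed just before \cref{alg:mult}, namely $[\,i\tau N_1,\, i\tau N_2\,]=i\tau^2\begin{bmatrix}2AB & 0\\ 0 & -BA-(BA)^\dagger\end{bmatrix}$, where $N_1,N_2$ denote the block-encoding generators conjugated by the fixed single-qubit Clifford gates $S$ and $X$, and where the hypothesis $AB=(AB)^\dagger$ has been used to collapse $AB+(AB)^\dagger$ to $2AB$. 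Since $\tau^2=t/2$, exponentiating this commutator gives exactly $\exp it\begin{bmatrix}AB & 0\\0 & \tfrac12(-BA-(BA)^\dagger)\end{bmatrix}$, the target operator in the statement; this is also where the factor $\tfrac12$ in the lower-right block originates.

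Next I would apply \cref{thm:BCH} with order parameter $p=q$ to bound the truncation error of $\textrm{BCH}_q$. The two operators fed into the formula are $i\tau N_1$ and $i\tau N_2$, and since the operator norm of an off-diagonal block matrix $\begin{bmatrix}0 & M\\ M^\dagger & 0\end{bmatrix}$ equals $\norm{M}$ (finite here because the modes are truncated at $\Lambda$), we get $\max(\norm{N_1},\norm{N_2})=\max(\norm{A},\norm{B})$. Hence \cref{thm:BCH} yields truncation error $\mathcal{O}\big((\max(\norm{A},\norm{B})\,\tau)^{2q+1}\big)=\mathcal{O}\big((\max(\norm{A},\norm{B})^2\, t/2)^{(2q+1)/2}\big)$. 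The choice $q=\max(\ceil{(\min(p_A,p_B)-1)/2},1)$ guarantees $2q+1\ge\min(p_A,p_B)$, so in the asymptotic regime $C^2t\le 1$ this contribution lies inside $\mathcal{O}\big((C^2t)^{\min(p_A,p_B)/2}\big)$ with $C=\max(\norm{A},\norm{B},c)$.

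For the substitution error, I would use that $\textrm{BCH}_q$ is, by \cref{thm:BCH}, a word of at most $8\cdot 6^{q-1}$ exponentials, each of which is one of $\widetilde{\mathcal{B}}_A$ or $\widetilde{\mathcal{B}}_B$ evaluated at some time $\pm s\tau$ with $|s|=\mathcal{O}(1)$ and conjugated by a fixed gate ($S$ or $X$); conjugation by a fixed unitary is exact and preserves operator-norm distance, and so does taking inverses. Replacing each approximate factor by its exact counterpart one at a time and telescoping, the accumulated perturbation is at most (number of factors) $\times\max\big(\mathcal{O}((c\tau)^{p_A}),\mathcal{O}((c\tau)^{p_B})\big)=\mathcal{O}\big(6^q\,(c\tau)^{\min(p_A,p_B)}\big)=\mathcal{O}\big((c^2t/2)^{\min(p_A,p_B)/2}\big)$, which again sits inside $\mathcal{O}\big((C^2t)^{\min(p_A,p_B)/2}\big)$ since $6^q$ depends only on the fixed orders. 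Adding the two bounds gives the claimed error, and the gate count $8\cdot 6^{q-1}$ is immediate from \cref{thm:BCH} because the algorithm is a single invocation of $\textrm{BCH}_q$.

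The main obstacle I expect is the constant bookkeeping: one must check that a single $C=\max(\norm{A},\norm{B},c)$ dominates \emph{both} the BCH truncation constant $\max(\norm{A},\norm{B})$ and the encoding-error constant $c$, and — more delicately — that an $\mathcal{O}$-term with exponent $(2q+1)/2$ is actually contained in an $\mathcal{O}$-term with the smaller exponent $\min(p_A,p_B)/2$, which holds only once the relevant base quantity is at most $1$. A secondary technical point is to make precise that the higher-order BCH recursion only ever calls the supplied block encodings at times $\pm s\tau$ with $|s|=\mathcal{O}(1)$, so that the hypothesised scaling $\mathcal{O}((ct)^{p})$ transfers to each internal call up to a harmless constant; this is what makes the per-factor error uniformly $\mathcal{O}((c\tau)^{\min(p_A,p_B)})$.
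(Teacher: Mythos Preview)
Your proposal is correct and follows essentially the same approach as the paper. The paper packages your triangle-inequality split (intrinsic BCH truncation error plus telescoped substitution error) into a standalone lemma, \cref{lem:bch-with-approx}, and then proves \cref{lem:multiplication-alg} by a one-line invocation of that lemma followed by the substitution $\tau=\sqrt{t/2}$; you have simply unpacked that lemma's content inline, together with the commutator identity and the gate count from \cref{thm:BCH}.
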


While the asymptotic error analysis suggests that the cost of this method is onerous, we note that the product formulas often have overly pessimistic error scaling and operation counts \cite{zhao2021hamiltonian}. In the applications below, we provide numerical simulations which suggest our technique is more readily implementable than theory suggests.

\section{Applications}\label{sec:applications}
In this section, we show how our technique is a powerful tool for analytically realizing desired operations. This technique succeeds both for Hamiltonian simulation problems and general control problems. In particular, we show how the aforementioned physical intuition for a desired transformation is often sufficient to produce an approach to create desired operations.

We include applications in both phase and Fock space:
\begin{enumerate}
    \item \textbf{Phase-space} techniques are demonstrated to be useful in the case where displacements ($e^{(\alpha a^\dag + \alpha^* a)} = e^{i \alpha \hat{x}}$ for $\alpha$ real or $e^{\alpha \hat{p}}$ for $\alpha$ imaginary) are the only experimentally available gates. We 
    produce the controlled parity operator
    $e^{it \sigma^z a^{\dagger}a}$ (\cref{apps-nondestructive-meas}); the beamsplitter $e^{-it \sigma^z (a^{\dagger}b + ab^{\dagger})}$ (\cref{subsec:Conditional-(Controlled-Phase)-Beam-Splitter}); gates for two encodings of universal control of the restricted $\text{span}\{\left|0\right>,\left|1\right>\}$ Hilbert space (\cref{subsec:Universal-Control}); and gates for simulation of Fermi-Hubbard lattice dynamics using the two lowest Fock states of the cavity (\cref{sec:Fermi-Hubbard-Lattice-Dynamics}), including same-site, hopping, controlled-beamsplitter (\cref{subsec:controlled-phase}), and FSWAP gates.
    \item \textbf{Fock space} techniques are shown to be useful assuming compilation to $\mathcal{S}_1$ (\cref{defn:SX}) 
    and single-qubit operations to produce polynomials of annihilation and creation operators, (namely $a^p {a^\dagger}^q$ for integer $p, q$). We demonstrate how polynomials of these operators can be used in Hamiltonian simulation (e.g. with $\chi^{(3)}$ nonlinear materials, \cref{apndx:error-jc}) and state preparation (\cref{subsec:state-prep}).
\end{enumerate}

\subsection{Nonlinear Hamiltonian simulation}
As a simple application, let us consider the case of simulating a $\chi^{(3)}$ nonlinear material. These interactions commonly occur in nonlinear optics and appear when the index of refraction for a material varies linearly with the intensity of the electromagnetic field.  Such interactions can be modeled for a single \qumode~using the expression
\begin{equation}\label{eq:desired-ham}
    H = \omega a^\dagger a + \frac{\kappa}{2}(a^\dagger)^2 a^2.
\end{equation}
Our goal here is to examine the cost of a simulation of such a Hamiltonian in our model for time $t$ and error tolerance $\epsilon$ and to determine the parameter regimes within which a hybrid simulation using our techniques could provide an advantage with respect to a conventional qubit-based simulation of the Hamiltonian. 

Each of the terms can be approximated using formulas from \cref{tab:all-formulas}. The $\omega a^\dagger a$ term requires an embedding of Hermitian $a^\dagger a$, so that
\begin{align}
    \textrm{BCH} \left(S \cdot  i \tau_1 \mathcal{B}_{a^\dagger} \cdot S^\dagger, X \cdot  i \tau_1 \mathcal{B}_{a} \cdot X \right) = 
    2 i \tau_1^2 \begin{bmatrix}
         a^\dagger   a  & 0 \\
        0 & - a a^\dagger  
    \end{bmatrix}.
\end{align}
The second order term is treated in the same way, noting that $\mathcal{B}_{(a^\dagger)^2}$ can be produced via \cref{tab:all-formulas}, so that
\begin{align}
    \textrm{BCH} \left(S \cdot  i \tau_1 \mathcal{B}_{(a^\dagger)^2} \cdot S^\dagger, X \cdot  i \tau_1 \mathcal{B}_{a^2} \cdot X \right) = 
    2 i \tau_2^2 \begin{bmatrix}
        (a^\dagger)^2 a^2 & 0 \\
        0 & - a^2 (a^\dagger)^2
    \end{bmatrix}.
\end{align}
Thus, via the BCH formula, we can block-encode the two Hamiltonian terms. Trotterizing allows us to block-encode the entire Hamiltonian into the upper-left quadrant. Thus, by setting the qubit to $\ket{0}$, we can approximate the Hamiltonian. The error scaling is as follows and is proven in \cref{apndx:error-jc}:

\begin{restatable*}[Generating non-linear Hamiltonians]{theorem}{resultjc}\label{app:jaynes-cummings}
Let $H$ be the following non-linear Hamiltonian:
 \begin{align}
     H = \omega a^\dagger a + \frac{\kappa}{2}(a^\dagger)^2 a^2,
 \end{align}
(i.e. a Hamiltonian with a Kerr non-linearity). Let $t $ be the evolution time and $\epsilon$ be the target error tolerance.
For any positive integer $q$ we can approximate an exponential of the block-encoded Hamiltonian with error at most $\epsilon$ in the operator norm using $r e^{\mathcal{O}(q)}$ $\mathcal{S}_1$ operations where $r \in \Omega\left( \frac{(\Lambda^{4} t)^{1 + 1 / (q - \frac{3}{4})} }{\epsilon^{1 / (q - \frac{3}{4})}} \right)$. 
\end{restatable*}

This shows that we can perform a simulation of the dynamics within error $\epsilon$ using a number of operations within our instruction set that scales near-linearly with the evolution time and subpolynomially with $\epsilon$.  Further, this approach requires no ancillary memory and can be done with a single \qumode~and a qubit. In contrast, a qubit-only device would require a polylogarithmic number of qubits in $\Lambda$.

It is worth noting that in this case the ancillary qubit is not being used directly in the model.  Instead it is being used to control the dynamics and generate the appropriate nonlinear interaction between the photons present in the model.

\subsection{Nondestructive measurement of the qumode}\label{apps-nondestructive-meas}

We now demonstrate how the approach can extend beyond problems in Hamiltonian simulation. We begin with an example of the technique for control: In particular, we seek to perform a nondestructive measurement of the qumode in which we project the information into the qubit~\cite{fockreadout_wang_2020, Fockshotresolved_curtis_2021}.

To construct such a nondestructive measurement, we seek to implement  $e^{i t \hat{n} \sigma^z}$ where $\hat{n} = a^\dagger a$ is the number operator. If we could implement this gate for arbitrary $t$, we could perform phase estimation on the qubit to nondestructively project the qumode into a fixed number of bosons. This could be done by setting $t$ sufficiently small so that $t \Lambda \leq 2 \pi$ is calculable with phase estimation. Alternatively, for $t = \pi$, this operation checks the parity of the qumode and applies an RZ gate for odd parities. We employ the instruction set in the phase-space representation to synthesize the infinitesimal conditional rotation gate
\begin{equation}
U_{\text{rot},k}=e^{i\lambda^{2}\hat{n}\sigma^{k}}
\end{equation}
for $k=x,y,z$. We rewrite $\hat{n}$ in terms of the phase-space operators by recognizing:
\begin{align}
\hat{n} & =\hat{a}^{\dagger}\hat{a}\\
 & =\hat{x}^{2}+\hat{p}^{2}-\frac{1}{2}\label{eq:NumbertoPhaseSpace}.
\end{align}
Applying Eqs.~\ref{eq:AnnihilationOperatortoPhaseSpace},~\ref{eq:CreationOperatortoPhaseSpace}, and~\ref{eq:NumbertoPhaseSpace}
yields
\begin{equation}
i\lambda^{2}\hat{n}\sigma^{k}=i\left(\hat{x}^{2}+\hat{p}^{2}-\frac{1}{2}\right)\sigma^{k}\lambda^{2},
\end{equation}
such that the gate is expressed via the Trotter decomposition as the product of $\exp\left(\left[A_{1},B_{1}\right]\right)=\exp\left(i\lambda^{2}\hat{x}^{2}\sigma^{k}\right)$,
$\exp\left(\left[A_{2},B_{2}\right]\right)=\exp\left(i\lambda^{2}\hat{p}^{2}\sigma^{k}\right)$,
and conditional displacement $\exp\left(-i\lambda^{2}\sigma^{k}/2\right)$.
Given the Pauli commutator relation, the
first commutator is 
\begin{align}
\left[A_{1},B_{1}\right] & =i\hat{x}^{2}\sigma^{k}\\
 & =i\hat{x}^{2}\left(-\frac{i}{2}\left[\sigma^{i},\sigma^{j}\right]\right)\\
 & =\left[\frac{1}{\sqrt{2}}\hat{x}\sigma^{i},\frac{1}{\sqrt{2}}\hat{x}\sigma^{j}\right],
\end{align}
and the second commutator is 
\begin{align}
\left[A_{2},B_{2}\right] & =i\hat{p}^{2}\sigma^{k}\\
 & =i\hat{p}^{2}\left(-\frac{i}{2}\left[\sigma^{i},\sigma^{j}\right]\right)\\
 & =\left[\frac{1}{\sqrt{2}}\hat{p}\sigma^{i},\frac{1}{\sqrt{2}}\hat{p}\sigma^{j}\right],
\end{align}
such that both terms are amenable to BCH decomposition,
and the infinitesimal conditional rotation is composed with a gate-depth lower bound of nine. To perform an error analysis, we may directly apply the error scaling of BCH and Trotter to find:

\begin{restatable*}{theorem}{resultmeasurement}
Suppose we can implement $\text{e}^{ it \hat{x} \sigma^i}, \text{e}^{ it \hat{p} \sigma^i}$ without error. Then, we may approximate $\exp it \mathcal{B}_{\hat{x}^2 + \hat{p}^2}$ with arbitrary error scaling $p$ as
\begin{align}
    \norm{ \exp it \widetilde{\mathcal{B}}_{\hat{x}^2 + \hat{p}^2} - \exp it \mathcal{B}_{\hat{x}^2 + \hat{p}^2} } \in \mathcal{O}( (C t)^{p + 1/ 2}),
\end{align}
where $C = \max( \norm{ \hat{x}^2 + \hat{p}^2 }, \norm{\hat{x}}^2, \norm{\hat{p}}^2)$ and using no more than $4 \cdot 5^{ \frac{p}{2} - \frac{1}{4} }$ exponentials.
\end{restatable*}

We then provide numerics in \cref{fig:ConditionalPhase}. As expected, the wavefunction
initialized in the second excited state of the cavity and the ground
state of the associated qubit has an autocorrelation function that
oscillates with phase $\exp(2it)$. Dynamics are well reproduced with
$2000$ time steps for a final time of $20$ with
cutoff $\Lambda=14$. Note the units are arbitrary in the absence
of definition of the cavity frequency $\omega$, with the only units
defined by setting the reduced Planck constant to unity $\hbar=1\text{ arb. units}$.
The close agreement between the BCH-synthesized and exact gates is
supported by the error scaling after a single gate application computed for time step $t$, which features a power law scaling in agreement with the predicted error scaling
for both BCH and Trotter decompositions.\begin{figure}[!ht]
\begin{centering}
\includegraphics[width=0.5\columnwidth]{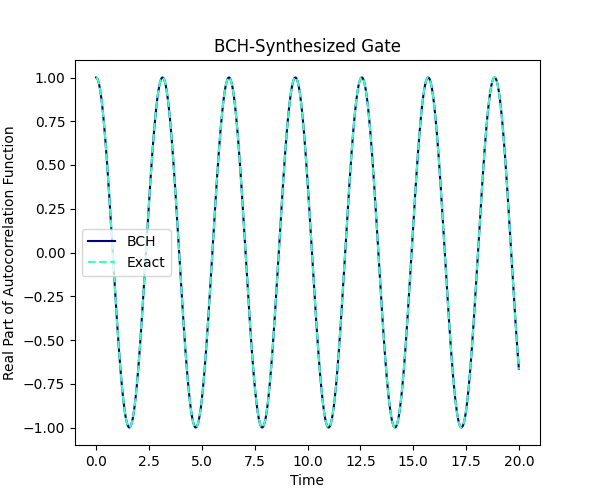}\includegraphics[width=0.5\columnwidth]{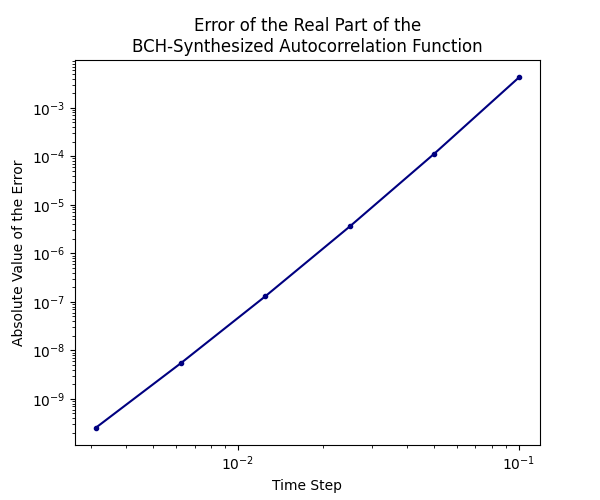}
\par\end{centering}
\caption{The following plots characterize the performance of using phase-space operators to synthesize $U_{\text{rot},k} = e^{i \hat{n} \sigma^z t}$. (a) The BCH-synthesized conditional rotation gate 
successfully approximates the exact dynamics for a wavefunction initialized
in the ground state of the qubit and the second excited state of
the cavity.  (b) Error of the real part of the autocorrelation
function $\text{Re}\left(\left<\psi(0)\middle|\psi(t)\right>\right)$ for the BCH-synthesized gate after a single time step of length $t=0.01$.\label{fig:ConditionalPhase}}
\end{figure}

We can obtain a similar decomposition with Fock-space operators. Observe that the $\textrm{MULT}$ subroutine applied to the $\exp it \mathcal{B}_a$, $\exp it \mathcal{B}_{a^\dagger}$ using \cref{lem:multiplication-alg} can also yield 
\begin{align}
	\norm{\textrm{MULT}( it \mathcal{B}_a, it \mathcal{B}_{a^\dagger}) - \exp i t \begin{bmatrix}
		a^\dagger a & 0 \\
		0 & - a a^\dagger
	\end{bmatrix} } \in \mathcal{O}((C^2 t)^{p + 1 / 2}),
\end{align}
when $\mathcal{B}_{a}$'s implementation is error-free. 
Note that $a a^\dagger = a^\dagger a + \identity$,  so the block encoding that is actually applied is
\begin{align}
\exp i t\left(\begin{bmatrix}
	a^\dagger a & 0 \\
	0 & - a^\dagger a 
\end{bmatrix} + \begin{bmatrix}
	0 & 0 \\
	0 & - \identity
\end{bmatrix}\right) = \exp it ( \sigma^z \hat{n}  -  (\identity - \sigma^z) \identity_{\Lambda + 1} ).
\end{align}
Thus, our Fock-space methods would also achieve the same transformation, albeit requiring a phase and RZ correction.

\subsection{State preparation from the vacuum}
Consider the case where we seek to prepare Fock state $\ket{k}_b$ on the qumode. On qubit devices, preparing integer states is trivial, assuming the qubits represent logic in a binary fashion. However, hybrid boson-qubit devices natively implement exponentials of the phase-space or Fock-space operators. Thus, preparing $\ket{k}_b$ directly can often be challenging. Existing work \cite{law_arbitrary_1996} enables the preparation of states, but are less granular in their control of the \qq~state. We present a method that only swaps the $\ket{1} \ket{0} \leftrightarrow \ket{0} \ket{k}$ states for integer $k$, otherwise leaving the initial state intact. 

To begin, we aim to implement $(a^\dagger)^k$ on the vacuum. It is sufficient to approximate
\begin{align}
    \mathcal{T}_k(t) \coloneqq \exp i t\begin{bmatrix}
        0 & (a^\dagger)^k \\
        a^k & 0
    \end{bmatrix},
\end{align}
which we call the `unprotected' state preparation operator. 
Selecting appropriate $t$ yields precisely the desired behavior, which gives the following result:
\begin{restatable*}[Unprotected state preparation]{theorem}{statepreptime}\label{thm:statepreptime}
    For $k \leq \Lambda$, we can take $t = (2n + 1) \frac{\pi}{2 \sqrt{k!}}$ for any $n \in \mathbb{N}$ so that
    \begin{align*}    
    \mathcal{T}_k(t) \ket{1} \kron \ket{0} = \ket{0} \kron \ket{k}. 
    \end{align*}
\end{restatable*}

\newcommand{\diag}{{\rm{diag}}}
\newcommand{\stateprep}{\mathcal{P}_k}
\newcommand{\stateprepapprox}{\widetilde{\mathcal{P}}_{k, p}}

While $\mathcal{T}_k (t)$ performs the desired transformation, it may incur unwanted side effects if the starting state is of the form $\ket{1} \kron \ket{b}$ for $b > 0$. We can use our same approach to produce the following operation:

\begin{restatable*}[Protected state preparation]{theorem}{resultstateprep}\label{lem:fock-prep-unitary}
Consider the Fock preparation unitary $\stateprep$ with the form
\begin{align*}
    \exp \left( i t \begin{bmatrix}
        0 & ( a^\dagger )^k \ketbra{0}{0} \\
        \ketbra{0}{0} ( a )^k & 0
    \end{bmatrix} \right).
\end{align*}
When $t = (2n + 1) \frac{\pi}{4 \sqrt{k!}} $, we have that $\stateprep$ performs our desired state preparation
\begin{align*}
    \exp \left( i t \begin{bmatrix}
        0 & ( a^\dagger )^k \ketbra{0}{0} \\
        \ketbra{0}{0} ( a )^k & 0
    \end{bmatrix} \right) \ket{1} \kron \ket{b}  = \begin{cases}
    \ket{0} \kron  \ket{k} & b = 0 \\
    \ket{1} \kron  \ket{b} & b \neq 0
    \end{cases}.
\end{align*}
We claim that we can approximate this unitary with $\stateprepapprox$ where
\begin{align*}
    \norm{\stateprepapprox - \stateprep} \in \mathcal{O}((\Lambda^{k/2} t)^p),
\end{align*}
using no more than $4 \cdot 5^{q -1}$ $\widetilde{\mathcal{T}}_{k,p}$ subroutines.

\end{restatable*}
The proofs of \cref{thm:statepreptime}, \cref{lem:fock-prep-unitary} are provided in \cref{state_prep_proof}. Though this subroutine appears expensive, numerical results suggest it is far more implementable than theory would suggest. In the following simulations, we apply the above technique but always use a second-order symmetrized BCH formula and second-order (symmetrized) Trotter formula. This amounts to 480 exponentials for the unprotected case and 960 exponentials for the protected case. The synthesized gates are provided in \cref{fig:unprotectedt2} and \cref{fig:protectedt2}.  

\change{As can be seen in \cref{fig:both-errors}, the protected gate has higher compilation errors than the unprotected gate. This is likely due to the fact that the protected formula requires an additional application of the Trotter product formula. This will further incur commutator error over the unprotected gate.}

\begin{figure}[!ht]
    \centering
    \includegraphics[scale=0.6]{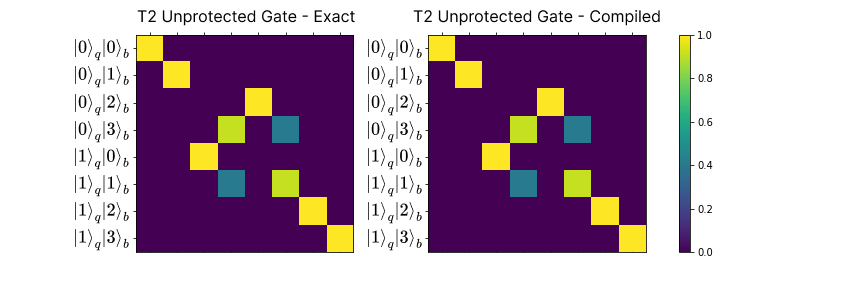}
    \caption{Unprotected $T_2$ Gate: plots visualize the unitary's elements. Brighter colors represent higher absolute values / magnitudes of the matrix entries. \change{\textit{Left}} is the exact form of the $T_2$ gate with selected $t$ and \change{\textit{right}} is the BCH-synthesized form. The state $\ket{j}_q \ket{k}_m$ has index $j \cdot \Lambda + k$ where $\Lambda$ is the cutoff. By observation, the analytically realized form is accurate.}
    \label{fig:unprotectedt2}
\end{figure}
\begin{figure}[!ht]
    \centering
    \includegraphics[scale=0.6]{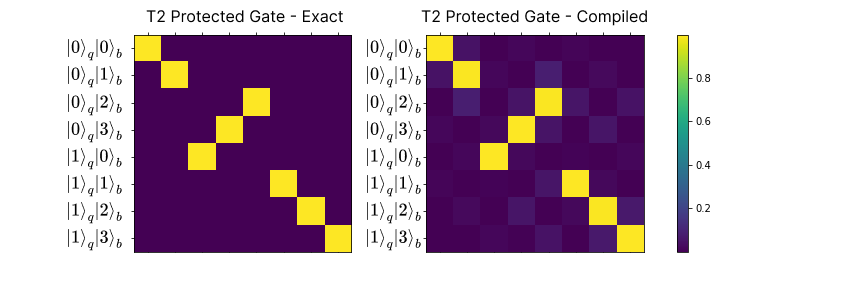}
    \caption{Protected $T_2$ Gate: \change{\textit{left}} is the exact form of the protected $T_2$ gate with selected $t$ and \change{\textit{right}} is the BCH-synthesized form.   The state $\ket{j}_q \ket{k}_m$ has index $j \cdot \Lambda + k$ where $\Lambda$ is the cutoff. By observation, the analytically realized form is still accurate, albeit with more incurred Trotter error. 
    }
    \label{fig:protectedt2}
\end{figure}

\begin{figure}
    \centering
    \includegraphics[width=0.8\linewidth]{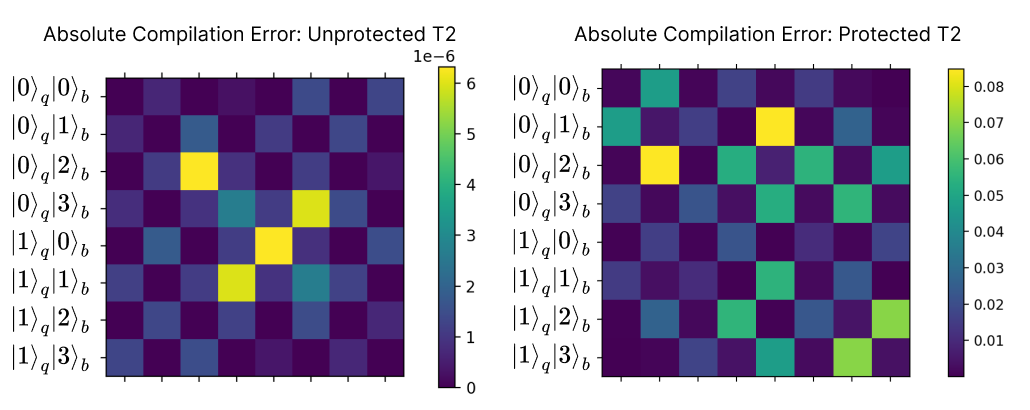}
    \caption{\change{Absolute element-wise error when compiling $T_2$ gate via scheme: \textit{left} is when the compilation target is the unprotected gate, while \textit{right} is when the target is the protected gate. The protected gate has higher synthesis errors.}}
    \label{fig:both-errors}
\end{figure}

We also analyze the error scaling as the order of the BCH formulas used increases. \cref{fig:protectedt2-error} describes the error-resource tradeoff as the Trotter step within each BCH formula increases. Observe that the error decays as higher order formulas are used or Trotter step size is reduced, as expected. While modest depths have relatively high infidelity, in theory our compilations can achieve arbitrary accuracy at the level of both entire gates and matrix individual elements. %
\begin{figure}[!ht]
    \centering
    \mbox{
    \subfigure[]{\includegraphics[scale=0.42]{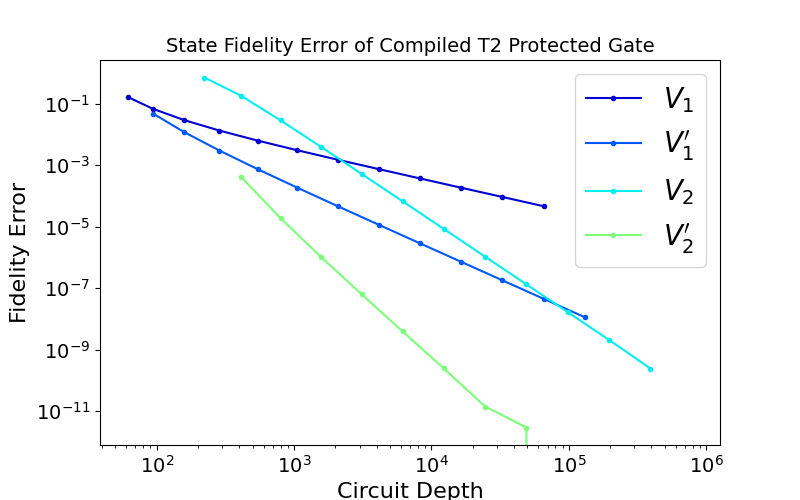}}
    \subfigure[]{\includegraphics[scale=0.42]{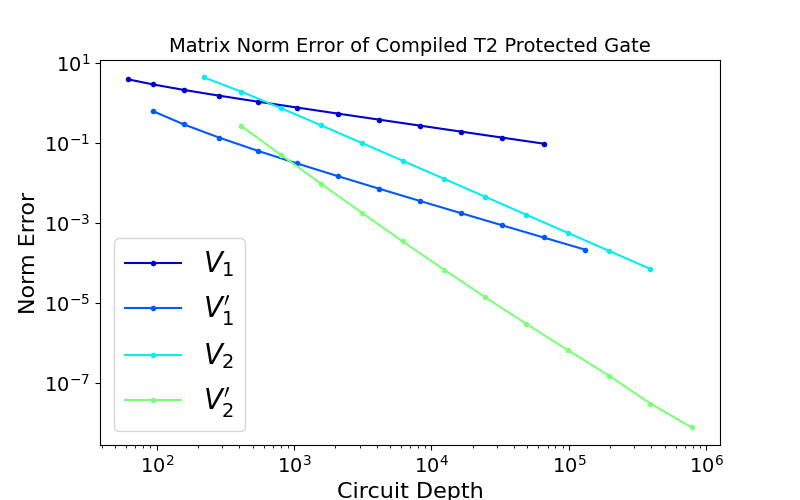}}
    }
    \caption{Error scaling of protected $T_2$ gate with respect to BCH circuit depth according to (a) the state-fidelity metric and (b) the trace norm. Prime denotes that the formula has been symmetrized. %
    }
    \label{fig:protectedt2-error}
\end{figure}

\subsection{Hong-Ou-Mandel effect/conditional (controlled-phase) beam splitter gate}\label{subsec:Conditional-(Controlled-Phase)-Beam-Splitter}
The operations we seek to realize need not act on a single \qumode; in fact, our techniques are extensible to hybrid setups with multiple \qumodes~or qubits. Consider the conditional (controlled-phase) beam splitter
\begin{align}
U_{\text{beam split}} & =e^{-i\lambda^{2}\left(\hat{a}_{1}^{\dagger}\hat{a}_{2}+\hat{a}_{1}\hat{a}_{2}^{\dagger}\right)\sigma^{z}}.
\end{align}
This gate naturally pertains
to certain lattice gauge theories \cite{C2QA_LGT} and gives rise to exponential SWAP
(eSWAP) \cite{gao2019entanglement} and controlled-SWAP (cSWAP)
gates for state purification and SWAP tests, when paired with an ordinary (uncontrolled)
beam splitter \cite{pietikainen2022controlled}. 
The argument in phase-space representation 
is 
\begin{align}
-i\lambda^{2}\left(\hat{a}_{1}^{\dagger}\hat{a}_{2}+\hat{a}_{1}\hat{a}_{2}^{\dagger}\right)\sigma^{z} & =-2i\lambda^{2}\left(\hat{x}_{1}\hat{x}_{2}+\hat{p}_{1}\hat{p}_{2}\right)\sigma^{z},
\end{align}
such that the gate is decomposed in terms of a Trotter expansion
as the product of two exponential
terms $\exp\left(\left[A_{1},B_{1}\right]\lambda^{2}\right)=\exp\left(-2i\lambda^{2}\hat{x}_{1}\hat{x}_{2}\sigma^{z}\right)$
and $\exp\left(\left[A_{2},B_{2}\right]\lambda^{2}\right)=\exp\left(-2i\lambda^{2}\hat{x}_{1}\hat{x}_{2}\sigma^{z}\right)$.
According to the Pauli commutation relation, 
the first commutator is
\begin{align}
\left[A_{1},B_{1}\right] & =-2i\hat{x}_{1}\hat{x}_{2}\sigma^{z}\\
 & =-2i\hat{x}_{1}\hat{x}_{2}\left(-\frac{i}{2}\left[\sigma^{x},\sigma^{y}\right]\right)\\
 & =\left[i\hat{x}_{1}\sigma^{x},i\hat{x}_{2}\sigma^{y}\right],
\end{align}
and the second is
\begin{align}
\left[A_{2},B_{2}\right] & =\left[i\hat{p}_{1}\sigma^{x},i\hat{p}_{2}\sigma^{y}\right],
\end{align}
with the following error scaling:

\begin{restatable*}{theorem}{resultbeamsplitter}
Assume we may implement $\text{e}^{ i t \hat{x}_m \sigma^j}, \text{e}^{ i t \hat{p}_m \sigma^j}$ for $m \in \{ 1, 2 \}$; i.e., we may implement the qubit-conditional position shifts and momentum boosts on either \qumode~without error. Then, we may approximate $\mathcal{B}_{\hat{x}_1 \hat{x}_2 + \hat{p}_1 \hat{p}_2}$ with arbitrary error scaling $p$ as
\begin{align*}
    \norm{\exp it \widetilde{\mathcal{B}}_{\hat{x}_1 \hat{x}_2 + \hat{p}_1 \hat{p}_2} - \exp it \mathcal{B}_{\hat{x}_1 \hat{x}_2 + \hat{p}_1 \hat{p}_2} } \in \mathcal{O}((Ct)^{p + \frac{1}{2}}),
\end{align*}
where $C = \max( \norm{ \hat{x}_1 \hat{x}_2 + \hat{p}_1 \hat{p}_2}, \norm{\hat{x}_1}^2, \norm{\hat{x}_2}^2, \norm{\hat{p}_1}^2, \norm{\hat{p}_1}^2)$ and using no more than $4 \cdot 5^{ \frac{p}{2} - \frac{1}{4} }$ exponentials.
\end{restatable*}

The two exponential terms are decomposed via the BCH formula 
for a lower-bound gate depth of eight. Results are shown in~\cref{fig:ConditionalBeamSplitter} for
$15$ states per cavity with a shared qubit over a final time of
$\pi/2$ with $200$ equal time
steps, where the system is initially in the first excited state of
each cavity and the ground state of the shared qubit $\left|11g\right>$.
As expected for the conditional beam splitter, the gate exhibits the
Hong-Ou-Mandel effect, in which the occupation of cavity 1 oscillates
between the first excited mode and a superposition of the ground and
the second excited states of the cavity. The BCH-synthesized results
closely agree with that of the original gate, with no visible leakage
beyond the physical states (the lowest three states of the cavity)
into the working space under the time duration studied. As for the
conditional rotation gate, the relative error of the BCH-synthesized
gate computed for a single time step of length $t$ was found to scale according to a power law with the time step,
in accordance with the analytic result for Trotterization and BCH
decomposition.

\begin{figure}[!ht]
\begin{centering}
\includegraphics[width=0.5\columnwidth]{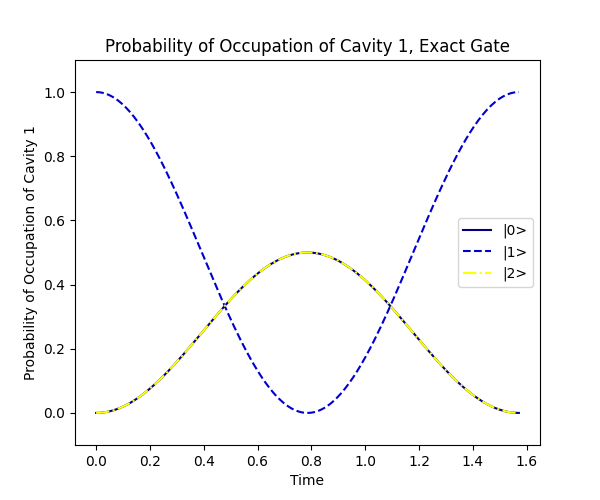}\includegraphics[width=0.5\columnwidth]{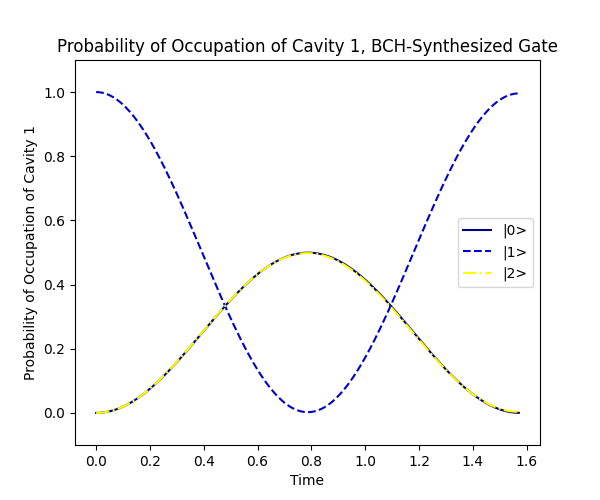}
\par\end{centering}
\begin{centering}
\includegraphics[width=0.5\columnwidth]{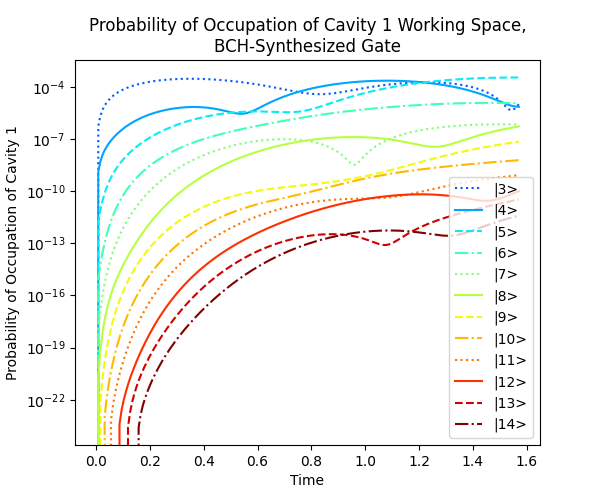}\includegraphics[width=0.5\columnwidth]{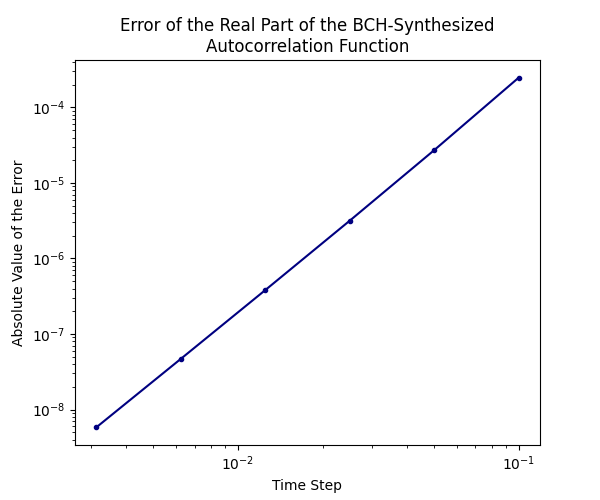}
\par\end{centering}
\caption{Hong-Ou-Mandel effect simulated with (a) exact and (b) BCH-synthesized
conditional beam splitters, illustrated as probability cavity 1 is found in states $\left|0\right>$, $\left|1\right>$, or $\left|2\right>$. \change{Observe that the probabilities of $\ket{0}, \ket{2}$ coincide}; (c) probability of leakage into higher cavity modes; and (d) error of the real part of the autocorrelation
after a single application of a BCH-synthesized gate for time step $t$ relative to the application of the exact gate for the same time step.\label{fig:ConditionalBeamSplitter}}%
\end{figure}

{

}

\section{Discussion}

Our main contribution in this paper is a systematic approach to synthesizing unitary dynamics on a hybrid quantum computer that has access to both qubit and \qumode~operations.  Such gate sets naturally model systems such as cavity quantum electrodynamics systems and ion trap-based quantum computers.  Our main innovation here is the development of high-order analytic formulas that can be used to place bounds on the complexity of implementing arbitrary unitary operations on such a hybrid device.  Specifically, we see that these methods are capable of achieving subpolynomial scaling with the inverse error tolerance ($1/\epsilon$) and allow us to implement arbitrary nonlinearities in the field operators in the generator of the unitary that we wish to implement at low cost asymptotically.  In particular, we focus on using a construct known as a block-encoded creation operator as our fundamental construct and show numerically highly accurate approximations to the exponential of a block encoding of the square of the creation operator.  Further, we study the Hong-Ou-Mandel effect and observe that the synthesized operations used in our construction can have negligible error with respect to our target precision.
Beyond compilation, our work thereby provides intuition when compiling a variety of gates. For example, our methods give mathematical understanding for the recent successes of echoed conditional  displacement (ECD) gate sets for quantum computing \cite{eickbusch2021fast}.

While this work enables better analytic control of qubit-qumode systems, there remain many open questions.  
In particular, lowering the resource cost of these compilations, especially in noisy environments, is key to practical implementations.
We note that for both the Hong-Ou-Mandel effect and the block encoding of $(a^\dagger)^2$ that thousands of gate operations are needed to achieve infidelities of $10^{-3}$ or smaller.  This makes such sequences impractical for near-term applications where the gate infidelities are on the order of $1\%$. Ideally, these analytic sequences can be used as an initial sequence which can be further optimized via existing optimal control techniques. This would reduce warmstart challenges and potentially improve convergence.

Another concern arises from the nature of the qubit system, namely that creating qubit-qumode interactions requires truncating the qubit system to a two-level system. For example, transmon qubits are typically truncated to two levels~\cite{wendin2017quantum,kjaergaard_superconducting_2020,blais2021circuit};  in reality, these transmons can experience leakage to higher states, meaning that the qubit may itself exhibit qumode-like properties when poorly calibrated.
While leakage is a practical concern, it can be somewhat avoided via well-defined pulse shaping methods (e.g., DRAG~\cite{theis_counteracting_nodate} pulses). These methods allow one to approach the control speed limit set by the anharmonicity of the level structure.

Several avenues of approach exist that could be used to improve upon these results: first, note that our technique is connected to ideas from quantum signal processing (QSP) \cite{martyn_grand_2021, gilyen_quantum_2019}. Thus, QSP could be applied to approximate polynomials of qumode operators; this could potentially improve the scaling with respect to the error tolerance from the subpolynomial scaling currently demonstrated to polylogarithmic scaling. Alternatively, our approach could seed gradient-descent optimization procedures for control such as GRAPE~\cite{khaneja2005optimal} at the pulse level or numerical optimization of parameterized gates at the SNAP~\cite{fosel2020efficient} or ECD \cite{eickbusch2021fast} instruction level.  These locally optimized sequences may then prove to be either better, or more understandable, than existing gradient-optimized pulse sequences for control of such systems. %
Designing pulses is, in practice, sensitive to device properties and noise environments,
and presents a rich area for further research.

\section*{Author Contributions}
\change{SMG and NW conceived of the instruction set architecture and applications. CK derived and obtained analytic error bounds, developed the state preparation subroutine, and mathematically formalized the scheme. MBS developed compilation strategies for practical implementation, investigated the efficacy of the approach for physical applications, and performed numerical simulations of gate compilation and quantum dynamics. EC advocated for clarified exposition across all applications. All authors contributed to the manuscript, with EC, CK, and MBS producing the initial and revised manuscripts.}

\section*{Acknowledgements}
This project was primarily supported by the U.S. Department of
Energy, Office of Science, National Quantum Information
Science Research Centers, Co-design Center for Quantum
Advantage under contract number DE-SC0012704, (Basic Energy Sciences, PNNL FWP 76274). C\textsuperscript{2}QA led this research. Christopher Kang was also funded in part by the STAQ project under award NSF Phy-232580; in part by the US Department of Energy Office of Advanced Scientific Computing Research, Accelerated 
Research for Quantum Computing Program; and in part by the NSF Quantum Leap Challenge Institute for Hybrid Quantum Architectures and Networks (NSF Award 2016136), in part based upon work supported by the U.S. Department of Energy, Office of Science, National Quantum 
Information Science Research Centers, and in part by the Army Research Office under Grant Number W911NF-23-1-0077. Micheline B.~Soley was supported by the Yale Quantum Institute Postdoctoral Fellowship and the Office of the Vice Chancellor for Research and Graduate Education at the University of Wisconsin-Madison with funding from the Wisconsin Alumni Research Foundation. Eleanor Crane was supported by UCL Faculty of Engineering Sciences and the Yale-UCL exchange scholarship from
RIGE (Research, Innovation and Global Engagement),
the Princeton MURI award SUB0000082, the DoE QSA, NSF
QLCI (award No.OMA-2120757), DoE ASCR Accelerated
Research in Quantum Computing program (award No.DESC0020312), and NSF PFCQC program. 

The views and conclusions contained in this document are those of the authors and should not be interpreted as representing the official policies, either expressed or implied, of the U.S. Government. The U.S. Government is authorized to reproduce and distribute reprints for Government purposes notwithstanding any copyright notation herein.

External interest disclosure: SMG is a consultant for and equity holder in Quantum Circuits, Inc.

\appendix

\newpage
\appendix
\section*{APPENDICES}
\section{Obtaining \texorpdfstring{$S_1$}{}}\label{obtaining_s1}

We demonstrate how to obtain the $\mathcal{S}_1$ operator with the native gates present in the dispersive coupling regime of the Jaynes-Cummings model. Astute readers may recognize the $\mathcal{S}_1$ operator as being a Jaynes-Cummings Hamiltonian when rewritten in the form
\begin{align}
    \ket{0} \bra{1} \otimes a^\dagger + \ket{1} \bra{0} \otimes a.
\end{align}
It is thus unsurprising that we may implement these types of operations on a circuit QED device that exhibits a similar native Hamiltonian.

The conditional  displacement operator with magnitude $\alpha$ is written:
\begin{equation}
    U_{d}(\alpha) = e^{i \sigma^z (\alpha a^\dagger + \alpha^* a) }.
\end{equation}
For a Fock state $\ket{n}$, $a^\dagger a \ket{n} = n \ket{n}$; therefore $e^{i a^\dagger a \theta} a^\dagger e^{-i a^\dagger a \theta}=e^{i\theta}a^\dagger$. Taking $\alpha=\alpha^*$, we have
\begin{equation}
    e^{i(\pi/2) a^\dagger a} \cdot e^{i\sigma^y(\alpha(a^\dagger + a))} \cdot e^{-i(\pi/2) a^\dagger a} = \exp\left(\alpha \begin{bmatrix}0 & (ia^\dagger - ia)\\ (-ia^\dagger + ia) &0 \end{bmatrix} \right).
\end{equation}
This operation can be built using single-qubit operations on a controlled displacement gate with additional phase delays on the oscillator.  Both are linear optical operations or single-qubit operations, which we expect to be inexpensive in our computational model.

Next, note that 
\begin{equation}
    e^{i(\alpha(a^\dagger + a))\otimes \sigma^x} = \exp\left(\alpha\begin{bmatrix} 0 & i(a^\dagger + a) \\ i(a^\dagger +a) &0 \end{bmatrix} \right).
\end{equation}
Thus to $O(\alpha^2)$ the block-encoded creation operation can be constructed using single-qubit, controlled-displacement, and linear-optical operations through

\begin{equation}
    \mathcal{S}_1 \approx e^{i(\pi/2) a^\dagger a}e^{i(\alpha(a^\dagger + a))\otimes \sigma^y} e^{-i(\pi/2) a^\dagger a}e^{i(\alpha(a^\dagger + a))\otimes \sigma^x}.
\end{equation}
\cref{fig:obtaining-s1} describes the circuit.

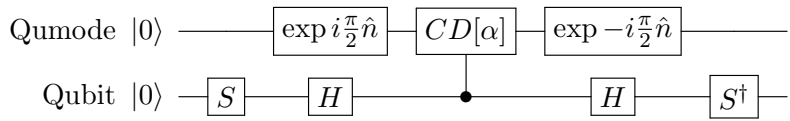
\begin{figure}[h]
    \centering
    \[
    \Qcircuit @C=1em @R=.7em {
    \lstick{\textrm{Qumode } \ket{0}} & \qw & \gate{\exp i \frac{\pi}{2} \hat{n}} & \gate{CD[\alpha]} & \gate{\exp - i \frac{\pi}{2} \hat{n}} & \qw & \qw\\
    \lstick{\textrm{Qubit } \ket{0}} & \gate{S} & \gate{H} & \ctrl{-1} & \gate{H} & \gate{S^\dagger} & \qw
    }
    \]
    \caption{Circuit to approximate $\mathcal{S}_1$ with conditional displacement gates on the qumode-qubit device.}
    \label{fig:obtaining-s1}
\end{figure}

\section{Error analysis}\label{apndx:error-analysis}
To assess the error scaling of our approach, %
we must consider three sources of error: the underlying implementation error from using approximations of $\mathcal{B}_A(t)$, the error from BCH, and the error from Trotter. In \cref{apndx:implementation-error-bch-trotter}, we show that the BCH and Trotter formulas can still be applied on exponentials that have error. Then, we use these formulas in \cref{apndx:general-addition-error} to produce the error bounds for addition. Finally, in \cref{apndx:multiplication}, we produce the error bounds for multiplication.

\subsection{Product formulas with implementation error}\label{apndx:implementation-error-bch-trotter}
We begin by formally stating the Trotter and BCH formulas when there is no implementation error:
\begin{theorem}[BCH Product Formula (Theorem 2 from \cite{Childs_2013})]\label{fact:bch}
Let $A$ and $B$ be bounded complex-valued matrices and assume without loss of generality $t \in \mathbb{R}^+$ is assumed for the purposes of asymptotic analysis to be in $o(1)$.  We then define
\begin{align}
    \bch_{1, k} (At, Bt^k) \coloneqq e^{At} e^{Bt^k} e^{-At} e^{-Bt^k}.
\end{align}
We further define $\bch_{p,k}$ recursively for $p \geq 2$ and odd $ k \geq 1$:
\begin{align}
    \bch_{p + 1, k}(At, Bt^k) &\coloneqq \bch_{p, k} (A \gamma_p t, B (\gamma_p t)^k ) \bch_{p, k} (-A\gamma_p t, - B (\gamma_p t)^k) \\
    &\times \bch_{p, k} (A \beta_p t, B (\beta_p t)^k )^{-1} \bch_{p, k} (-A \beta_p t, - B (\beta_p t)^k )^{-1} \\
    &\times \bch_{p, k} (A \gamma_p t, B (\gamma_p t)^k ) \bch_{p, k} (-A \gamma_p t, - B (\gamma_p t)^k ),
\end{align}
with the constants
\begin{align}
    \beta_p \coloneqq (2r_p )^{1/(k + 1)}, \gamma_p \coloneqq (1/4 + r_p )^{1 / (k + 1)}, r_p \coloneqq \frac{2^{\frac{(k + 1)}{2p + k + 1}}}{4 \left(2 - 2^{\frac{k + 1}{2p + k + 1}} \right)}.
\end{align}
This recursive formula has the error scaling, where $\gamma = \max(\norm{A}, \norm{B}^{1/k})$~\cite{Childs_2013}
\begin{align}
    \bch_{p, k}(At, Bt^k) = e^{[A, B]t^{k + 1} + \mathcal{O}((\gamma t)^{2p + k})}
\end{align}
and uses $8 \cdot 6^{p - 1}$ exponentials when $k = 1$ and $4 \cdot 6^{p-1}$ exponentials otherwise. 
\end{theorem}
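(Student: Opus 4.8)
This statement is Theorem~2 of \cite{Childs_2013}, so the plan is to recall and sketch their argument: an induction on $p$ mirroring the Suzuki--Trotter order-doubling construction. For the base case $p=1$ I would expand $\log\bch_{1,k}(At,Bt^k)=\log\!\left(e^{At}e^{Bt^k}e^{-At}e^{-Bt^k}\right)$ via the Baker--Campbell--Hausdorff/Zassenhaus series: writing $e^{At}e^{Bt^k}=e^{Z_1}$, $e^{-At}e^{-Bt^k}=e^{Z_2}$ and $e^{Z_1}e^{Z_2}=e^{Z_1+Z_2+\frac12[Z_1,Z_2]+\cdots}$, the degree-$1$ contributions cancel, the degree-$3$ parts of $Z_1+Z_2$ cancel, and the surviving degree-$\le 2$ term is exactly $[A,B]t^{k+1}$. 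Boundedness of $A,B$ together with $t=o(1)$ places $t$ inside the convergence radius of BCH, so the remainder is an honest operator-norm bound $\mathcal{O}((\gamma t)^{k+2})$ with $\gamma=\max(\|A\|,\|B\|^{1/k})$, i.e.\ the claimed $\mathcal{O}((\gamma t)^{2p+k})$ at $p=1$; this costs $4$ exponentials when $k\ge 2$, and a minor (anti)symmetrization of the base formula is needed in the degenerate case $k=1$, raising the count to $8$.

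The engine of the induction is a parity observation. Under the scaling $A\mapsto A\lambda$, $B\mapsto B\lambda^k$ with $k$ \emph{odd}, an iterated commutator built from $a$ copies of $A$ and $b$ of $B$ scales as $\lambda^{a+bk}$, and $a+bk\equiv a+b\pmod 2$; hence in $\log\bch_{p,k}(A\lambda,B\lambda^k)-[A,B]\lambda^{k+1}$ the odd-degree commutators produce only odd powers of $\lambda$ and the even-degree ones only even powers, with the leading surviving term at the odd power $\lambda^{2p+k}$. Since $k$ is odd, $\bch_{p,k}(-A\lambda,-B\lambda^k)=\bch_{p,k}\!\left(A(-\lambda),B(-\lambda)^k\right)$, so the ``doubled'' object $P_p(\lambda):=\bch_{p,k}(A\lambda,B\lambda^k)\,\bch_{p,k}(-A\lambda,-B\lambda^k)$ has its leading, odd-power error cancel: it approximates $e^{2[A,B]\lambda^{k+1}}$ with error beginning only at the even power $\lambda^{2p+k+1}$.

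The six-factor recursion defining $\bch_{p+1,k}$ is then exactly a Suzuki step applied to $P_p$, namely $P_p(\gamma_pt)\,\big(P_p(-\beta_pt)\big)^{-1}\,P_p(\gamma_pt)$; after merging the three exponentials with one more use of BCH (the cross-commutators generated are of strictly higher order, again by the parity bookkeeping), one imposes two scalar conditions on the coefficients: time-matching $2\gamma_p^{\,k+1}-\beta_p^{\,k+1}=\tfrac12$, so that the $[A,B]$ coefficient equals $1$, and cancellation of the leading error, $2\gamma_p^{\,2p+k+1}-\beta_p^{\,2p+k+1}=0$. Under the substitution $\gamma_p^{\,k+1}=\tfrac14+r_p$, $\beta_p^{\,k+1}=2r_p$ the first condition holds identically and the second becomes $(\beta_p/\gamma_p)^{2p+k+1}=2$, which solves to $r_p=2^{(k+1)/(2p+k+1)}\big/\big(4\,(2-2^{(k+1)/(2p+k+1)})\big)$ --- precisely the stated constants. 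The error then drops from $\mathcal{O}((\gamma t)^{2p+k})$ to $\mathcal{O}((\gamma t)^{2(p+1)+k})$, closing the induction, and the exponential count multiplies by $6$ per level ($3$ copies of $P_p$, each built from $2$ copies of $\bch_{p,k}$), giving $8\cdot 6^{p-1}$ for $k=1$ and $4\cdot 6^{p-1}$ otherwise.

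The main obstacle is not the scalar bookkeeping but making the operator remainders rigorous: one must carry the BCH/Zassenhaus remainder with constants uniform in $p$ so that each ``$\mathcal{O}((\gamma t)^m)$'' is a genuine operator-norm bound rather than a truncated formal series --- this needs the analyticity/convergence radius of BCH together with uniform control of the accumulated rescalings $\prod_j|\gamma_j|,\prod_j|\beta_j|$, which stay bounded because $r_p$ tends to a finite limit --- and one must verify that merging the three exponentials in the Suzuki step, combined with the $\lambda^{2p+k+1}$ error already present in $P_p$, reinjects nothing below order $2(p+1)+k$, which is exactly the place where the even/odd parity split is indispensable. The degenerate case $k=1$, where $A$ and $B$ carry the same power of $t$, requires the separate (routine) symmetrization of the base formula noted above.
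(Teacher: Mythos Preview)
The paper does not provide its own proof of this statement; it is quoted verbatim as Theorem~2 of \cite{Childs_2013} and used as a black box throughout (every subsequent appeal is simply a citation to \cref{fact:bch}). So there is nothing in the present paper to compare your argument against.

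That said, your sketch correctly reconstructs the Childs--Wiebe argument: the group-commutator base case, the odd/even parity split in the BCH series when $k$ is odd, the doubling $P_p(\lambda)=\bch_{p,k}(\lambda)\,\bch_{p,k}(-\lambda)$ to kill the leading odd-power error, and the Suzuki-style three-term recursion $P_p(\gamma_pt)\,P_p(-\beta_pt)^{-1}\,P_p(\gamma_pt)$ with the two scalar conditions. Your verification of the constants is right: time-matching $4\gamma_p^{k+1}-2\beta_p^{k+1}=1$ holds identically under $\gamma_p^{k+1}=\tfrac14+r_p$, $\beta_p^{k+1}=2r_p$, and the error-cancellation condition $(\beta_p/\gamma_p)^{2p+k+1}=2$ solves to exactly the stated $r_p$. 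The factor-of-$6$ growth in the exponential count per level is correct, and you correctly flag that the $k=1$ base case needs a separate symmetrization to account for the $8$ versus $4$.
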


\begin{theorem}[Trotter Formula (Lemma 1 from \cite{berry2007efficient})]\label{fact:trotter}
Let $\{H_j: j = 1\ldots m \}$ be a set of $M$ bounded Hermitian operators acting on a Hilbert space of dimension $2^n$ and assume without loss of generality that $t\ge 0$. For $H= \sum H_j$, the error in the Trotter-Suzuki formulas of order $k$ and timestep $r$ obeys the error bound
\begin{align}
    \norm{\exp\left( -it \sum_{j = 1}^m H_j \right) - \trotter_{ 2k}(\{ H_j \}, t / r)^r  } \leq 5 (2 \times 5^{k - 1} m \tau)^{2k + 1} / r^{2k},
\end{align}
where $\tau = \norm{H} t$ and
\begin{align}\label{req:trotter-constraints}
    4 m 5^{k - 1} \tau / r &\leq 1 , \\
    (16/3) ( 2 \times 5^{k - 1} m \tau)^{2k + 1} / r^{2k } &\leq 1 ,
\end{align}
using no more than $2 m 5^{k - 1} r$ exponentials. We define the $k^\text{th}$ order Trotter formula as
\begin{align*}
    \trotter_{2k} (\lambda) := [\trotter_{2k - 2} (p_k \lambda)]^2 \trotter_{2k - 2} ((1 - 4p_k)\lambda) [\trotter_{2k - 2} (p_k \lambda)]^2,
\end{align*}
where $p_k = (4 - 4^{1/(2k - 1)})^{-1}$ and $k > 1$. The relation has the base case
\begin{align*}
    \trotter_2(\lambda) = \prod_{j = 1}^m e^{H_j \lambda / 2} \prod_{j' = m}^1 e^{H_{j'} \lambda / 2}.
\end{align*}
\end{theorem}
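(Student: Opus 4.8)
The statement is the classical Suzuki--Trotter error bound, so the plan is to reconstruct Suzuki's recursive fractal decomposition while carefully tracking the explicit constants, proceeding by induction on the order parameter $k$. For the base case $k=1$ I would first observe that $\trotter_2(\lambda)=\prod_{j=1}^m e^{H_j\lambda/2}\prod_{j'=m}^{1}e^{H_{j'}\lambda/2}$ is \emph{time-symmetric}, i.e. $\trotter_2(-\lambda)=\trotter_2(\lambda)^{-1}$, which is immediate upon reversing the product. Time-symmetry forces the formal generator $\log(e^{-\lambda H}\trotter_2(\lambda))$ to be an odd function of $\lambda$, so its leading term is $\mathcal{O}(\lambda^3)$ rather than $\mathcal{O}(\lambda^2)$; writing this remainder in integral (Taylor) form and using $\norm{H_j}\le\norm{H}$ together with the $\mathcal{O}(m^2)$ nested commutators of the $H_j$ that appear, one obtains $\norm{\trotter_2(\lambda)-e^{\lambda H}}=\mathcal{O}((m\norm{H}|\lambda|)^3)$ with an explicit constant, valid once $m\norm{H}|\lambda|$ is small.

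\emph{Inductive step.} Assuming $\trotter_{2k}$ is time-symmetric and $2k$-th order accurate, define $\trotter_{2k+2}(\lambda)=[\trotter_{2k}(p_{k+1}\lambda)]^2\,\trotter_{2k}((1-4p_{k+1})\lambda)\,[\trotter_{2k}(p_{k+1}\lambda)]^2$. I would check two points: (i) the palindromic arrangement of time-symmetric factors whose arguments sum to $\lambda$ (since $4p_{k+1}+(1-4p_{k+1})=1$) is again time-symmetric, so all even-order error terms vanish; and (ii) $p_{k+1}=(4-4^{1/(2k+1)})^{-1}$ is exactly the root of $4p^{2k+1}+(1-4p)^{2k+1}=0$, which cancels the order-$(2k+1)$ error term. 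These two facts force the next surviving error to be order $2k+3$, i.e. $\trotter_{2k+2}$ is $(2k+2)$-th order accurate. Since each recursion level introduces exactly five copies of the previous formula with arguments of absolute value at most $\max(2|p_{k+1}|,|1-4p_{k+1}|)<1$, the number of exponentials multiplies by $5$ — giving $\trotter_{2k}$ the count $2m\cdot 5^{k-1}$ — and the effective Taylor parameter picks up at most a factor controlled by $5$, so that $\trotter_{2k}(\lambda)=e^{\lambda H}+\mathcal{O}((2\cdot 5^{k-1}m\norm{H}|\lambda|)^{2k+1})$ per step.

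\emph{Composing $r$ steps and the constraints.} Setting $\lambda=t/r$ gives the single-step bound $\mathcal{O}((2\cdot 5^{k-1}m\tau/r)^{2k+1})$ with $\tau=\norm{H}t$. To pass to $\trotter_{2k}(t/r)^r$ I would use the telescoping inequality $\norm{A^r-B^r}\le r\,\norm{A-B}$, valid for unitary $A,B$ (here $A=e^{-iHt/r}$ and $B=\trotter_{2k}(t/r)$ are both products of unitaries), yielding an overall error $\le 5r\,(2\cdot 5^{k-1}m\tau/r)^{2k+1}=5(2\cdot 5^{k-1}m\tau)^{2k+1}/r^{2k}$ and exponential count $2m\cdot 5^{k-1}r$. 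The hypotheses $4m5^{k-1}\tau/r\le 1$ and $(16/3)(2\cdot 5^{k-1}m\tau)^{2k+1}/r^{2k}\le 1$ are precisely the smallness conditions under which the geometric tails of the Taylor remainders in the base case and in the inductive step are dominated by twice their leading term, which is what collapses the hidden $\mathcal{O}(\cdot)$ into the explicit prefactor $5$.

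\emph{Main obstacle.} The order-of-magnitude scaling follows routinely from time-symmetry plus the defining equation for $p_{k+1}$; the delicate part is pinning down the clean constant $5$ and the exact exponential count $2m5^{k-1}r$ \emph{simultaneously}. This requires bounding, at every level of the recursion, the operator norm of the order-$(2k+1)$ remainder of a product of $\Theta(m)$ exponentials by a single scalar monomial in $2\cdot 5^{k-1}m\tau/r$, and verifying that the recursion accrues no multiplicative overhead beyond the factor $5$ per level. I would carry this out with the integral form of the Taylor remainder and submultiplicativity of the operator norm (rather than formal BCH series), so that every estimate is quantitative from the outset, following \cite{berry2007efficient}.
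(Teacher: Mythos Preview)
The paper does not prove this statement at all: it is quoted verbatim as Lemma~1 of \cite{berry2007efficient} and used as a black box, with no argument given beyond the citation. Your proposal correctly reconstructs the standard Suzuki--Berry analysis from that reference --- time-symmetry of $\trotter_2$, the defining equation $4p_{k+1}^{2k+1}+(1-4p_{k+1})^{2k+1}=0$ to kill the leading odd term, the $5$-fold recursion giving $2m\cdot 5^{k-1}$ exponentials, and the unitary telescoping $\norm{A^r-B^r}\le r\norm{A-B}$ --- so there is nothing to compare against here except the cited source, which your outline matches.
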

This implies the following corollary: 
\begin{corr}\label{corr:trotter-r}
If $r = 1$, i.e.~there is no time stepping, the Trotter formula exhibits the error scaling
\begin{align}
    \norm{\exp \left(-it \sum_{j = 1}^m H_j \right) - \trotter_{2k}(\{H_j\}, t)} \in \mathcal{O}((\norm{H} t)^{2k + 1}),
\end{align}
using no more than $2 m 5^{k - 1}$ exponentials.
\end{corr}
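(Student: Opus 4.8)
The plan is to obtain \cref{corr:trotter-r} as a direct specialization of \cref{fact:trotter} to the case $r = 1$, so that no genuinely new estimate is required. First I would set $r = 1$ in the error bound of \cref{fact:trotter}, which gives
\[
\norm{\exp\left(-it\sum_{j=1}^m H_j\right) - \trotter_{2k}(\{H_j\}, t)} \le 5\left(2 \cdot 5^{k-1} m \tau\right)^{2k+1}, \qquad \tau = \norm{H}t,
\]
using that $\trotter_{2k}(\{H_j\}, t/r)^r = \trotter_{2k}(\{H_j\}, t)$ when $r=1$ and that the factor $r^{2k}$ in the denominator of the bound equals $1$. Likewise the exponential count $2m5^{k-1}r$ collapses to $2m5^{k-1}$. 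Treating the number of terms $m$ and the order $k$ as fixed constants, the right-hand side is a constant multiple of $(\norm{H}t)^{2k+1}$, which is precisely the claimed $\mathcal{O}((\norm{H}t)^{2k+1})$ scaling, and the gate count matches.

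The only point that needs a word of care is the domain of validity: \cref{fact:trotter} asserts its bound only under the two constraints \cref{req:trotter-constraints}, which at $r=1$ read $4m5^{k-1}\norm{H}t \le 1$ and $(16/3)\left(2\cdot 5^{k-1}m\norm{H}t\right)^{2k+1} \le 1$. Because the asymptotic statement is understood in the regime $\norm{H}t = o(1)$ (the same small-parameter convention already used for \cref{fact:bch}), there is a threshold below which both inequalities simultaneously hold; for all sufficiently small $\norm{H}t$ the hypotheses of \cref{fact:trotter} are met and the bound applies, so the $\mathcal{O}(\cdot)$ claim is valid as $\norm{H}t \to 0$ with implied constant depending only on $m$ and $k$.

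I do not anticipate any real obstacle, since the corollary is essentially a bookkeeping restatement of the cited lemma. If instead an explicit non-asymptotic constant were wanted, the only items to track would be the prefactor $5 \cdot \left(2 \cdot 5^{k-1} m\right)^{2k+1}$ that gets absorbed into the big-$\mathcal{O}$, together with a check that the constraint region is nonempty so the statement is not vacuous.
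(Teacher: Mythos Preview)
Your proposal is correct and matches the paper's treatment: the paper simply states \cref{corr:trotter-r} as an immediate consequence of \cref{fact:trotter} without giving a separate proof, so specializing $r=1$ exactly as you do is all that is intended. Your remark about the constraints \cref{req:trotter-constraints} being satisfied for sufficiently small $\norm{H}t$ is the only subtlety, and you handle it appropriately.
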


\newcommand{\uone}{U_1(t)}
\newcommand{\utwo}{U_2(t)}
\newcommand{\uonetilde}{\widetilde{U}_{1, p_1}(t)}
\newcommand{\utwotilde}{\widetilde{U}_{2, p_2}(t)}
\newcommand{\aonetilde}{\Tilde{A}_{1, p_1}(t)}
\newcommand{\atwotilde}{\Tilde{A}_{2, p_2}(t)}

Both of these formulas, however, assume that our implementation of exponentials occurs without error. However, if we would like to apply our technique recursively, our matrix product formulas must account for implementation error, i.e.~be able to use primitives that themselves may have error. Thus, we restate both Trotter and BCH when the operators have asymptotic error:
{
\begin{lemma}[BCH under implementation error]\label{lem:bch-with-approx}
Suppose there are some ideal operators $\uone, \utwo$ that are exponentials of some anti-Hermitian matrix, i.e.: 
\begin{align}
    \uone &= \exp t A_1, \\
    \utwo &= \exp t A_2.
\end{align}
We seek to build $\exp t^2 [A_1, A_2]$, the exponential of the commutator of the matrices. Also suppose that may approximate $\uone, \utwo$ with $\uonetilde, \utwotilde$ with the error scaling
\begin{align}
    \norm{\uonetilde - \uone} &\in \mathcal{O}((c t)^{p_1}), \\
    \norm{\utwotilde - \utwo} &\in \mathcal{O}((c t)^{p_2}),
\end{align}
for some $p_1, p_2 \geq 1$. Call the logarithms of $\Tilde{U}_{1, p_1}, \Tilde{U}_{2, p_2}$ to be $\aonetilde, \atwotilde$. Then, applying a $q^\text{th}$-ordered BCH formula, where $q = \max(\ceil{\frac{\min(p_l, p_2) - 1}{2}}, 1 )$ on the implementable $\uonetilde, \utwotilde$ can still approximate the commutator  exponential by applying \cref{fact:bch}:
\begin{align}
    \norm{\exp t^2 [A_1, A_2] - \bch_{q, r}(\aonetilde, \atwotilde)} \in \mathcal{O}((C t)^{\min (p_1, p_2)}),
\end{align}
where $C = \max(\norm{A_1}, \norm{A_2}, c)$. This procedure uses $8 \cdot 6^{q - 1}$ total exponentials.

\end{lemma}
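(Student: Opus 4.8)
The plan is to split the total error by the triangle inequality into two pieces: (i) the intrinsic truncation error of the \emph{exact} BCH word, and (ii) the error incurred by substituting the approximate exponentials into that word; then show each piece lies in $\mathcal{O}((Ct)^{\min(p_1,p_2)})$. For (i), I would simply invoke \cref{fact:bch} with $k=1$ and $p=q$: using the ideal operators, $\bch_{q,1}(\uone,\utwo) = \exp\!\big(t^2[A_1,A_2] + \mathcal{O}((\gamma t)^{2q+1})\big)$ with $\gamma = \max(\norm{A_1},\norm{A_2}) \le C$, hence $\norm{\exp(t^2[A_1,A_2]) - \bch_{q,1}(\uone,\utwo)} \in \mathcal{O}((\gamma t)^{2q+1})$. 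The choice $q = \max(\ceil{(\min(p_1,p_2)-1)/2},1)$ forces $2q+1 \ge \min(p_1,p_2)$, so in the regime $t \in o(1)$ this contribution is $\mathcal{O}((Ct)^{\min(p_1,p_2)})$.

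For (ii), unrolling the recursion in \cref{fact:bch} shows that $\bch_{q,1}(\cdot,\cdot)$ is a fixed word of length $m = 8 \cdot 6^{q-1}$ whose letters are $e^{\pm A_1 s}$ and $e^{\pm A_2 s}$, where each time argument has the form $s = c_{\mathrm{word}}\, t$ and $c_{\mathrm{word}}$ is a finite product of the constants $\gamma_p, \beta_p$ from \cref{fact:bch} — thus depending only on $q$, not on $t$. I would then replace each ideal letter by the corresponding approximate operator $\uonetilde$ or $\utwotilde$ evaluated at the same scaled time (and each inverse $e^{-A_i s}$ by $\widetilde U_i(s)^{-1}$, estimating $\norm{\widetilde U_i(s)^{-1} - e^{-A_i s}} \le \norm{\widetilde U_i(s)^{-1}}\,\norm{\widetilde U_i(s) - e^{A_i s}}$). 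Each such replacement introduces a per-letter error in $\mathcal{O}((cs)^{p_i}) = \mathcal{O}((ct)^{\min(p_1,p_2)})$, using that the ideal operators are unitary while $\norm{\widetilde U_i(s)} \le 1 + o(1)$ and $\norm{\widetilde U_i(s)^{-1}} \le 2$ once $t$ is small. The elementary telescoping estimate $\norm{\prod_{j=1}^m V_j - \prod_{j=1}^m \widetilde V_j} \le \sum_{j=1}^m \big(\prod_{k<j}\norm{\widetilde V_k}\big)\big(\prod_{k>j}\norm{V_k}\big)\norm{\widetilde V_j - V_j}$ then bounds the substitution error by $m\, 2^{m-1}\,\mathcal{O}((ct)^{\min(p_1,p_2)})$, which is $\mathcal{O}((Ct)^{\min(p_1,p_2)})$ because $m$ is a constant.

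Adding the two contributions gives $\norm{\exp(t^2[A_1,A_2]) - \bch_{q,1}(\uonetilde,\utwotilde)} \in \mathcal{O}((Ct)^{\min(p_1,p_2)})$ with $C = \max(\norm{A_1},\norm{A_2},c)$, and the gate count $8 \cdot 6^{q-1}$ is read off directly from \cref{fact:bch} at $k=1$. The main obstacle is the bookkeeping in step (ii): one must verify that the rescaling constant $c_{\mathrm{word}}$ and the telescoping blow-up $2^{m}$ are genuinely independent of $t$ (so both can be absorbed into the asymptotic notation), that the hypothesized error bound on $\uonetilde, \utwotilde$ applies verbatim at the scaled — and, for the inverse factors, reciprocated — time arguments, and that all approximate factors remain uniformly norm-bounded for small $t$; once these points are pinned down, the remaining estimates are routine.
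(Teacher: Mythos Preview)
Your proposal is correct and follows essentially the same approach as the paper: a triangle-inequality split into the intrinsic BCH truncation error (handled via \cref{fact:bch}, giving order $(C t)^{2q+1}$) and the substitution error from replacing ideal by approximate exponentials (handled by linear accumulation over the $8\cdot 6^{q-1}$ letters of the BCH word). The paper simply invokes Box~4.1 of Nielsen--Chuang for the second piece, whereas you spell out the telescoping estimate and flag the bookkeeping for rescaled times, inverses, and uniform norm bounds---points the paper glosses over by implicitly treating all approximants as unitary; your version is if anything more careful, but the structure is the same.
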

\begin{proof}
Recognize that we may decompose the error involved in implementing the commutator exponential into two sources: the error incurred from the BCH formula intrinsically and the implementation error from the realizable terms. Thus, by the triangle inequality
\begin{align}
    &\norm{\exp t^2 [A_1, A_2] - \bch_q(\aonetilde, \atwotilde)} \leq \nonumber\\
    &\; \norm{\exp t^2 [A_1, A_2] - \bch_q(A_1, A_2)} + \norm{\bch_q(A_1, A_2) - \bch_q(\aonetilde, \atwotilde)} .
\end{align}
We begin with the LHS term. By \cref{fact:bch}, 
\begin{align}
    \norm{\exp t^2 [A_1, A_2] - \bch_q(A_1, A_2)} \in \mathcal{O}((C_{\textrm{BCH}} t)^{2q + 1}),
\end{align}
where $C_{\textrm{BCH}} = \max(\norm{A_1}, \norm{A_2})$. 
For the RHS, recall by Box 4.1 of Nielsen and Chuang \cite{nielsen_chuang_2010} that implementation errors accumulate at most linearly; thus, we can sum over the $8 \cdot 6^{q- 1}$ operations used by BCH. By symmetry of the BCH formula, we apply the $\uonetilde, \utwotilde$ exponentials precisely $4 \cdot 6^{q- 1}$ times. Thus,
\begin{align}
    \norm{\bch_q(A_1, A_2) - \bch_q(\aonetilde, \atwotilde)} &\leq \sum_{j = 1}^{4 \cdot 6^{q - 1}} \mathcal{O}((c t)^{p_1}) + \mathcal{O}((c t)^{p_2}) \\ 
    &\in \mathcal{O}((c t)^{\min(p_l, p_r)}).
\end{align}
Setting $q = \max\{\ceil{\frac{\min(p_l, p_r) - 1}{2}}, 1 \} $ and recalling $\gamma \geq 1$, we observe
\begin{align}
    \norm{\exp t^2 [A_1, A_2] - \bch_q(\aonetilde, \atwotilde)} \in \mathcal{O}((C t)^{\min(p_l, p_r)}),
\end{align}
as desired.
\end{proof}
\begin{lemma}[Trotter under implementation error]\label{lem:trotter-with-approx}
Given~\cref{lem:bch-with-approx}'s assumptions, \cref{fact:trotter}'s assumptions, and assuming $\norm{A_1 + A_2} \geq 1$, an operator can be constructed by applying a $q^\text{th}$-ordered Trotter formula $\trotter_{2q}$ by setting $q = \max(\ceil{\frac{\min(p_l, p_2) - 1}{2}}, 1 )$ so that
\begin{align}
    \norm{\exp t (A_1 + A_2) - \trotter_q (\aonetilde, \atwotilde)} \in \mathcal{O}((C t)^{\min(p_1, p_2)}),
\end{align}
where $C = \max(\norm{A_1 + A_2}, c)$ and using no more than $4 \cdot 5^{q -1} $ operator exponentials. 
\end{lemma}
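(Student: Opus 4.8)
The plan is to mirror the structure of the proof of \cref{lem:bch-with-approx}, splitting the total error via the triangle inequality into an \emph{intrinsic Trotter error} term and an \emph{implementation error} term, and then bounding each separately. Concretely, I would write
\begin{align}
    \norm{\exp t (A_1 + A_2) - \trotter_q(\uonetilde, \utwotilde)} \leq{}& \norm{\exp t (A_1 + A_2) - \trotter_q(\uone, \utwo)} \nonumber\\
    &{}+ \norm{\trotter_q(\uone, \utwo) - \trotter_q(\uonetilde, \utwotilde)}.
\end{align}
The first term is handled purely by the no-error Trotter bound: since $\uone = \exp(tA_1)$ and $\utwo = \exp(tA_2)$ are exact exponentials and we are running the formula with a single time step ($r = 1$), \cref{corr:trotter-r} (with $m = 2$) gives that this term lies in $\mathcal{O}((\norm{A_1 + A_2} t)^{2q + 1})$, which since $\norm{A_1 + A_2} \geq 1$ and $t \in o(1)$ is absorbed into $\mathcal{O}((Ct)^{2q+1})$ with $C = \max(\norm{A_1+A_2}, c)$. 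Here I should be a little careful about how the $-i$ conventions of \cref{fact:trotter} interact with the ``anti-Hermitian generator'' convention used in \cref{lem:bch-with-approx} — but this is only a matter of absorbing factors of $i$ into the $H_j$, and does not affect the asymptotics.

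For the second (implementation-error) term, the key fact is the standard one invoked already in the BCH lemma: errors in a product of unitaries accumulate at most additively (Box 4.1 of \cite{nielsen_chuang_2010}). The $q^\text{th}$-order Trotter formula $\trotter_{2q}$ built on two terms uses no more than $2 m 5^{q-1} = 4 \cdot 5^{q-1}$ exponentials by \cref{corr:trotter-r}, each of which is a copy of $\uone$ or $\utwo$. Replacing each ideal exponential by its approximant introduces, per factor, an error in $\mathcal{O}((ct)^{p_1})$ or $\mathcal{O}((ct)^{p_2})$; summing over the at most $4\cdot 5^{q-1}$ factors (a constant, once $q$ is fixed) yields a total implementation error in $\mathcal{O}((ct)^{\min(p_1,p_2)})$, hence in $\mathcal{O}((Ct)^{\min(p_1,p_2)})$.

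Finally I combine: the total is $\mathcal{O}((Ct)^{2q+1}) + \mathcal{O}((Ct)^{\min(p_1,p_2)})$, and the choice $q = \max(\ceil{\tfrac{\min(p_1,p_2)-1}{2}}, 1)$ guarantees $2q + 1 \geq \min(p_1,p_2)$, so the first term is dominated by the second (using $Ct \in o(1)$), giving the claimed $\mathcal{O}((Ct)^{\min(p_1,p_2)})$ with the stated exponential count $4\cdot 5^{q-1}$. The main obstacle I anticipate is not any single estimate but bookkeeping: making sure the hypotheses of \cref{fact:trotter}/\cref{corr:trotter-r} (the side constraints \eqref{req:trotter-constraints}, the Hermiticity/anti-Hermiticity convention, and the requirement $\norm{A_1+A_2}\geq 1$) are all legitimately in force here, and that the ``constant $C$'' is uniform — i.e. that the implied constants in the two $\mathcal{O}$-terms do not secretly depend on $q$ in a way that breaks the final comparison. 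Since $q$ is determined by $p_1, p_2$ (fixed inputs, not growing), this is fine, but it is worth stating explicitly.
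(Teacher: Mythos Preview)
Your proposal is correct and follows essentially the same route as the paper's proof: a triangle-inequality split into the intrinsic Trotter error (bounded via \cref{corr:trotter-r} as $\mathcal{O}((\norm{A_1+A_2}t)^{2q+1})$) plus the implementation error (bounded via the linear accumulation argument of Box~4.1 over the at most $4\cdot 5^{q-1}$ factors), followed by the choice of $q$ to make the first term dominated by the second. Your additional remarks about the $i$-conventions and the uniformity of constants in $q$ are reasonable hygiene but do not appear explicitly in the paper's argument.
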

\begin{proof}
Similarly, we may apply the triangle inequality in order to determine a bound by separating the error accrued into the intrinsic Trotter error and the implementation error
\begin{align}
    &\norm{\exp t (A_1 + A_2) - \trotter_{2q} (\aonetilde, \atwotilde) } \nonumber\\
    &\leq \norm{\exp t(A_1 + A_2) - \trotter_{2q} (A_1, A_2)}\nonumber \\
    &\qquad+ \norm{\trotter_{2q} (A_1, A_2) - \trotter_{2q} (\aonetilde, \atwotilde)}.
\end{align}
To analyze the LHS, which represents the Trotter error, \cref{corr:trotter-r} provides a bound
\begin{align}
    \norm{\exp t(A_1 + A_2) - \trotter_{2q} (A_1, A_2)} \in \mathcal{O}((\norm{A_1 + A_2}t)^{2q + 1}).
\end{align}
To analyze the RHS, which represents the implementation error, recall by box 4.1 of Nielsen and Chuang \cite{nielsen_chuang_2010} that the error accrues linearly. Furthermore, the number of operations in Trotter is no more than $4 \cdot 5^{q - 1} $ operations in total. Therefore, we only apply each constituent operation at most $2 \cdot 5^{q - 1}$ times. Thus, a loose upper bound can be written as
\begin{align}
    \norm{\trotter_{2q} (\uone, \utwo) - \trotter_{2q} (\aonetilde, \atwotilde)} &\leq \sum_{j = 1}^{4 \cdot 5^{q -1 }} \mathcal{O}((c t)^{p_1}) + \mathcal{O}((c t)^{p_2}) \\
    &\in \mathcal{O}((c t)^{\min(p_1, p_2)}).
\end{align}
It is sufficient to set $q = \max(\ceil{\frac{\min(p_1, p_2) - 1}{2}}, 1)$ so that
\begin{align}
    &\norm{\exp t (A_1 + A_2) - \trotter_{2q} (\aonetilde, \atwotilde) } \\
    &\qquad \in \mathcal{O}((\norm{A_1 + A_2} t)^{2q + 1}) + \mathcal{O}(( c t)^{\min(p_1, p_2)}) =  \mathcal{O}((Ct)^{\min(p_1, p_2)}),
\end{align}
as desired.
\end{proof}
}

\subsection{Scaling of the addition algorithm}\label{apndx:general-addition-error}
We apply the above results to produce the error analysis of \cref{alg:adder}:
\algproduct*
\begin{proof}
Our proof proceeds by applying the above theorems upon our operations. By setting $q = \max(\ceil{\frac{\min(p_1, p_2) - 1}{2}}, 1)$, \cref{lem:bch-with-approx} implies that:
\begin{align}
    \norm{\textrm{BCH}_q ( X \cdot  \widetilde{\mathcal{B}}_B(\tau) \cdot X, \widetilde{\mathcal{B}}_A(\tau)) -  \exp \tau^2 [A, B] \sigma^z} &\in \mathcal{O}((C_{BCH}\tau)^{p_{BCH}}), \label{eq:C}\\
    \norm{\textrm{BCH}_q (S\cdot  \widetilde{\mathcal{B}}_A(\tau) \cdot  S^\dagger,  X  \cdot  \widetilde{\mathcal{B}}_B(\tau) \cdot  X) -  \exp i \tau^2 \{A, B \} \sigma^z} &\in \mathcal{O}((C_{BCH} \tau)^{p_{BCH}}), \label{eq:AC}
\end{align}
where $C_{BCH} = \max( \norm{A}, \norm{B}, c)$ and $p_{BCH} = \min (p_A, p_B) $. We'll need to set $\tau = \sqrt{\frac{t}{2}}$ to achieve the desired time evolution. Additionally, Pauli conjugation has no impact on the error scaling. We call these formulas ``Left" and ``Right"
\begin{align}
    \textrm{LEFT} &\coloneqq SH \cdot \textrm{BCH}_q ( X\cdot  \widetilde{\mathcal{B}}_B(\tau)\cdot  X, \widetilde{\mathcal{B}}_A(\tau)) \cdot HS^\dagger, \\
    \textrm{RIGHT} &\coloneqq H \cdot  \textrm{BCH}_q (S\cdot  \widetilde{\mathcal{B}}_A(\tau)\cdot  S^\dagger, X\cdot \widetilde{\mathcal{B}}_B(\tau) \cdot X )\cdot  H,
\end{align}
where
\begin{align}
    \norm{\textrm{LEFT} - \exp \tau^2 \sigma^y (AB - (AB)^\dagger) } &\in \mathcal{O}((C^2 t)^{p_{BCH} / 2}), \\
    \norm{\textrm{RIGHT} - \exp i \tau^2 \sigma^x (AB + (AB)^\dagger)   } &\in \mathcal{O}((C^2 t)^{p_{BCH} / 2}).
\end{align}
Finally, we use \cref{lem:trotter-with-approx}, implying that a Trotter formula with order $q$ has the error scaling
\begin{align}
    \norm{\textrm{Trotter}_s (\textrm{LEFT}, \textrm{RIGHT})  - \exp it \begin{bmatrix}
        0 & (AB)^\dagger \\
        AB & 0 
    \end{bmatrix} } \in \mathcal{O}((C_{TOTAL} t)^{p_{BCH} / 2}),
\end{align}
where
\begin{align}
    C_{TOTAL} = \max \left( \norm{\begin{bmatrix}
        0 & (AB)^\dagger \\
        AB & 0
    \end{bmatrix}}, C_{BCH}^2 \right) = \max \left(\norm{AB}, \norm{(AB)^\dagger}, C_{BCH}^2 \right).
\end{align}

To bound the number of operations used, we recognize that the Trotter formula requires at most $4 \cdot 5^{q - 1}$ commutators, each of which requires $8 \cdot 6^{q - 1}$ constituent operators. Thus, the total number of operators required is bounded as
\begin{align}
    4 \cdot 5^{q - 1} \cdot 8 \cdot 6^{q - 1} \leq 1.07 \cdot 30^q ,
\end{align}
\end{proof}

\newcommand{\leftarbop}{\widetilde{\mathcal{S}}_{k_l, p_l}(t)}
\newcommand{\leftarbopt}{\widetilde{\mathcal{S}}_{k_l, p_l}(\tau)}
\newcommand{\leftarbopx}{\widetilde{\mathcal{S}}^X_{k_l, p_l}(\tau)}
\newcommand{\leftarbopy}{\widetilde{\mathcal{S}}^Y_{k_l, p_l}(\tau)}

\newcommand{\rightarbop}{\widetilde{\mathcal{S}}_{k_r, p_r}(t)}
\newcommand{\rightarbopt}{\widetilde{\mathcal{S}}_{k_r, p_r}(\tau)}
\newcommand{\rightarbopx}{\widetilde{\mathcal{S}}^X_{k_r, p_r}(\tau)}
\newcommand{\rightarbopxr}{\widetilde{\mathcal{S}}^X_{k_r, p_r}(\tau / r_{\textrm{BCH})}}
\newcommand{\rightarbopy}{\widetilde{\mathcal{S}}^Y_{k_r, p_r}(\tau)}
\newcommand{\rightarbopyr}{\widetilde{\mathcal{S}}^Y_{k_r, p_r}(\tau / r_{\textrm{BCH}})}
This implies the following corollary for the annihilation/creation operators:
\begin{corr}[\cref{alg:adder} applied to polynomials of annihilation/creation operators]\label{lem:adder}
Assume we can implement the following $k_l, k_r^\text{th}$ order approximations of $S$ with error scaling $p_l, p_r$:
\begin{align}
    \norm{\leftarbop - \mathcal{S}_{k_l}(t)} &\in \mathcal{O}((c t)^{p_l}), \\
    \norm{\rightarbop - \mathcal{S}_{k_r}(t)} &\in \mathcal{O}((c t)^{p_r}),
\end{align}
with $c \geq \Lambda^{\max(k_l, k_r) / 2}$. Then, we can implement higher-order operators with comparable $t$ scaling
\begin{align}
    \norm{
    \widetilde{\mathcal{S}}_{k_l + k_r, \min(p_l, p_r)}(t)  - 
    \exp \left(i t \begin{bmatrix}
    0 & ( a^\dagger )^{k_l + k_r} \\
    a^{k_l + k_r} & 0
    \end{bmatrix}
    \right)
    } \in \mathcal{O}((c^2 t)^{\min(p_l, p_r) / 2}),
\end{align}
using no more than $1.07 \cdot 30^q$ $\mathcal{S}_{k_l},  \mathcal{S}_{k_r}$ operators.
\end{corr}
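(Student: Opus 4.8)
The plan is to instantiate the general addition result, \cref{thm:general-adder-error}, with the specific choice $A = (a^\dagger)^{k_l}$ (equivalently the block encoding $\mathcal{S}_{k_l}$) and $B = (a^\dagger)^{k_r}$, and then translate the abstract norm bounds on $A$, $B$ into the Fock-cutoff-dependent constant $c$. First I would verify the precondition $[A,B] = 0$: since both $A$ and $B$ are powers of the creation operator alone (or, in the block-encoded picture, the relevant $\sigma^z$-conjugates all commute because they are built from $a^\dagger$ only), the commutator vanishes, so \cref{thm:general-adder-error} applies verbatim. I would also note that under the truncation $P_\Lambda$, the operator $(a^\dagger)^{k}$ has norm at most $\sqrt{\Lambda (\Lambda-1)\cdots(\Lambda-k+1)} \le \Lambda^{k/2}$, so $\norm{A}, \norm{B} \le \Lambda^{\max(k_l,k_r)/2} \le c$ by hypothesis.

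Next I would track the constants through \cref{thm:general-adder-error}. That theorem outputs error $\mathcal{O}((C_{TOTAL} t)^{\min(p_A,p_B)/2})$ with $C_{TOTAL} = \max(\norm{AB}, \norm{BA}, C_{BCH}^2)$ and $C_{BCH} = \max(\norm{A},\norm{B},c) = c$ here. The remaining point is that $\norm{AB} = \norm{(a^\dagger)^{k_l+k_r}} \le \Lambda^{(k_l+k_r)/2} \le c^2$ — here I would use the crude bound $\Lambda^{(k_l+k_r)/2} \le (\Lambda^{\max(k_l,k_r)/2})^2 \le c^2$ — and likewise for $\norm{BA}$, so that $C_{TOTAL} \le c^2$ and the error is $\mathcal{O}((c^2 t)^{\min(p_l,p_r)/2})$ as claimed. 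The identification $AB = (a^\dagger)^{k_l}(a^\dagger)^{k_r} = (a^\dagger)^{k_l+k_r}$ (and $BA$ the same) shows the target unitary is exactly $\exp\!\big(it \begin{bmatrix} 0 & (a^\dagger)^{k_l+k_r} \\ a^{k_l+k_r} & 0\end{bmatrix}\big)$, matching the statement; the gate count $1.07 \cdot 30^q$ with $q = \max(\ceil{(\min(p_l,p_r)-1)/2},1)$ is inherited directly from \cref{thm:general-adder-error} with each exponential now being an $\mathcal{S}_{k_l}$ or $\mathcal{S}_{k_r}$ operation (or $X$/$S$/$H$ conjugations, which are free qubit gates).

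The main obstacle — really the only nontrivial point — is making sure the operator-norm bounds on the truncated powers of $a^\dagger$ are clean enough to be absorbed into the single constant $c^2$, and that the hypothesis $c \ge \Lambda^{\max(k_l,k_r)/2}$ is exactly what is needed for both the $C_{BCH}$ term and the $\norm{AB}$, $\norm{BA}$ terms in $C_{TOTAL}$ to collapse to $c^2$. A secondary subtlety is the commutativity of $AB$ with $B$ needed if one wants to iterate the corollary (as the surrounding text does, to reach $a^k$, $(a^\dagger)^k$ for general $k$), but that is not part of this statement and I would only remark on it. Everything else is a direct specialization of \cref{thm:general-adder-error}, so the proof should be short: state the substitution, check $[A,B]=0$, bound the norms, and cite the theorem.
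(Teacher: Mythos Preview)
Your proposal is correct and follows essentially the same route as the paper: both directly instantiate \cref{thm:general-adder-error} with $A=(a^\dagger)^{k_l}$, $B=(a^\dagger)^{k_r}$, invoke the norm bound $\norm{(a^\dagger)^k}\le\Lambda^{k/2}$ (the paper records this as \cref{fact:norm-blockencodedxt}), and then observe that the hypothesis $c\ge\Lambda^{\max(k_l,k_r)/2}$ forces $C_{BCH}=c$ and $C_{TOTAL}\le c^2$. Your explicit check that $[A,B]=0$ and your remark on iterability are reasonable additions but not present in the paper's (terser) version.
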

\begin{proof}
To synthesize the block encoding of higher-order annihilation/creation operators, we directly apply 
 \cref{thm:general-adder-error}. We quantify the error by bounding the block-encoding norm. Note that:
    \begin{fact}\label{fact:norm-blockencodedxt}
A $k^\text{th}$ order block-encoded operator has a bounded norm
\begin{align}
    \norm{\begin{bmatrix}
        0 & ( a^\dagger )^k \\
        ( a )^k & 0
    \end{bmatrix}}  \leq \Lambda^{k / 2}.
\end{align}
\end{fact}
Thus, the constant $C_\textrm{BCH} $ is bounded, as $C_\textrm{BCH} \leq \max(\Lambda^{k_l / 2}, \Lambda^{k_r / 2}, c) \leq \max( \Lambda^{\max(k_l, k_r) / 2}, c) = c$ by hypothesis. Next, observe that $\norm{AB}, \norm{(AB)^\dagger} \leq \Lambda^{(k_l + k_r) / 2} $. Thus, $C_{TOTAL} \leq \max( \Lambda^{(k_l + k_r) / 2}, c^2) = c^2$. Therefore, the final error scaling will be upper bounded by $\mathcal{O}((c^2 t )^{\min(p_l, p_r) / 2})$.
\end{proof}

\subsection{Scaling of the multiplication algorithm}\label{apndx:multiplication}
\cref{alg:mult}'s error scaling follows directly from the BCH formula:
\algmult*
\begin{proof}
    We can directly apply \cref{lem:bch-with-approx} using the $\widetilde{\mathcal{B}}_A, \widetilde{\mathcal{B}}_B$ operators. When applied, we find that
    \begin{align}
        \norm{ \textrm{BCH}_q( S \cdot i \tau \widetilde{\mathcal{B}}_A(\tau) \cdot S^\dagger, X \cdot i \tau \widetilde{\mathcal{B}}_B \cdot X) - \exp 2 i \tau^2 \begin{bmatrix}
            AB & 0 \\
            0 & - BA - (BA)^\dagger
        \end{bmatrix}} \in \mathcal{O}((C \tau)^{\min(p_A, p_B)}),
    \end{align}
    where $C = \max (\norm{A}, \norm{B}, c)$ and $q = \max ( \ceil{\frac{\min(p_A, p_B) - 1}{2}}, 1) $. Then, by taking $\tau = \sqrt{\frac{t}{2}}$, we yield
    \begin{align}
        \norm{\textrm{BCH}_q(S\cdot  \widetilde{\mathcal{B}}_B(\tau)\cdot  S^\dagger, \widetilde{\mathcal{B}}_A(\tau) ) - \exp i t \begin{bmatrix}
            AB & 0 \\
            0 & -BA
        \end{bmatrix}} \in \mathcal{O}((C \tau)^{\min(p_A, p_B)}) = \mathcal{O}((C^2 t)^{\min(p_A, p_B) / 2}).
    \end{align}
    By counting the number of exponentials in the result via \cref{lem:bch-with-approx}, we finally find that the number of exponentials needed is at most $8 \cdot 6^{q - 1}$. 
\end{proof}

\section{Phase-Space Applications}
In the following section, we derive error bounds for the two phase-space applications described: the conditional rotation gate and the controlled-phase beam splitter. Note that the cutoff approach we employ to bound $\norm{a}, \norm{a^\dagger}$ is more complex for position and momentum operators.

To obtain error bounds, we leave all expressions in terms of $\norm{\hat{x}}, \norm{\hat{p}}$. We leave a more concrete bound that can be found by applying a cutoff upon both $\hat{x}, \hat{p}$ simultaneously to future work. 

\subsection{Conditional rotation gate}\label{subsec:cond-rot}

\resultmeasurement
\begin{proof}
    Begin by directly applying \cref{fact:bch} to identify the error scaling. This implies that
    \begin{align}
        \norm{\textrm{BCH}_p(i \tau \hat{x} \sigma^i, i \tau \hat{x} \sigma^j) - \exp \tau^2 \hat{x}^2 [\sigma^i, \sigma^j] } &\in \mathcal{O}((\norm{\hat{x}} \tau)^{2p + 1}), \\
        \norm{\textrm{BCH}_p(i \tau \hat{p} \sigma^i, i \tau \hat{p} \sigma^j) - \exp \tau^2 \hat{p}^2 [\sigma^i, \sigma^j] } &\in \mathcal{O}((\norm{\hat{p}} \tau)^{2p + 1}).
    \end{align}
    Without loss of generality, select $\sigma^i = \sigma^y$ and $\sigma^j = \sigma^z$ so that $[\sigma^i, \sigma^j] = 2i \sigma^x$. Then, by selecting $\tau = \sqrt{\frac{t}{2}}$, the BCH formula is an approximation of  $\exp it \mathcal{B}_{\hat{x}^2} $. Thus,
    \begin{align}
        \norm{\exp it \widetilde{ \mathcal{B}}_{\hat{x}^2} - \exp i t \hat{x}^2 \sigma^x} &\in \mathcal{O}((\norm{\hat{x}}^2 t)^{p + \frac{1}{2}}).
    \end{align}
    Similarly, for $\hat{p}^2$,
    \begin{align}
        \norm{\exp it \widetilde{ \mathcal{B}}_{\hat{p}^2} - \exp i t \hat{p}^2 \sigma^x} &\in \mathcal{O}((\norm{\hat{p}}^2 t)^{p + \frac{1}{2}}).
    \end{align}
    We apply \cref{lem:trotter-with-approx} and observe
    \begin{align}
        \norm{\textrm{Trotter}_q( \widetilde{it \mathcal{B}}_{\hat{x}^2}, it \widetilde{\mathcal{B}}_{\hat{p}^2}) - \exp i t (\hat{x}^2 + \hat{p}^2)\sigma^x } \in \mathcal{O}( (C t)^{p + \frac{1}{2}}),
    \end{align}
    where $C = \max( \norm{ \hat{x}^2 + \hat{p}^2 }, \norm{\hat{x}}^2, \norm{\hat{p}}^2)$ and $q = \ceil{\frac{p}{2} - \frac{1}{4}}$. This requires no more than $4 \cdot 5^{q - 1}$ operator exponentials, thus implying
    \begin{align}
        4 \cdot 5^{q - 1} \leq 4 \cdot 5^{ \frac{p}{2} - \frac{1}{4} }.
    \end{align}
\end{proof}

\subsection{Controlled-phase beam splitter gate}\label{subsec:controlled-phase}
\resultbeamsplitter
\begin{proof}
We may take a similar approach as above. Applying \cref{fact:bch} yields
\begin{align}
    \norm{\textrm{BCH}_p( i \tau \hat{x}_1 \sigma^i, i \tau \hat{x}_2 \sigma^j) - \exp \tau^2 \hat{x}_1 \hat{x}_2 [\sigma^i, \sigma^j] } &\in \mathcal{O}((\norm{\hat{x}} \tau)^{2p + 1}), \\
    \norm{\textrm{BCH}_p(i \tau \hat{p}_1 \sigma^i, i \tau \hat{p}_2 \sigma^j) - \exp \tau^2 \hat{p}_1 \hat{p}_2 [\sigma^i, \sigma^j]  } &\in \mathcal{O}((\norm{\hat{p}} \tau)^{2p + 1}),
\end{align}
where we set $\norm{\hat{x}} = \max( \norm{\hat{x}_1}, \norm{\hat{x}_2})$ and $\norm{\hat{p}} = \max(\norm{\hat{p}_1}, \norm{\hat{p}_2})$. We again take $\tau = \sqrt{\frac{t}{2}}$ so that
\begin{align}
    \norm{\exp it \widetilde{ \mathcal{B}}_{\hat{x}_1 \hat{x}_2} - \exp i t \hat{x}_1 \hat{x}_2 \sigma^x} &\in \mathcal{O}((\norm{\hat{x}}^2 t)^{p + \frac{1}{2}}), \\
    \norm{\exp it \widetilde{ \mathcal{B}}_{\hat{p}_1 \hat{p}_2} - \exp i t \hat{p}_1 \hat{p}_2 \sigma^x} &\in \mathcal{O}((\norm{\hat{p}}^2 t)^{p + \frac{1}{2}}) .
\end{align}
Applying \cref{lem:trotter-with-approx} gives
\begin{align}
    \norm{\textrm{Trotter}_q( \widetilde{\mathcal{B}}_{\hat{x}_1 \hat{x}_2}, \widetilde{\mathcal{B}}_{\hat{p}_1 \hat{p}_2}) - \exp i t (\hat{x}_1 \hat{x}_2 + \hat{p}_1 \hat{p}_2)\sigma^x } \in \mathcal{O}( (C t)^{p + \frac{1}{2}}),
\end{align}
where $C = \max( \norm{ \hat{x}_1 \hat{x}_2 + \hat{p}_1 \hat{p}_2}, \norm{\hat{x}}^2, \norm{\hat{p}}^2)$ and $q = \ceil{\frac{p}{2} - \frac{1}{4}}$. This requires no more than $4 \cdot 5^{q - 1}$ operator exponentials, thus implying
    \begin{align}
        4 \cdot 5^{q - 1} \leq 4 \cdot 5^{ \frac{p}{2} - \frac{1}{4} }.
    \end{align}
\end{proof}

\section{Fock-Space Applications}
We now introduce a series of techniques that allow us to realize polynomials of Fock-space operators. We first begin in \cref{subsec:arbitrary_power} by identifying the error scaling of an arbitrary order Fock-space block encoding, i.e.~$\mathcal{B}_{a^k}$. In \cref{apndx:error-jc}, we show how the techniques can be used to simulate the Jaynes-Cummings Hamiltonian, which itself is a polynomial of Fock-space operators. In \cref{subsec:state-prep}, we demonstrate how this technique can be extended beyond simulation into realizing more general operators, such as a unitary for state preparation.

\subsection{Realizing block encodings of arbitrary order}\label{subsec:arbitrary_power}
We seek to demonstrate the following result:
\begin{theorem}\label{thm:main}
For positive integer $k \geq 1$ and timestep $t \in\mathbb{R}$, we seek to implement the target block encoding $\mathcal{T}_k(t)$ defined as
\begin{align}
    \mathcal{T}_{k}(t) = \exp \left( it \begin{bmatrix}
    0 &  ( a^{\dagger})^{k}  \\
    ( a )^{k} & 0
    \end{bmatrix} \right).
\end{align}
For any $\epsilon>0$ and $p > 1$ there exists an implementable unitary operation $\widetilde{\mathcal{T}}_{k, p}$ of order $p$ such that
\begin{align}
    \norm{\mathcal{T}_k - \widetilde{\mathcal{T}}_{k, p}}\leq \epsilon,
\end{align}
and the number of applications of $\mathcal{S}_1(t)$ needed to implement the operation scales in
\begin{align}
    r \cdot n^{1.6} 30^{np} 420^{n^2 p / 2} 6^{\log_2 n + 1} ,
\end{align}
where $r \in \Theta \left( \frac{(\Lambda^{k/2}t)^{1 + 1/(p - 1)}}{\epsilon^{1 / (p - 1)}} \right)$.
\end{theorem}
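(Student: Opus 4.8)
The plan is to build $\widetilde{\mathcal{T}}_{k,p}$ by recursion on the exponent $k$, using the addition algorithm \cref{alg:adder} as the recursive step, and then to control the accumulated Trotter/BCH error by time-slicing. The base case is $k=1$, for which $\mathcal{T}_1(t) = \mathcal{S}_1(t)$ is implemented exactly by hypothesis (\cref{defn:SX}); set $\widetilde{\mathcal{T}}_{1,p} = \mathcal{S}_1$, which trivially has order-$p$ accuracy with $c = \Lambda^{1/2}$. For the inductive step, I would split $k = k_l + k_r$ with $k_l = \lceil k/2 \rceil$, $k_r = \lfloor k/2 \rfloor$, and feed $\widetilde{\mathcal{T}}_{k_l,p}$, $\widetilde{\mathcal{T}}_{k_r,p}$ into $\textrm{ADD}$; since $[(a^\dagger)^{k_l}, (a^\dagger)^{k_r}] = 0$ as required, \cref{lem:adder} (the corollary of \cref{thm:general-adder-error} for annihilation/creation polynomials) gives a block encoding of $(a^\dagger)^{k}$ with error $\mathcal{O}((c^2 t)^{\min(p_l,p_r)/2})$ and cost $1.07 \cdot 30^q$ calls to the two subroutines, where $q = \max(\lceil (p-1)/2 \rceil, 1)$. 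Unrolling the binary recursion of depth $\lceil \log_2 k \rceil$ gives the $6^{\log_2 k + 1}$ and nested $30$–$420$ factors in the stated count (the $420 = 30 \cdot 14$-type constants arise from composing the ADD cost across levels, since each level squares the effective norm constant and multiplies the per-level gate budget).

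The second ingredient is the error-versus-repetition tradeoff. Running the recursion once produces some $\widehat{\mathcal{T}}_{k}(t)$ with $\|\widehat{\mathcal{T}}_k(t) - \mathcal{T}_k(t)\| \in \mathcal{O}((\Lambda^{k/2} t)^{p})$ after tracking how the order degrades: each ADD level halves the exponent on $t$ but also bumps the effective order, and choosing the internal BCH/Trotter orders as $q = \lceil (p-1)/2 \rceil$ at every level keeps the final exponent at $p$ (this is exactly the bookkeeping already done in \cref{lem:bch-with-approx} and \cref{lem:trotter-with-approx}, applied recursively). Then I would write $\mathcal{T}_k(t) = \mathcal{T}_k(t/r)^r$ exactly (it is a genuine matrix exponential, so this is an identity, not an approximation), approximate each short-time factor by $\widehat{\mathcal{T}}_k(t/r)$, and sum the $r$ errors linearly via Box 4.1 of Nielsen--Chuang: the total error is $\mathcal{O}\!\left(r (\Lambda^{k/2} t / r)^{p}\right) = \mathcal{O}\!\left( (\Lambda^{k/2} t)^{p} / r^{p-1} \right)$. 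Setting this $\le \epsilon$ forces $r \in \Theta\!\left( (\Lambda^{k/2} t)^{(p-1+1)/(p-1)} / \epsilon^{1/(p-1)} \right) = \Theta\!\left( (\Lambda^{k/2}t)^{1 + 1/(p-1)} / \epsilon^{1/(p-1)} \right)$, matching the claimed $r$. The total $\mathcal{S}_1$ count is then $r$ times the per-repetition recursion cost, yielding the displayed expression with $n$ playing the role of $k$.

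The main obstacle I anticipate is the error-propagation bookkeeping through the recursion: \cref{alg:adder} both halves the power of $t$ and (via the $\min(p_l,p_r)/2$ in the exponent) degrades the order, so one must verify that choosing the internal formula orders large enough at every one of the $\lceil \log_2 k \rceil$ levels keeps the top-level order at exactly $p$ while not letting the gate count blow up super-exponentially in $k$. Getting the constant $C$ right also requires \cref{fact:norm-blockencodedxt} at each level ($\|(a^\dagger)^j\| \le \Lambda^{j/2}$) together with the observation that $C_{TOTAL}$ only squares (not worse) per level, so the net constant stays $\Lambda^{k/2}$ rather than something like $\Lambda^{k^2}$. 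A secondary, more mundane issue is collecting all the per-level multiplicative constants ($1.07$, $30^q$, $8 \cdot 6^{q-1}$, $4 \cdot 5^{q-1}$) into the single clean bound $n^{1.6} 30^{np} 420^{n^2 p/2} 6^{\log_2 n + 1}$; this is a geometric-sum estimate over recursion levels and should be routine once the level-by-level costs are pinned down.
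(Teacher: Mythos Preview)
Your high-level plan---build $\widetilde{\mathcal T}_{k,p}$ by recursive use of \textrm{ADD}, then time-slice to trade error for repetitions---is exactly the paper's strategy, and your derivation of the $r$ scaling from $r\cdot\mathcal O((\Lambda^{k/2}t/r)^p)\le\epsilon$ matches \cref{r-scaling} precisely. Two points deserve correction, though.

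First, the recursion the paper actually uses is two-stage rather than the single halving you describe. It first defines \textrm{POWER}$(k,t,p)$ which handles only $k=2^\ell$ by recursively squaring (each call sets $p'\coloneqq 2p$ and feeds \textrm{POWER}$(k/2,\cdot,p')$ into \textrm{ADD}), yielding \cref{lem:key-scaling}. It then defines \textrm{ARB\_POWER}, which writes $k$ in binary, instantiates a \textrm{POWER} term for each nonzero bit, and combines those $n$ leaves with a balanced binary tree of \textrm{ADD}s of depth $\log_2 n$. In particular the $n$ appearing in the final count is the \emph{number of binary digits} of $k$, not $k$ itself; your closing remark that ``$n$ plays the role of $k$'' is off by a logarithm and would inflate the bound drastically.

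Second, and more importantly, your claim that ``choosing the internal BCH/Trotter orders as $q=\lceil(p-1)/2\rceil$ at every level keeps the final exponent at $p$'' is backwards. As you note yourself in the next paragraph, \textrm{ADD} \emph{halves} the order: inputs of order $p'$ yield output of order $p'/2$. So to land at order $p$ at the root you must request order $2p$ one level down, $4p$ two levels down, and in general $2^s p$ at depth $s$; consequently the internal $q$ used at depth $s$ is $\approx 2^{s-1}p$, not a constant. This doubling is precisely what generates the $30^{np}$ factor (from $\prod_s 30^{2^{s-1}p}$ over the ARB\_POWER tree) and the $420^{kp/2}$ factor inside each \textrm{POWER} leaf. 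With constant $q$ the errors would not compose to order $p$ at the top. Once you fix this, the per-level multiplicative cost collection is, as you say, a geometric product, and the rest goes through.
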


Because we can add two lower-order block encodings via \cref{alg:adder}, we can exploit a binary expansion to achieve arbitrary orders (e.g.~$(a^\dagger)^9 = (a^{\dagger})^{2^3} a^\dagger$). Thus, our first task is to demonstrate the implementation of these block encodings with orders that are a power of two. This is achievable through the recursive $\textrm{POWER}$ algorithm:
\begin{algorithm}[H]
\caption{POWER($k, t, p$) }\label{alg:power_of_two}
\begin{algorithmic}
\Require $k = 2^\ell$ for nonnegative integer $\ell$, timestep $t > 0$, order $p > 1$
\Ensure $\widetilde{\mathcal{T}}_k$ with $\norm{\widetilde{\mathcal{T}}_k  - \exp it \begin{bmatrix}
    0 & (a^\dagger)^k \\
    a^k & 0
\end{bmatrix}} \in \mathcal{O}((\Lambda^{k / 2} t)^p)$
\If{$k = 1$}
    \State \Return $\mathcal{S}_1(t)$
\Else
    \State $p' \coloneqq 2p$
    \State $\textrm{HalfOp} \coloneqq \textrm{POWER}(k/2, \sqrt{t/2}, p')$
    \State \Return $\textrm{ADD}(\textrm{HalfOp}, \textrm{HalfOp}, p', p', \sqrt{t / 2})$
\EndIf
\end{algorithmic}
\end{algorithm}

This builds to the following result:
\begin{theorem}\label{lem:key-scaling}
For any $t \geq 0$, $p \geq 1$, and fixed $k = 2^\ell$ for some $\ell \geq 1$ we have that the unitary implemented by~\Cref{alg:power_of_two}, ${\rm POWER}$ acting on $\mathcal{H}_2\otimes \mathcal{H}_\Lambda$, satisfies
\begin{align}
    \norm{ {\rm POWER}(k,t,p) - \exp \left( i t \begin{bmatrix}
    0 & (a^\dagger)^{k} \\
    ( a )^{k}  & 0
    \end{bmatrix} \right) } \in \mathcal{O}((\Lambda^{k/2} t)^{p}),
\end{align}
using no more than $ 6^{\log_2 k} \cdot 420^{kp / 2}$ unitary $\mathcal{S}_1$ operators.
\end{theorem}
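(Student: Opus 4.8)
The plan is to induct on $\ell$, where $k = 2^\ell$, following the recursive structure of \cref{alg:power_of_two}. The base case is $\ell = 0$ (i.e.\ $k = 1$), where $\textrm{POWER}(1,t,p)$ simply returns $\mathcal{S}_1(t)$; by \cref{defn:SX} this equals $\mathcal{T}_1(t)$ exactly, so the error is $0 \in \mathcal{O}((\Lambda^{1/2} t)^p)$ and one $\mathcal{S}_1$ operation is used, within $6^{\log_2 1}\cdot 420^{p/2}$. (If one prefers the hypothesis $\ell\ge 1$ as stated, take $\ell=1$ as the base case and run the inductive step once against this trivial case.)

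For the inductive step, assume the claim at level $\ell - 1$ invoked at order $2p$: then $\textrm{POWER}(k/2,\cdot,2p)$ implements $\mathcal{T}_{k/2}$ with error $\mathcal{O}((\Lambda^{(k/2)/2} t)^{2p})$ using at most $6^{\log_2 (k/2)}\cdot 420^{(k/2)(2p)/2} = 6^{\log_2 k - 1}\cdot 420^{kp/2}$ copies of $\mathcal{S}_1$. The algorithm then feeds two copies of this subroutine into $\textrm{ADD}$, i.e.\ into \cref{lem:adder}, with $k_l = k_r = k/2$, $A = B = (a^\dagger)^{k/2}$, $p_l = p_r = 2p$, and $c = \Lambda^{k/4}$. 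The preconditions hold, since $[A,B] = [(a^\dagger)^{k/2},(a^\dagger)^{k/2}] = 0$ and $c = \Lambda^{\max(k_l,k_r)/2}$, and \cref{lem:adder} yields an operator approximating $\mathcal{T}_{k_l+k_r} = \mathcal{T}_k$ with error $\mathcal{O}((c^2 t)^{\min(p_l,p_r)/2}) = \mathcal{O}((\Lambda^{k/2} t)^p)$. The order-doubling $p' = 2p$ in \cref{alg:power_of_two} is exactly calibrated so that the factor $\tfrac12$ appearing in this exponent returns us to order $p$; the time rescaling $\tau=\sqrt{t/2}$ is already absorbed into the ``as a function of $t$'' form of \cref{lem:adder}. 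The same invocation costs at most $1.07 \cdot 30^q$ calls to $\textrm{POWER}(k/2,\cdot,2p)$, with $q = \max(\lceil (2p-1)/2 \rceil, 1)$.

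It remains to total the $\mathcal{S}_1$ count. Composing this per-step overhead with the inductive bound for the subroutine gives at most $1.07 \cdot 30^q \cdot 6^{\log_2 k - 1}\cdot 420^{kp/2}$ operations, so one must verify this is $\le 6^{\log_2 k}\cdot 420^{kp/2}$. Treating $420^{kp/2}$ as a black box would demand $1.07\cdot 30^q \le 6$, which is false, so instead one unrolls the full recursion: the order parameter doubles at each of the $\log_2 k$ levels, the $\textrm{ADD}$ at depth $j$ below the top is invoked with inputs of order $2^{j+1}p$ and uses $q_j = \max(\lceil (2^{j+1}p-1)/2\rceil,1)$, the accumulated $\textrm{ADD}$ overhead is $\prod_{j=0}^{\log_2 k-1} 1.07\cdot 30^{q_j}$ with $\sum_j q_j$ of order $(k-1)p$, and one bounds this product (together with the one base-case $\mathcal{S}_1$ call) against $6^{\log_2 k}\cdot 420^{kp/2}$. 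This last computation is the remaining arithmetic.

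I expect that bookkeeping --- simultaneously propagating the error exponent and the $\mathcal{S}_1$-count through a recursion whose required \emph{order} parameter doubles at every level, so that the subroutine at depth $j$ carries a correspondingly larger internal $\textrm{BCH}$/$\textrm{Trotter}$ cost --- to be the main obstacle. Correctness (that $\textrm{ADD}$ on $\widetilde{\mathcal{T}}_{k/2}$ produces $\widetilde{\mathcal{T}}_k$) and the error-scaling estimate both follow essentially mechanically from \cref{lem:adder}, once the order-doubling is recognized as cancelling the factor-$\tfrac12$ loss in its error exponent.
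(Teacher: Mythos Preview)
Your inductive structure and the error-bound argument are exactly what the paper does: induct on $\ell$, invoke \cref{lem:adder} for the step, and absorb the order-doubling $p'\!=\!2p$ against the factor $\tfrac12$ in the output exponent. That part is fine.

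The gap is in the operation count. If you anchor at $k=1$ and use the generic \textrm{ADD} bound $1.07\cdot 30^{q}$ at every level, then with $q_j\le 2^{j}p$ you get
\[
\prod_{j=0}^{\log_2 k-1} 1.07\cdot 30^{q_j}\ \le\ 1.07^{\log_2 k}\cdot 30^{(k-1)p},
\]
and this does \emph{not} fit under $6^{\log_2 k}\cdot 420^{kp/2}$: since $\sqrt{420}\approx 20.5<30$, the ratio $30^{(k-1)p}/420^{kp/2}$ grows like $(30/\sqrt{420})^{kp}=1.46^{kp}$, which overwhelms the polynomial slack $(6/1.07)^{\log_2 k}$ once $k$ is large. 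So the ``remaining arithmetic'' you defer cannot be completed as stated.

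The paper closes this by taking the base case at $k=2$ and invoking \cref{corr:second_order}, which gives $\mathrm{COST}(2,p')\le 6\cdot 14^{p'}$ rather than the generic $1.07\cdot 30^{p'}$ that \textrm{ADD} would yield. The savings come from the fact that when the inputs are exact ($\mathcal{S}_1$ has no error), only the intrinsic BCH and Trotter errors matter, and one can use a Trotter formula of order $\lceil p/2\rceil$ instead of order $p$; this replaces $5^{p}\cdot 6^{p}\sim 30^{p}$ by $5^{p/2}\cdot 6^{p}\sim 14^{p}$. Unrolling then gives $6^{\log_2 k}\cdot 30^{(k/2-1)p}\cdot 14^{(k/2)p}\le 6^{\log_2 k}\cdot 420^{kp/2}$, which is where the $420=30\cdot 14$ comes from. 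Your argument goes through once you swap in \cref{corr:second_order} for the bottom level.
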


To bound the error of this algorithm, we begin by identifying the implementation error of the second-order formula, i.e.~$\widetilde{\mathcal{S}}_2$, the first operator with implementation error:
\begin{lemma}[Implementing second-order block encodings]\label{corr:second_order}
Suppose we can implement the following operation without error (as defined in~\Cref{defn:SX} and subject to a bosonic cutoff)
\begin{align}
\mathcal{S}_1(t) = \exp \left( it 
\begin{bmatrix}
    0 & a^\dagger
    \\
    a & 0
\end{bmatrix} \right).
\end{align}
Then, we can approximate $\mathcal{S}_2(t) $ to the $p^\text{th}$ order, i.e.~implement $\widetilde{\mathcal{S}}_2$ such that
\begin{align}
    \norm{\widetilde{\mathcal{S}}_{2, p}(t) - \mathcal{S}_2(t)} \in \mathcal{O}((\Lambda t)^{p + \frac{1}{2}}),
\end{align}
using no more than $6 \cdot 14^p$ $\mathcal{S}_1(t)$ operations.
\end{lemma}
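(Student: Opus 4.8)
The plan is to read the statement as the base case $k_l=k_r=1$, $A=B=a$ of the $\textrm{ADD}$ construction of \Cref{alg:adder}, except that here the primitive $\mathcal{S}_1$ is taken to be exact, so instead of inheriting an order from an approximate input we are free to \emph{choose} the internal BCH and Trotter orders. First I would reproduce the algebra. Writing $G=\begin{bmatrix}0&a^\dagger\\ a&0\end{bmatrix}$ so that $\mathcal{S}_1(\tau)=e^{i\tau G}$ (\Cref{defn:SX}), the single-qubit conjugates $X\mathcal{S}_1(\tau)X$ and $S\mathcal{S}_1(\tau)S^\dagger$ cost no additional $\mathcal{S}_1$ calls, and the leading-order BCH commutators of the Fock-methods subsection (specialized to $A=a$) are
\begin{align}
\bch_{q,1}\big(\mathcal{S}_1(\tau),\,X\mathcal{S}_1(\tau)X\big)&\approx \exp\big(\tau^2(a^2-(a^\dagger)^2)\sigma^z\big),\\
\bch_{q,1}\big(S\mathcal{S}_1(\tau)S^\dagger,\,X\mathcal{S}_1(\tau)X\big)&\approx \exp\big(i\tau^2(a^2+(a^\dagger)^2)\sigma^z\big).
\end{align}
Conjugating the first by $SH(\cdot)HS^\dagger$ and the second by $H(\cdot)H$ rotates $\sigma^z$ into $\sigma^y$ and $\sigma^x$; the sum of the two generators is exactly $2i\tau^2\begin{bmatrix}0&(a^\dagger)^2\\ a^2&0\end{bmatrix}$, so a Trotter step combines them, and the choice $\tau=\sqrt{t/2}$ yields $\mathcal{S}_2(t)$ in the error-free limit.

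Next I would insert the quantitative bounds. By \Cref{fact:bch} (equivalently the zero-error case of \Cref{lem:bch-with-approx}) each order-$q$ BCH above has error $\mathcal{O}((\gamma\tau)^{2q+1})$ with $\gamma=\norm{G}=\norm{a^\dagger}\le\sqrt{\Lambda}$ on the truncated space, i.e.\ $\mathcal{O}((\Lambda t)^{q+1/2})$ after $\tau=\sqrt{t/2}$, using $8\cdot 6^{q-1}$ copies of $\mathcal{S}_1$. Since \Cref{fact:norm-blockencodedxt} gives $\norm{\begin{bmatrix}0&(a^\dagger)^2\\ a^2&0\end{bmatrix}}\le\Lambda$, the order-$s$ Trotter step (\Cref{corr:trotter-r}, or \Cref{lem:trotter-with-approx}) contributes $\mathcal{O}((\Lambda t)^{2s+1})$, plus the BCH errors accumulated linearly over its $\le 4\cdot 5^{s-1}$ slots (as in the proofs of \Cref{lem:bch-with-approx} and \Cref{lem:trotter-with-approx}). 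Taking $q=\lceil p\rceil$ and $s=\max(\lceil(2p-1)/4\rceil,1)$ makes every contribution $\mathcal{O}((\Lambda t)^{p+1/2})$, and the total number of $\mathcal{S}_1$ invocations is at most $4\cdot 5^{s-1}\cdot 8\cdot 6^{q-1}\le 32\cdot 5^{p/2-1/4}\,6^{p-1}=\tfrac{32}{6\cdot 5^{1/4}}(6\sqrt5)^p\le 6\cdot 14^p$.

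The main obstacle is the constant in the gate count, not the error exponent. Running $\textrm{ADD}$ literally with $q=s$, as the pseudocode of \Cref{alg:adder} ties them, would only give $\Theta(30^p)$ calls; the improvement to base $6\sqrt5<14$ hinges on noticing that the Trotter order can be taken roughly half the BCH order --- the Trotter error sits at $(\Lambda t)^{2s+1}$ against the BCH error at $(\Lambda t)^{q+1/2}$, yet the target only demands $(\Lambda t)^{p+1/2}$. The rest is bookkeeping: the half-integer exponent traces entirely to $\norm{a^\dagger}\le\sqrt{\Lambda}$ (not $\Lambda$) and to $\tau=\sqrt{t/2}$ doubling the effective power of $t$, and one must check the crude inequality $\tfrac{32}{6\cdot 5^{1/4}}(6\sqrt5)^p\le 6\cdot 14^p$ holds for every integer $p\ge 1$.
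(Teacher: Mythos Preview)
Your proposal is correct and follows essentially the same route as the paper. The paper likewise specializes the \textrm{ADD} construction to $A=B=a$ with the exact primitive $\mathcal{S}_1$, takes BCH order $p$ and Trotter order $\lceil p/2\rceil$ (your $q=\lceil p\rceil$ and $s=\lceil(2p-1)/4\rceil$ coincide with these for integer $p$), invokes $\norm{a^\dagger}\le\sqrt{\Lambda}$ and $\tau=\sqrt{t/2}$ to obtain the $(\Lambda t)^{p+1/2}$ error, and bounds the gate count by $4\cdot 5^{p/2}\cdot 8\cdot 6^{p-1}\le 6\cdot 14^p$ --- the same arithmetic you carry out, including the key observation that the Trotter order can be halved relative to the BCH order.
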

\begin{proof}
Note that, if $\mathcal{S}_1(t)$ is errorless, then we only need to account for error incurred by the BCH and Trotter formulas. By employing a $p^\text{th}$ order BCH formula we can produce commutator exponentials with error $\mathcal{O}((\Lambda^{1/2} \tau)^{2p + 1})$ by \cref{fact:bch} and \cref{fact:norm-blockencodedxt}.

We again set $\tau = \sqrt{\frac{t}{2}}$ so that the error scales in at worst $\mathcal{O}((\Lambda t)^{p + \frac{1}{2}})$. Then, we apply a Trotter formula \cref{lem:trotter-with-approx} of order $\ceil{\frac{p}{2}}$ so that
\begin{align}
    \norm{\exp \left(i t \begin{bmatrix}
        0 & ( a^\dagger)^2 \\
        ( a )^2 & 0
    \end{bmatrix} \right) - \widetilde{\mathcal{S}}_2(t)} \in \mathcal{O}((Ct)^{p + \frac{1}{2}}),
\end{align}
with $C \leq \max( \Lambda, \Lambda^{2/2}) = \Lambda $. Our worst case error scaling is then $\mathcal{O}((\Lambda t)^{p + \frac{1}{2}})$. This requires no more than $2 \cdot 2 \cdot 5^{\ceil{ \frac{p}{2}} - 1}$ of the commutators, each of which required $8 \cdot 6^{p - 1}$ first order operations. Thus, the cost scales in no more than
\begin{align}
    4 \cdot 5^{p / 2} \cdot 8 \cdot 6^{p - 1} \leq 6 \cdot 14^p
\end{align}
total number of $\mathcal{S}_1$ operations.
\end{proof}

This base case allows us to analyze the performance of \cref{alg:power_of_two}:

\begin{proof}[Proof of \cref{lem:key-scaling}]
We demonstrate the bounds inductively. The base case ($\ell = 1$) holds via \cref{corr:second_order}. For the inductive hypothesis, we assume that, for any $p' \geq 1$ and $k = 2^\ell$, we may implement $\textrm{POWER}(k, t, p')$ as
\begin{align}
    \norm{ {\rm POWER}(k,t, p') - \exp \left( i t \begin{bmatrix}
    0 & ( a^\dagger )^{k} \\
    ( a )^{k}  & 0
    \end{bmatrix} \right) } \in \mathcal{O}((\Lambda^{k / 2} t)^{p'}).
\end{align}
To demonstrate the inductive step, we seek to apply \cref{lem:adder} directly to the implementable operators from the inductive hypothesis. Thus, we set $p' = 2p$ so that
\begin{align}
    \norm{ {\rm POWER}(2k,t, p) - \exp \left( i t \begin{bmatrix}
    0 & (a^\dagger)^{2k} \\
    ( a )^{2k}  & 0
    \end{bmatrix} \right) } \in \mathcal{O}((\Lambda^{2k / 2} t)^{p' / 2}) = \mathcal{O}((\Lambda^{2k/2} t)^{p}),
\end{align}
our desired error scaling. By \cref{lem:adder}, we require an adder of order $\max(\ceil{\frac{p' - 1}{2}}, 1) \leq p + \frac{1}{2}$. Thus, the adder requires $1.07 \cdot 30^{p + 1/2} \leq 6 \cdot 30^p$ of the $\textrm{POWER}(k, t, p')$ operations, namely
\begin{align}
    \textrm{COST}(2k, p) &\leq 6 \cdot 30^p \cdot \textrm{COST}(k, 2p)  \\
    &\leq 6 \cdot 30^p \cdot 6 \cdot 30^{2p} \cdot \textrm{COST}(k / 2, 4p) \\
    &\leq \prod_{j = 1}^{n} 6 \cdot 30^{2^{j - 1}p} \cdot \textrm{COST}(2k / 2^n, 2^n p) \\
    &\leq 6^\ell \cdot 30^{kp} \cdot \textrm{COST}(2, k p).
\end{align}
Since $\textrm{COST}(2, kp) \leq 6 \cdot 14^{kp}$ by \cref{corr:second_order}, the number of $\mathcal{S}_1$ operations is upper bounded by
\begin{align}
    \textrm{COST}(2k, p) \leq  6^{\log_2 k + 1} \cdot 420^{k p} \implies \textrm{COST}(k, p) \leq 6^{\log_2 k} \cdot 420^{kp / 2}.
\end{align}
\end{proof}

Together, the POWER and ADD algorithms allow us to approximate arbitrary orders. We describe a recursive algorithm below to construct any order $k \geq 1$:
\begin{algorithm}[H]
\caption{ARB\_POWER($k, t, p, l, r$) that produces $\mathcal{T}_k(t)$ for any $k \geq 1$}\label{alg:arb_power}
\begin{algorithmic}
\Require $k > 0$ and has the binary representation $k = k_n k_{n-1} ... k_1$.

\If{$r - l = 0$}
    \If{$k_r = 1$}
        \State \Return $\textrm{POWER}(H, 2^r, t, p)$
    \Else
        \State \Return $RX(t) \kron \identity$
    \EndIf
\Else
    \State \Return $\textrm{ADD}(\textrm{ARB\_POWER}(k, \sqrt{t/2}, p, l, \floor{\frac{r - l}{2}} + l), \textrm{ARB\_POWER}(k, \sqrt{t/2}, p, l + \floor{\frac{r - l}{2}} + 1, r) $
\EndIf
\end{algorithmic}
\end{algorithm}

\begin{theorem}\label{thm:arb_calc}
Assuming that the $\mathcal{S}_1$ operator can be implemented without error, the \cref{alg:arb_power} produces a series of gates $\{ \mathcal{S}_1(t_i(t)) \}$ such that
\begin{align}
    \norm{\prod_i \mathcal{S}_1(t_i(t)) - \mathcal{T}_k(t)} \in \mathcal{O}((\Lambda^{k/2} t)^{p}),
\end{align}
where $\mathcal{T}_k$ is our target operator and we have order $k > 0$. The number of $\mathcal{S}_1$ gates required is no more than $n^{1.6} 30^{np} 420^{n^2 p / 2} 6^{\log_2 n + 1}$.
\end{theorem}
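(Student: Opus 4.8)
The plan is to prove the error bound and the $\mathcal{S}_1$-count simultaneously by structural induction on the recursion tree unfolded by \cref{alg:arb_power}, using \cref{lem:key-scaling} (the error and cost of the $\textrm{POWER}$ subroutine, \cref{alg:power_of_two}) at the leaves and \cref{thm:general-adder-error} together with its creation-operator form \cref{lem:adder} (the error and cost of the $\textrm{ADD}$ subroutine, \cref{alg:adder}) at the internal nodes. Before any error analysis I would check correctness of the idealized (error-free) recursion: since $a^\dagger$ commutes with itself, the operator associated with any contiguous block of bit positions $[l,r]$ is $(a^\dagger)^{\sum_{i=l}^{r} k_i 2^{i}}$, a product of pairwise-commuting creation-operator powers, so $[A,B]=0$ holds at every $\textrm{ADD}$ node and the whole tree computes $(a^\dagger)^{k}$. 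A vanishing bit $k_r=0$ is represented by the block encoding $\mathcal{B}_{\identity}$ of the identity (the $RX(t)\kron\identity$ leaf), which is the multiplicative unit under $\textrm{ADD}$. Since $\mathcal{T}_k(t)=\mathcal{B}_{(a^\dagger)^k}(t)$ and every primitive emitted is, after conjugation by free single-qubit Clifford and phase gates, a copy of $\mathcal{S}_1$ at a rescaled time, the output genuinely has the advertised form $\prod_i \mathcal{S}_1(t_i(t))$.

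For the error bound I would induct on the width $w=r-l+1$ of the bit-range handed to \cref{alg:arb_power}. In the base case $w=1$ the operator is either $\mathcal{S}_1$-free (a $0$ bit, contributing no error) or $\textrm{POWER}(2^{r},t',p')$, with error $\mathcal{O}((\Lambda^{2^{r}/2}\,t')^{p'})$ by \cref{lem:key-scaling}. For the inductive step at an internal node I would apply \cref{thm:general-adder-error} with $A,B$ the commuting creation-operator powers built by the two subtrees, feeding in the subtree error bounds from the inductive hypothesis; the rescaling $t\mapsto\sqrt{t/2}$ hard-coded in the recursive calls is exactly what converts an order-$p'$ child error at the rescaled time into an order-$(p'/2)$ parent error at the original time, as in the proof of \cref{lem:key-scaling}, while the leading constant stays bounded by a power of $\Lambda^{k/2}$ at every level via \cref{fact:norm-blockencodedxt}. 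Summing the intrinsic $\textrm{BCH}$/$\textrm{Trotter}$ error and the implementation error across the tree by the triangle inequality and the linear accumulation of gate errors (Box~4.1 of \cite{nielsen_chuang_2010}) yields $\mathcal{O}((\Lambda^{k/2}t)^{p})$, provided the leaves are seeded with an order large enough that the $\lceil\log_2 n\rceil$ successive halvings leave the root at order at least $p$.

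The gate count then follows by multiplying the per-node contributions along the tree: each of the at most $n$ leaves costs at most the $6^{\log_2(2^{r})}\cdot 420^{2^{r}p'/2}$ of \cref{lem:key-scaling} (with $\textrm{POWER}$'s base case $6\cdot 14^{p'}$ coming from \cref{corr:second_order}), each of the at most $n$ internal $\textrm{ADD}$ nodes costs at most the $1.07\cdot 30^{q}$ of \cref{thm:general-adder-error}, and the depth-$\lceil\log_2 n\rceil$ balanced branching contributes the $6^{\log_2 n+1}$ factor. Collecting these geometric factors and substituting the seeded leaf order (a fixed multiple of $p$ fixed by the tree depth) gives the stated $n^{1.6}\,30^{np}\,420^{n^{2}p/2}\,6^{\log_2 n+1}$.

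I expect the main obstacle to be exactly the order/constant bookkeeping in the inductive step: $\textrm{ADD}$ degrades the effective error order by a factor of two at every level whereas $\textrm{POWER}$ does not, so the orders passed down the tree must be chosen so that, after the $\log_2 n$-fold compounding, the root error exponent still reaches $p$ while simultaneously the leaf costs remain inside the $420^{n^2 p/2}$ envelope rather than growing with the magnitude of $k$; reconciling these two competing constraints (and tracking the explicit base $420$) is where the genuine work lies. The commutation checks, the identity-bit handling, and the assembly of the geometric series for the $\mathcal{S}_1$ count are then routine once the per-level orders and the tree depth are pinned down.
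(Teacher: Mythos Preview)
Your proposal is correct and follows essentially the same approach as the paper: structural induction on the balanced recursion tree of \cref{alg:arb_power}, invoking \cref{lem:key-scaling} at the leaves and \cref{lem:adder} at the internal nodes, seeding the leaf orders at $np$ so that the $\log_2 n$ successive halvings from \textrm{ADD} leave the root at order $p$, and then multiplying the per-level $30^{q}$ \textrm{ADD} factors against the per-leaf $6^{\log_2 j}\cdot 420^{j\cdot np/2}$ \textrm{POWER} costs. The paper's write-up simply specializes to the worst case (all bits set, $n$ a power of two) and works top-down rather than framing the induction explicitly, but the content and the bookkeeping you flag as the crux are identical.
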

\begin{proof}
We demonstrate this constructively on the worst case scenario where $k = k_n k_{n - 1} ... k_1$ and $k_n, k_{n-1}..., k_1 = 1$. Without loss of generality, we assume $n = 2^{\ell}$ for some integer $\ell$. This is because, if $n$ not a power of two, we can simply pad the leading digits with zeros to achieve a balanced binary tree.

The proof is as follows: We first identify the error scaling necessary for each leaf node of the binary tree so that the overall formula has our desired order, and we then perform the cost accounting and estimate the number of $\mathcal{S}_1$ operations required.

To achieve an error scaling of $\mathcal{O}((\Lambda^{2^n / 2} t)^p)$, the two terms being `added' below must have error scaling of order at worst $\mathcal{O}((\Lambda^{2^n / 4} t)^{2p})$ by \cref{lem:adder}, and so on for each subsequent layer. Thus, each of the leaf $\textrm{POWER}$ terms must have order at least $\mathcal{O}((\Lambda^{2^n / 2^{1 + \log_2 n} } t)^{2^{\log_2 n} p }) = \mathcal{O}(( \Lambda^{2^n / 2n} t)^{np})$.

Now, we compute the cost incurred by the formula. We begin by counting the number of times $\textrm{POWER}$ is used, then accounting for the number of $\mathcal{S}_1$ required to implement each $\textrm{POWER}$. Note that each $\textrm{POWER}$ term will be used proportionally to the number of $\textrm{ADD}$ operations necessary, so we can compute the cost of implementing each $\textrm{POWER}$ operation for the $i^\text{th}$ digit, i.e.~the $\textrm{POWER}$ operation has degree $j = 2^i$. We thus bound the number of $\mathcal{S}_1$ operations required to implement this $np^\text{th}$-ordered operator as
\begin{align}
    6^{\log_2 j} \cdot 420^{j n p / 2} = 6^{i} \cdot 420^{2^i n p / 2}.
\end{align}
Finally, we seek to bound the number of times each $\textrm{POWER}$ operator is used through the $\textrm{ADD}$ algorithm. Recall from \cref{lem:adder} that each $\textrm{ADD}$ operation requires at most $1.07 \cdot 30^q$ of the constituent operators, where $q = \max ( \ceil{\frac{\min(p_l, p_r) - 1}{2}}, 1)$ and $p_l, p_r$ are the orders of the underlying operators. By assuming symmetry of the Trotter formula for $\textrm{ADD}$, each addition requires $\frac{1}{2} 1.07 \cdot 30^q$ of the underlying operator. Thus, we can obtain a bound on the number of applications required of each fundamental $\textrm{POWER}$ operator as
\begin{align}
    \prod_{s = 1}^{\log_2 n} \frac{1}{2} 1.07 \cdot 30^{ 2^{s - 1}p + 1/2} \leq 3^{\log_2 n} 30^{np} \leq n^{1.6} 30^{np}
\end{align}
because the $s^\text{th}$ layer of $\textrm{ADD}$ requires constituent operators of order $2^s p$, so $q \leq 2^{s - 1}p + \frac{1}{2}$. 

Finally, considering the total cost by adding up the cost of the individual $\textrm{POWER}$ operators multiplied by the number of applications required yields
\begin{align}
    \sum_{i = 1}^{\log_2 n} n^{1.6} 30^{np} \cdot 6^{i} \cdot 420^{2^{i} np / 2} &\leq n^{1.6} 30^{np} 420^{n^2 p / 2} \sum_{i = 1}^{\log_2 n} 6^i \\
    &\leq n^{1.6} 30^{np} 420^{n^2 p / 2} 6^{\log_2 n + 1},
\end{align}
as desired.
\end{proof}

Now, we seek to finalize the number of operations required in terms of $\epsilon$. Recognize that we may use timeslicing to reduce the error arbitrarily. Note that:
\begin{lemma}\label{r-scaling}
Suppose we may implement $\widetilde{\mathcal{T}}_{k, p}(t)$, an approximation of $\mathcal{T}_k(t)$ with $p > 1$ such that
\begin{align}
    \norm{\mathcal{T}_k(t) - \widetilde{\mathcal{T}}_{k, p}(t)} \in \mathcal{O}((\Lambda^{k / 2} t)^p).
\end{align}
Then, by timeslicing the approximation, we can produce $\widetilde{\mathcal{T}}_{k, p}^r(t)$ where
\begin{align}
    \norm{\mathcal{T}_k(t) - \widetilde{\mathcal{T}}_{k, p}^r(t)} \leq \epsilon,
\end{align}
where $\widetilde{\mathcal{T}}_k^r(t)$ requires $r \in \Theta \left( \frac{(\Lambda^{k/2}t)^{1 + 1/(p - 1)}}{\epsilon^{1 / (p - 1)}} \right)$ applications of the $\widetilde{\mathcal{T}}_k(t)$ operator.
\end{lemma}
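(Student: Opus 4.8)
The plan is to use the exact factorization of the target exponential across time slices, combined with the hypothesized per-step error bound. Writing $M \coloneqq \begin{bmatrix} 0 & (a^\dagger)^k \\ a^k & 0 \end{bmatrix}$ (implicitly projected to the cutoff $\Lambda$), the target is $\mathcal{T}_k(t) = e^{itM}$, and since this is a genuine one-parameter group we have $\mathcal{T}_k(t) = \mathcal{T}_k(t/r)^r$ for every positive integer $r$. I would then \emph{define} the timesliced implementation as $\widetilde{\mathcal{T}}_{k,p}^r(t) \coloneqq \widetilde{\mathcal{T}}_{k,p}(t/r)^r$; this is a legitimate implementable unitary because the construction of $\widetilde{\mathcal{T}}_{k,p}$ (via \cref{alg:arb_power} and \cref{thm:arb_calc}) accepts an arbitrary timestep argument, and we simply feed it $t/r$ in place of $t$ and repeat $r$ times.

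First I would invoke the hypothesis at timestep $t/r$, giving $\norm{\mathcal{T}_k(t/r) - \widetilde{\mathcal{T}}_{k,p}(t/r)} \in \mathcal{O}\!\left((\Lambda^{k/2} t/r)^p\right)$, noting that the implicit constant depends only on $\Lambda$, $k$ and the fixed order $p$ and not on $r$, and that taking $r$ large only pushes $t/r$ further into the small-argument regime where the bound is intended to hold. Next I would propagate the error through the $r$-fold product using the standard subadditivity of error for products of unitaries (Box 4.1 of Nielsen and Chuang \cite{nielsen_chuang_2010}), namely $\norm{U^r - V^r} \le r\,\norm{U - V}$ for unitary $U, V$. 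This yields
\begin{align}
    \norm{\mathcal{T}_k(t) - \widetilde{\mathcal{T}}_{k,p}^r(t)} = \norm{\mathcal{T}_k(t/r)^r - \widetilde{\mathcal{T}}_{k,p}(t/r)^r} \le r\,\norm{\mathcal{T}_k(t/r) - \widetilde{\mathcal{T}}_{k,p}(t/r)} \in \mathcal{O}\!\left(\frac{(\Lambda^{k/2} t)^p}{r^{p-1}}\right),
\end{align}
where the key cancellation is $r \cdot (1/r)^p = 1/r^{p-1}$, which produces a nontrivial gain precisely because $p > 1$.

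Finally I would solve the resulting inequality for $r$: requiring $(\Lambda^{k/2} t)^p / r^{p-1} \le \epsilon$ (up to the hidden constant) gives $r^{p-1} \gtrsim (\Lambda^{k/2} t)^p / \epsilon$, so it suffices — and, up to constants, is necessary — to take
\begin{align}
    r \in \Theta\!\left(\left(\frac{(\Lambda^{k/2} t)^p}{\epsilon}\right)^{\!1/(p-1)}\right) = \Theta\!\left(\frac{(\Lambda^{k/2} t)^{1 + 1/(p-1)}}{\epsilon^{1/(p-1)}}\right),
\end{align}
rounding up to an integer as needed. This is exactly the claimed scaling. The argument is essentially routine; the only point that warrants care — rather than effort — is confirming that the constant in the per-slice $\mathcal{O}(\cdot)$ is $r$-independent, so that the linear error accumulation honestly converts the per-slice $(t/r)^p$ into the overall $1/r^{p-1}$ improvement.
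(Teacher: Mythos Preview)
Your proposal is correct and follows essentially the same approach as the paper: define the timesliced operator as $\widetilde{\mathcal{T}}_{k,p}(t/r)^r$, bound the per-slice error by the hypothesis, accumulate errors linearly across the $r$ slices to obtain $\mathcal{O}\!\left((\Lambda^{k/2}t)^p/r^{p-1}\right)$, and solve for $r$. The only cosmetic difference is that the paper carries out the linear accumulation via an explicit first-order expansion of the $r$-fold product, whereas you invoke Box~4.1 of Nielsen--Chuang directly; your route is cleaner but the content is the same.
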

\begin{proof}
We define the timeslicing of $\widetilde{T}_k(t)$ as applying $\widetilde{\mathcal{T}}_k(t/r)$ operator $r$ times
\begin{align}
    \widetilde{\mathcal{T}}_{k, p}^r(t) = \widetilde{\mathcal{T}}_{k, p}(t / r)^r.
\end{align}
To find a Taylor expansion for $\widetilde{\mathcal{T}}_{k, p}(t / r)^r$, note the explicit form for $\widetilde{\mathcal{T}}_{k, p}(t)$
\begin{align}
    \widetilde{\mathcal{T}}_{k, p}(t) = e^{i A_k t} + \Delta(t) (\Lambda^{k /2} t)^p,
\end{align}
where $\norm{\Delta(t)} \in \mathcal{O}(1)$. This allows us to express the refined operator as
\begin{align}
    \widetilde{\mathcal{T}}_k(t/r)^r &= \left( e^{i A_k t / r} + \Delta\left( \frac{t}{r}\right) \left( \frac{t^{p}}{r^{p}} \right) \right)^r \\
    &= e^{i A_k t} + \left[ \sum_{j = 1}^{r - 1} (e^{i A_k t / r})^{j} \Delta\left( \frac{t}{r} \right) (e^{i A_k t / r})^{r - 1 - j} \right] \left( \frac{(\Lambda^{k/2}t)^{p}}{r^{p}} \right) + \mathcal{O} \left(\left( \frac{\Lambda^{k/2} t}{r} \right)^{p + 1} \right),
\end{align}
Thus, when we analyze the error,
\begin{align}
    \norm{\mathcal{T}_k(t/r)^r - e^{i A_k t}} &\leq \mathcal{O} \left( r \norm{(e^{i A_k t /r})^{r - 1}} \norm{\Delta\left( \frac{t}{r} \right)} \left(  \frac{(\Lambda^{k/2}t)^{p}}{r^{p}} \right) \right) \nonumber\\
    &\subset \mathcal{O} \left( \frac{(\Lambda^{k/2}t)^{p}}{r^{p - 1}} \right).
\end{align}
To bound the implementation error by $\epsilon$, i.e.~$\epsilon \in \mathcal{O} \left(  \frac{(\Lambda^{k/2}t)^{p}}{r^{p - 1}} \right)$, we should select $r$ as
\begin{align}
    r \in \Theta \left( \frac{(\Lambda^{k/2}t)^{p / (p - 1)}}{\epsilon^{1 / (p - 1)}} \right) = \Theta \left( \frac{(\Lambda^{k/2}t)^{1 + 1/(p - 1)}}{\epsilon^{1 / (p - 1)}} \right),
\end{align}
as desired.
\end{proof}

Finally, we demonstrate our original theorem statement, which allows us to create a bound on the number of $\mathcal{S}_1$ operations necessary to achieve an arbitrarily ordered operator:
\begin{proof}[Proof of \cref{thm:main}]
By \cref{thm:arb_calc}, we can perform a single Trotter step of timestep $\frac{t}{r}$ using $n^{1.6} 30^{np} 420^{n^2 p / 2} 6^{\log_2 n + 1}$ $\mathcal{S}_1$ operations. Thus, the total number of $\mathcal{S}_1$ operations required scales in
\begin{align}
    r \cdot n^{1.6} 30^{np} 420^{n^2 p / 2} 6^{\log_2 n + 1} ,
\end{align}
where, by \cref{r-scaling}, it is sufficient to set $r \in \Theta \left( \frac{(\Lambda^{k/2}t)^{1 + 1/(p - 1)}}{\epsilon^{1 / (p - 1)}} \right)$. 

\end{proof}

\subsection{Generation of nonlinear Hamiltonians}\label{apndx:error-jc}

\resultjc

\begin{proof}
We first show that the two Hamiltonian terms are implementable separately. Then, via Trotter, we combine them and perform an error analysis. In particular, we hope to embed the Hamiltonian such that we approximate the  operator
\begin{align}
    \exp it \begin{bmatrix}
        H & 0 \\
        0 & \cdot
    \end{bmatrix},
\end{align}
i.e., where the Hamiltonian is embedded in the upper left hand block. Thus, when applied to a system with the $\ket{0}$ qubit, this amounts to implementing $\exp i t H $ on the mode.  

We begin by embedding the $a^\dagger a$ term. Notice that $a^\dagger a$ Hermitian; thus, we can apply \cref{alg:mult} on the $a, a^\dagger$ block encodings. By the error analysis in \cref{fact:bch} and the bound on the norm from \cref{fact:norm-blockencodedxt},
\begin{align}
    \norm{\textrm{MULT} \left(i \tau \mathcal{B}_{a^\dagger}, i\tau \mathcal{B}_{a} \right) - \exp 2 i \tau^2 \begin{bmatrix}
         a^\dagger   a  & 0 \\
        0 &  - a  a^\dagger  
    \end{bmatrix} } \in \mathcal{O}((\Lambda^{1/2} \tau)^{2q + 1}).
\end{align}
By setting $\tau = \sqrt{\frac{\omega t}{2}}$, we can achieve the $a^\dagger a$ term:
\begin{align}
    \norm{\textrm{MULT} \left(i \sqrt{\frac{\omega t}{2}} \mathcal{B}_{a^\dagger}, i\sqrt{\frac{\omega t}{2}} \mathcal{B}_{a} \right) - \exp i t \omega \begin{bmatrix}
         a^\dagger   a  & 0 \\
        0 &  - a  a^\dagger  
    \end{bmatrix} },
\end{align}
requiring $8 \cdot 6^{q - 1}$ total $\exp i t \mathcal{B}_{a^\dagger}$ operations (which can be reduced to $\mathcal{S}_1$ operations via conjugation). 

We now tackle the second order term. Recall that we can block-encode $( a^\dagger )^2$ and $ ( a )^2$ via \cref{alg:power_of_two}. We then apply \cref{alg:mult} to $(a^\dagger)^2, a^2$ to yield the desired upper-left block encoding. Namely, we can implement $\exp i t \mathcal{B}_{(a^\dagger)^2}$ with the following error scaling:
\begin{align}
    \norm{\exp it \widetilde{\mathcal{B}}_{(a^\dagger)^2} - \exp i \tau \mathcal{B}_{(a^\dagger)^2}} \in \mathcal{O}((\Lambda \tau)^{p + \frac{1}{2}}),
\end{align}
using no more than $6 \cdot 14^p$ $\mathcal{S}_1$ operations, we can apply \cref{alg:mult} to find
\begin{align}
    \norm{\textrm{MULT} (i\tau \mathcal{B}_{(a^\dagger)^2}, i \tau \mathcal{B}_{(a)^2} ) - \exp 2 i \tau^2 \begin{bmatrix}
        ( a^\dagger )^2 ( a )^2 & 0 \\
        0 & ( a )^2 ( a^\dagger )^2
    \end{bmatrix} } \in \mathcal{O}((\Lambda^2 \tau)^{p + \frac{1}{2}}),
\end{align}
by setting $\ell = \ceil{\frac{p - \frac{1}{2}}{2}} \leq \frac{p}{2} + 1 $ and thus using $8 \cdot 6^{\ell - 1}$ exponentials. When $\tau = \sqrt{\frac{\kappa t}{4}}$,
\begin{align}
    \norm{\textrm{MULT} \left(i \sqrt{\frac{\kappa t}{4}} \mathcal{B}_{(a^\dagger)^2}, i \sqrt{\frac{\kappa t}{4}} \mathcal{B}_{(a)^2} \right) - \exp i t \frac{\kappa}{2} \begin{bmatrix}
        ( a^\dagger )^2 ( a )^2 & 0 \\
        0 & ( a )^2 ( a^\dagger )^2
    \end{bmatrix} } \in \mathcal{O}( (\Lambda^4 \kappa t)^{\frac{p}{2} + \frac{1}{4}}),
\end{align}
using no more than $8 \cdot 6^{\ell - 1} \cdot 6 \cdot 14^p \leq 48 \cdot 6^{p / 2} \cdot 14^{p} \leq 48 \cdot 35^p$ total $\mathcal{S}_1$ operators. Then, we may set $p = 2q$ so that, given no more than $48 \cdot 35^{2q}$ total $\mathcal{S}_1$ operations, we can implement the BCH formula with error scaling $\mathcal{O}((\Lambda^4 \kappa t)^{q + \frac{1}{4}})$.

Finally, we apply the Trotter formula to the two subterms via \cref{lem:trotter-with-approx}. Define the approximate matrix exponentials as follows:
\begin{align}
    \textrm{FIRST} &\coloneqq \log \textrm{MULT} \left(i \sqrt{\frac{\omega t}{2}} \mathcal{B}_{a^\dagger}, i\sqrt{\frac{\omega t}{2}} \mathcal{B}_{a} \right) \\
    \textrm{SECOND} &\coloneqq \log \textrm{MULT} \left(i \sqrt{\frac{\kappa t}{4}} \mathcal{B}_{(a^\dagger)^2}, i \sqrt{\frac{\kappa t}{4}} \mathcal{B}_{(a)^2} \right), 
\end{align}
so that,
\begin{align}
    \norm{\trotter_{2s}\left(\textrm{FIRST}, \textrm{SECOND}\right)  - \exp i t \begin{bmatrix}
        H & 0 \\
        0 & \cdot
    \end{bmatrix} } \in \mathcal{O}((\Lambda^4 \max( \omega, \kappa) t)^{q + \frac{1}{4}})
\end{align}
where the constant factor can be obtained by observing the Hamiltonian norm is bounded via the triangle inequality. Now, by setting $s = \ceil{\frac{1}{2} (q - \frac{3}{4})} \leq \frac{q}{2} + \frac{5}{8}$, we can obtain the desired error scaling. This formula requires no more than $4 \cdot 5^{s - 1} $ total operations; by symmetry, we can assume each of the BCH formulas only must be applied $2 \cdot 5^{s - 1}$ times. Thus, the total number of $\mathcal{S}_1$ operations is no more than
\begin{align}
    2 \cdot 5^{s - 1} (8 \cdot 6^{q - 1} + 48 \cdot 35^{2q}) \leq 4 \cdot 5^{\frac{q}{2}} \cdot 48 \cdot 35^{2q} \leq 192 \cdot 2900^q.
\end{align}

To produce an $\epsilon$ scaling, we apply \cref{r-scaling} to the Trotterized operator, implying that we require the following $r$ scaling for fixed $q$:
\begin{align}
    r \in \Omega\left( \frac{(\Lambda^{4} t)^{1 + 1 / (q - \frac{3}{4})} }{\epsilon^{1 / (q - \frac{3}{4})}} \right),
\end{align}
where the total number of $\mathcal{S}_1$ operations is no more than
\begin{align}
    r \cdot 192 \cdot 2900^q \subset re^{\mathcal{O}(q)}.
\end{align}

\end{proof}

\subsection{Application to state preparation}\label{subsec:state-prep}
We first need to demonstrate the connection between block-encoded powers of annihilation/creation operators and state preparation. First, observe that the ideal block encoding would allow for initialization from the vacuum:
\statepreptime
\begin{proof}\label{state_prep_proof}
    The proof is algebraic; begin by producing the Taylor series expansion of the operator:
    \begin{align}
        \mathcal{T}_{k}(t) &= \exp \left( it \begin{bmatrix}
            0 & ( a^\dagger )^k \\
            ( a )^k & 0
        \end{bmatrix} \right) \\
        &\qquad= \sum_{j = 0}^\infty \frac{(it)^j}{j!} \begin{bmatrix}
            0 & ( a^\dagger )^k \\
            ( a )^k & 0
        \end{bmatrix}^j \\ 
        &\qquad= \sum_{j = 0}^\infty \frac{(it)^{2j}}{(2j)!} \begin{bmatrix}
            0 & ( a^\dagger )^k \\
            ( a )^k & 0
        \end{bmatrix}^{2j}
        + \sum_{j = 0}^\infty \frac{(it)^{2j + 1}}{(2j + 1)!} \begin{bmatrix}
            0 & ( a^\dagger )^k \\
            ( a )^k & 0
        \end{bmatrix}^{2j + 1},
    \end{align}
where the matrix products have well-defined forms
    \begin{align}
        \begin{bmatrix}
            0 & ( a^\dagger )^k \\
            ( a )^k & 0
        \end{bmatrix}^{2j} &= \begin{bmatrix}
            (( a^\dagger )^k ( a )^k)^j & 0 \\
            0 & (( a )^k ( a^\dagger )^k)^j 
        \end{bmatrix}, \\
        \begin{bmatrix}
            0 & ( a^\dagger )^k \\
            ( a )^k & 0
        \end{bmatrix}^{2j + 1} &= \begin{bmatrix}
             0 & ( a^\dagger )^k (( a )^k ( a^\dagger )^k)^j \\
            ( a )^k (( a^\dagger )^k ( a )^k)^j & 0 
        \end{bmatrix},
    \end{align}
so that
\begin{align}
    &\exp \left( it \begin{bmatrix}
            0 & ( a^\dagger )^k \\
            ( a )^k & 0
        \end{bmatrix} \right) \ket{1} \kron \ket{0} \\
        &\qquad = \sum_{j = 0}^\infty \frac{(it)^{2j}}{(2j)!} \sqrt{k!}^{2j} \ket{1} \kron \ket{0} + \sum_{j = 0}^\infty \frac{(it)^{2j + 1}}{(2j + 1)!} \sqrt{k!}^{2j + 1}  \ket{0} \kron \ket{k} \\
        &\qquad= \cos (t \sqrt{k!}) \ket{1} \kron \ket{0} + i \sin (t \sqrt{k!}) \ket{0} \kron \ket{k}.
\end{align}
When $t \sqrt{k!} = (2n + 1) \frac{\pi}{2}$ for $n \in \mathbb{N}$, the $\ket{1} \kron \ket{0}$ term vanishes and we are left with the $\ket{0} \kron \ket{k}$ Fock state, as desired.
\end{proof}

While this result allows us to prepare the $\ket{k}$ Fock state, it also will incur unwanted transformations on starting states other than the vacuum ($\ket{1} \kron \ket{n}, n \neq 1$). By applying the BCH formula, we can isolate this operation so that it only operates on the $\ket{1} \kron \ket{0}$ term. In particular, we argue:
\resultstateprep
\begin{proof}
The general construction of the operator emerges from the use of a Trotter formula in conjunction with a phase rotation gate. We begin by defining the rotation operator:
\begin{define}
Call $R_{Z0}$ the phase-flip operator acting on some set of modes $B$ to be
\begin{align}
    R_{Z0} := \identity \otimes (\identity - 2 \ket{0}\!\bra{0}),
\end{align}
i.e., only flip the phase for the vacuum. This operator is implementable using a 0-controlled cavity-conditioned qubit rotation gate. 

\end{define}
Then, because $R_{Z0}$ is self-adjoint, we can conjugate $\mathcal{T}_k(-t)$ without error as 
\begin{align}
    R_{Z0} \mathcal{T}_k (-t) R_{Z0} &= R_{Z0} \exp \left( it' \begin{bmatrix}
            0 & ( a^\dagger )^k \\
            ( a )^k & 0
        \end{bmatrix} \right) R_{Z0} \\
        &= \exp \left( - it R_{Z0} \begin{bmatrix}
            0 & ( a^\dagger )^k \\
            ( a )^k & 0
        \end{bmatrix} R_{Z0} \right) \\
        &= \exp \left( it  \begin{bmatrix}
            0 & ( a^\dagger )^k (2 \ket{0}\!\bra{0} - \identity) \\
            (2 \ket{0}\!\bra{0} - \identity) ( a )^k  & 0
        \end{bmatrix} \right),
\end{align}
where specific left- and right-hand $(2 \ket{0}\!\bra{0} - \identity)$ terms vanish given the annihilation/creation operators. We then apply the Trotter formula upon $\mathcal{T}_k(t/2), R_{Z0} \mathcal{T}_k(-t/2) R_{Z0}$, which yields
\begin{align}
    \exp \left( i t \begin{bmatrix}
        0 & ( a^\dagger )^k \ketbra{0}{0} \\
        \ketbra{0}{0} ( a )^k & 0
    \end{bmatrix} \right),
\end{align}
as desired.

To compute the error scaling, recall our result from \cref{thm:arb_calc}, which states the error scaling of $\widetilde{\mathcal{T}}_{k, p}$ (and, respectively, $R_{Z0} \widetilde{\mathcal{T}}_{k, p} R_{Z0}$) is
\begin{align}
    \norm{\widetilde{\mathcal{T}}_{k, p}(t) - \mathcal{T}_k (t)} \in \mathcal{O}((\Lambda^{k/2} t)^p) ,
\end{align}
so that, by $\cref{lem:trotter-with-approx}$, 
\begin{align}
    \norm{\trotter_{2q}(\widetilde{\mathcal{T}}_{k, p}(t/2), R_{Z0} \widetilde{\mathcal{T}}_{k, p}(-t/2) R_{Z0}) - \exp \left( i t \begin{bmatrix}
        0 & ( a^\dagger )^k \ketbra{0}{0} \\
        \ketbra{0}{0} ( a )^k & 0
    \end{bmatrix} \right) } \in \mathcal{O}((\Lambda^{k/2} t)^{p}),
\end{align}
when $q = \max(\ceil{\frac{p - 1}{2}}, 1 ) $ and using no more than $4 \cdot 5^{q-1}$ operator exponentials. Thus, we can set $q = \frac{p + 1}{2} \geq \max(\ceil{\frac{p - 1}{2}}, 1 ) $ so that we use no more than $ 4 \cdot 5^{\frac{p - 1}{2}} \leq 2 \cdot 5^{p/2}$ $\widetilde{\mathcal{T}}_{k, p}$ terms.

\end{proof}

Thus, our approximate operators can be applied to yield the same result with high probability:
\begin{theorem}
We can prepare the $\ket{0} \kron \ket{k}$ with probability at least $ 1- \delta$ using no more than $r \in \Theta \left( \frac{(\Lambda^{k/2}t)^{1 + 1/(p - 1)}}{(\delta / 2)^{1 / (p - 1)}} \right)$ $\mathcal{F}_{k, p}$ operators or at most
\begin{align}
    r \cdot 2 \cdot 5^{p/2} \cdot n^{1.6} 30^{np} 420^{n^2 p / 2} 6^{\log_2 n + 1} 
\end{align}
$\mathcal{S}_1$ operators.

\end{theorem}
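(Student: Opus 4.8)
The plan is to chain together three results already in hand: the exact identity for the Fock-preparation unitary $\stateprep$ from \cref{lem:fock-prep-unitary}, the timeslicing lemma \cref{r-scaling}, and the $\mathcal{S}_1$-count for the approximate block-encoded powers $\widetilde{\mathcal{T}}_{k,p}$ from \cref{thm:arb_calc}. Recall that with $t = (2n+1)\tfrac{\pi}{4\sqrt{k!}}$ the exact $\stateprep$ sends $\ket{1}\kron\ket{0}\mapsto\ket{0}\kron\ket{k}$ while leaving $\ket{1}\kron\ket{b}$ fixed for $b\neq 0$, and that \cref{lem:fock-prep-unitary} already exhibits an order-$p$ approximation $\stateprepapprox$ --- the operator written $\mathcal{F}_{k,p}$ in the statement --- with $\norm{\stateprepapprox - \stateprep}\in\mathcal{O}((\Lambda^{k/2}t)^p)$ assembled from no more than $4\cdot 5^{q-1}\le 2\cdot 5^{p/2}$ copies of $\widetilde{\mathcal{T}}_{k,p}$.

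First I would convert the operator-norm guarantee into a statement about the measurement success probability. If, after timeslicing, we have $\norm{\stateprepapprox(t/r)^r - \stateprep(t)}\le\epsilon$, then applying both operators to $\ket{1}\kron\ket{0}$ and using $\stateprep(t)\ket{1}\kron\ket{0}=\ket{0}\kron\ket{k}$ shows the prepared state has overlap at least $1-\epsilon$ with $\ket{0}\kron\ket{k}$; measuring the qubit therefore returns the ``success'' outcome (qubit in $\ket{0}$) with probability at least $(1-\epsilon)^2\ge 1-2\epsilon$, after which the mode is in $\ket{k}$ up to the same error. Hence it suffices to force $\epsilon \le \delta/2$.

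Next I would run the timeslicing argument of \cref{r-scaling}, which was stated for $\mathcal{T}_k$ but transfers verbatim: the generator of $\stateprep$ --- the block $( a^\dagger )^k\ketbra{0}{0}$ --- has norm at most $\Lambda^{k/2}$ (indeed at most $\sqrt{k!}$ on the relevant subspace), $\stateprep(t)=\stateprep(t/r)^r$ holds exactly since $\stateprep$ is a genuine exponential of a fixed block-encoded operator, and $p>1$, so the same Taylor expansion gives $\norm{\stateprepapprox(t/r)^r - \stateprep(t)}\in\mathcal{O}\!\big((\Lambda^{k/2}t)^p/r^{p-1}\big)$. Demanding that this be at most $\delta/2$ yields $r\in\Theta\!\left(\frac{(\Lambda^{k/2}t)^{1+1/(p-1)}}{(\delta/2)^{1/(p-1)}}\right)$, exactly the $r$ claimed.

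Finally I would do the gate accounting: the $r$ outer slices each consist of one $\stateprepapprox$, costing at most $2\cdot 5^{p/2}$ invocations of $\widetilde{\mathcal{T}}_{k,p}$ by \cref{lem:fock-prep-unitary}, and each $\widetilde{\mathcal{T}}_{k,p}$ costs at most $n^{1.6}\,30^{np}\,420^{n^2 p/2}\,6^{\log_2 n+1}$ applications of $\mathcal{S}_1$ by \cref{thm:arb_calc}, where $n$ is the bit-length of $k$; multiplying these gives the stated bound $r\cdot 2\cdot 5^{p/2}\cdot n^{1.6} 30^{np} 420^{n^2 p/2} 6^{\log_2 n+1}$. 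The main thing to be careful about --- rather than a deep obstacle --- is the first step: one must check that the operator-norm error genuinely controls the overlap with the single fixed target state (and that no extra leakage is introduced, which is why the protected $\stateprep$ rather than $\mathcal{T}_k$ is used), and that the hypotheses of \cref{r-scaling}, in particular $p>1$, carry over so the $r$-scaling stays finite.
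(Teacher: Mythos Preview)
Your proposal is correct and follows essentially the same approach as the paper: reduce the success-probability requirement to an operator-norm bound $\norm{\stateprepapprox-\stateprep}\le\delta/2$ via the reverse triangle inequality and $(1-\epsilon)^2\ge 1-2\epsilon$, invoke the timeslicing lemma \cref{r-scaling} (adapted from $\mathcal{T}_k$ to $\stateprep$), and multiply through the gate counts from \cref{lem:fock-prep-unitary} and \cref{thm:arb_calc}. Your explicit remark that \cref{r-scaling} transfers verbatim because $\stateprep$ is the exponential of a fixed generator with norm bounded by $\Lambda^{k/2}$ is a point the paper leaves implicit.
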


\begin{proof}
Begin by identifying the $\epsilon$ precision necessary to yield a failure probability less than $\delta$. A sufficient condition would be that
\begin{align}
    \left| \norm{\ketbra{0}{0} \kron \ketbra{k}{k} \stateprepapprox \ket{1} \kron \ket{0}}^2 - \norm{\ketbra{0}{0} \kron \ketbra{k}{k} \stateprep \ket{1} \kron \ket{0}}^2 \right| \leq \delta.
\end{align}
Observe that our idealized operator has a success probability; thus, we seek to demonstrate that
\begin{align}
    \left|\norm{\ketbra{0}{0} \kron \ketbra{k}{k} \stateprepapprox \ket{1} \kron \ket{0}}^2 - 1\right| \leq \delta .
\end{align}
Because the probability of measuring $\ket{0} \kron \ket{k}$ lies in $[0, 1]$, the above inequality holds when
\begin{align}
    \norm{\ketbra{0}{0} \kron \ketbra{k}{k} \stateprepapprox \ket{1} \kron \ket{0}}^2 \geq 1 - \delta.
\end{align}
We recognize that we can lower bound the norm
\begin{align}
    &\norm{\ketbra{0}{0} \kron \ketbra{k}{k} \stateprepapprox \ket{1} \kron \ket{0}}\\
    &\qquad= \norm{\ketbra{0}{0} \kron \ketbra{k}{k} (\stateprep - (\stateprep - \stateprepapprox)) \ket{1} \kron \ket{0}} \\
    &\qquad\geq \left| \norm{\ketbra{0}{0} \kron \ketbra{k}{k} \mathcal{F}_{k} \ket{1} \kron \ket{0}} - \norm{\ketbra{0}{0} \kron \ketbra{k}{k} (\stateprep - \stateprepapprox) \ket{1} \kron \ket{0}} \right| \\
    &\qquad\geq 1 - \norm{\stateprep - \stateprepapprox}_\infty
\end{align}
by requiring $\norm{\stateprep - \stateprepapprox}_\infty \leq 1$. This allows us to produce a lower bound on the original LHS
\begin{align}
    \norm{\ketbra{0}{0} \kron \ketbra{k}{k} \stateprepapprox \ket{1} \kron \ket{0}}^2 \geq 1 - 2 \norm{\stateprep - \stateprepapprox}_\infty.
\end{align}
Thus, it is sufficient to hold
\begin{align}
    1 - 2 \norm{\stateprep - \stateprepapprox}_\infty \geq 1 - \delta \iff \norm{\stateprep - \stateprepapprox}_\infty \leq \frac{\delta}{2}.
\end{align}
We apply \cref{r-scaling} to \cref{lem:fock-prep-unitary} so that the time-sliced $\stateprepapprox^r$ has
\begin{align}
    \norm{\stateprepapprox^r - \stateprep} \leq \frac{\delta}{2},
\end{align}
by using $r \in \Theta \left( \frac{(\Lambda^{k/2}t)^{1 + 1/(p - 1)}}{(\delta / 2)^{1 / (p - 1)}} \right)$ applications of $\stateprepapprox(t/r)$. The $\mathcal{S}_1$ bound follows from a similar analysis to \cref{thm:main} applied to the result from \cref{lem:fock-prep-unitary}.

\end{proof}

\section{Universal Control of the \texorpdfstring{Span $\left\{ \left|0\right\rangle ,\left|1\right\rangle \right\} $}{}
Fock Space\label{subsec:Universal-Control}}

To further demonstrate the efficacy of the instruction set, we employ the approach to encode a qubit in a cavity either via generation
of effective Pauli gates 
or imposition of an effective Hubbard interaction in the Jaynes-Cumming
Hamiltonian.  In this sense, the techniques presented here are analogous to those in~\cite{liu2021constructing}, in that we use our results to effectively truncate the quantum information to a two-dimensional subspace despite the fact that the natural dynamics of the systems causes the quantum information to leak from this space into the larger Hilbert space of the cavity. Error bounds for these complex operations are considered outside of the scope of the present work and are left for future research.

For universal control in the restricted $\text{span}\left\{ \left|0\right\rangle ,\left|1\right\rangle \right\} $
Hilbert space, we generate three effective Pauli operators $\sigma_{\text{eff}}^{x}$,
$\sigma_{\text{eff}}^{y}$, and $\sigma_{\text{eff}}^{z}$ that produce
Pauli rotations in the lowest two modes of the cavity, with minimal leakage
to higher energy states. The form of the effective Pauli operators
is determined by expressing the standard Pauli operators
\begin{align}
\sigma^{x} & =\left(\begin{array}{cc}
0 & 1\\
1 & 0
\end{array}\right),\\
\sigma^{y} & =\left(\begin{array}{cc}
0 & -i\\
i & 0
\end{array}\right),\\
\sigma^{z} & =\left(\begin{array}{cc}
1 & 0\\
0 & -1
\end{array}\right),
\end{align}
in terms of creation and annihilation operators truncated to the first
two Fock states
\begin{align}
\hat{a}_{\text{eff}}^{\dagger} & =\left(\begin{array}{cc}
0 & 0\\
1 & 0
\end{array}\right),\\
\hat{a}_{\text{eff}} & =\left(\begin{array}{cc}
0 & 1\\
0 & 0
\end{array}\right),\\
\hat{n}_{\text{eff}} & =\hat{a}^{\dagger}\hat{a}_{\text{eff}}=\left(\begin{array}{cc}
0 & 0\\
0 & 1
\end{array}\right),
\end{align}
which yields
\begin{align}
\sigma_{\text{eff}}^{x} & =\hat{a}_{\text{eff}}^{\dagger}+\hat{a}_{\text{eff}},\\
\sigma_{\text{eff}}^{y} & =i\left(\hat{a}_{\text{eff}}^{\dagger}-\hat{a}_{\text{eff}}\right),\\
\sigma_{\text{eff}}^{z} & =I-2\hat{a}_{\text{eff}}^{\dagger}\hat{a}_{\text{eff}}.
\end{align}
To reduce leakage into higher energy states, we ensure the creation
operator $\hat{a}_{\text{eff}}^{\dagger}$ only acts on the ground
state $\left|0\right>$ and the annihilation operator $\hat{a}_{\text{eff}}$
only acts on the first excited state $\left|1\right>$ with the projector
\begin{align}
\hat{P}_{0} & \approx I-\hat{n}\\
 & =\begin{cases}
0 & n=1\\
1 & n=0
\end{cases},
\end{align}
where $n$ is the number of photons in the cavity and where only the
$\text{span}\left\{ \left|0\right\rangle ,\left|1\right\rangle \right\} $
states are populated. Since the operator is a projector, it obeys
the relation
\begin{equation}
\hat{P}_{0}^{2}=\hat{P}_{0},
\end{equation}
such that the effective Pauli gates are
\begin{align}
\sigma_{\text{eff}}^{x} & =\hat{a}_{\text{eff}}^{\dagger}\hat{P}_{0}+\hat{P}_{0}\hat{a}_{\text{eff}}\\
 & \approx\hat{a}^{\dagger}\left(I-\hat{n}\right)+\left(I-\hat{n}\right)\hat{a},\\
\sigma_{\text{eff}}^{y} & =i\left(\hat{a}_{\text{eff}}^{\dagger}\hat{P}_{0}-\hat{P}_{0}\hat{a}_{\text{eff}}\right)\\
 & \approx i\left(\hat{a}^{\dagger}\left(I-\hat{n}\right)-\left(I-\hat{n}\right)\hat{a}\right),\\
\sigma_{\text{eff}}^{z} & =I-2\hat{a}_{\text{eff}}^{\dagger}\hat{P}_{0}^{2}\hat{a}_{\text{eff}}\\
 & \approx I-2\hat{a}^{\dagger}\left(I-\hat{n}\right)\hat{a}.
\end{align}

\subsubsection*{Pauli X Gate}

Consider the infinitesimal $\sigma_{x}$-rotation gate in the $\text{span}\left\{ \left|0\right\rangle ,\left|1\right\rangle \right\} $
Fock space 
\begin{align}
U_{\text{span}\left\{ 0,1\right\} ,x} & =e^{i\lambda^{2}\sigma_{\text{eff}}^{x}\sigma^{z}}\\
 & =e^{i\lambda^{2}\left(\hat{a}^{\dagger}\left(1-\hat{n}\right)+\left(1-\hat{n}\right)\hat{a}\right)\sigma^{z}}.
\end{align}
Expression of the exponent in terms of phase-space operators Eqs.~\ref{eq:AnnihilationOperatortoPhaseSpace}, \ref{eq:CreationOperatortoPhaseSpace}, and \ref{eq:NumbertoPhaseSpace}
gives 
\begin{gather}
i\lambda^{2}\left(\hat{a}^{\dagger}\left(I-\hat{n}\right)+\left(I-\hat{n}\right)\hat{a}\right)\sigma^{z}\nonumber \\
=i\lambda^{2}\left(2\hat{x}-\left\{ \hat{x},\hat{n}\right\} +i\left[\hat{p},\hat{n}\right]\right)\sigma^{z}.
\end{gather}
The gate is therefore given by Trotter for each of
three terms: $\exp\left(\left[\hat{A}_{1},\hat{B}_{1}\right]\lambda^{2}\right)=\exp\left(-i\lambda^{2}\left\{ \hat{x},\hat{n}\right\} \sigma^{z}\right)$,
$\exp\left(\left[\hat{A}_{2},\hat{B}_{2}\right]\lambda^{2}\right)=\exp\left(-\lambda^{2}\left[\hat{p},\hat{n}\right]\sigma^{z}\right)$,
and $\exp\left(2i\lambda^{2}\hat{x}\sigma^{z}\right)$. 

The terms consisting of exponentials of commutators are decomposed via BCH. The relationship between the commutator and anticommutator required for the first term is given by the Pauli anticommutation-commutation relation
\begin{align}
-i\left\{ \hat{x},\hat{n}\right\} \sigma^{z} & =-i\left(i\left[i\hat{x}\sigma^{x},i\hat{n}\sigma^{y}\right]\right)\\
 & =\left[i\hat{x}\sigma^{x},i\hat{n}\sigma^{y}\right]\\
 & =\left[\hat{A}_{1},\hat{B}_{1}\right],
\end{align}
where $\hat{A}_{1}$ corresponds to a position displacement and $\hat{B}_{1}$
corresponds to the $y$-conditional rotation gate. The argument of
the second term is already in the form of a commutator, such that
\begin{align}
\left[\hat{A}_{2},\hat{B}_{2}\right] & =-\left[\hat{p},\hat{n}\right]\sigma^{z}\\
 & =\left[i\hat{p},i\hat{n}\sigma^{z}\right],
\end{align}
where $\hat{A}_{2}$ corresponds to an \emph{unconditional} momentum boost,
and $\hat{B}_{2}$ corresponds to the $z$-conditional rotation gate.
Lastly, the third term already belongs to the instruction set architecture
and needs no further decomposition. 

The infinitesimal $\sigma_{x}$-rotation gate in the $\text{span}\left\{ \left|0\right\rangle ,\left|1\right\rangle \right\} $
Fock space is therefore composed of a product of nine rotation and displacement gates or 21 displacement gates.

\subsubsection*{Pauli Y Gate}

The infinitesimal $\sigma_{y}$-rotation gate in the $\text{span}\left\{ \left|0\right\rangle ,\left|1\right\rangle \right\} $
Fock space is determined analogously 
\begin{align}
U_{\text{span}\left\{ 0,1\right\} ,y} & =e^{i\lambda^{2}\sigma_{\text{eff}}^{y}\sigma^{z}}\\
 & =e^{-\lambda^{2}\left(\hat{a}^{\dagger}\left(I-\hat{n}\right)+\left(I-\hat{n}\right)\hat{a}\right)\sigma^{z}}.
\end{align}
Expression of the argument of the exponent in terms of phase-space
variables Eq.~\ref{eq:CreationOperatortoPhaseSpace} and Eq.~\ref{eq:AnnihilationOperatortoPhaseSpace}
yields 
\begin{gather}
-\lambda^{2}\left(\hat{a}^{\dagger}\left(I-\hat{n}\right)-\left(I-\hat{n}\right)\hat{a}\right)\sigma^{z}\nonumber \\
=-\lambda^{2}\left(-2i\hat{p}+\left[\hat{n},\hat{x}\right]+i\left\{ \hat{n},\hat{p}\right\} \right)\sigma^{z},
\end{gather}
such that the gate is a Trotter-Suzuki decomposition 
of $\exp\left(\left[\hat{A}_{1},\hat{B}_{1}\right]\lambda^{2}\right)=\exp\left(-\lambda^{2}\left[n,x\right]\sigma^{z}\right)$,
$\exp\left(\left[\hat{A}_{2},\hat{B}_{2}\lambda^{2}\right]\right)=\exp\left(-i\lambda^{2}\left\{ p,n\right\} \sigma^{z}\right)$,
and $\exp\left(2i\lambda^{2}p\sigma^{z}\right)$. 

Again, the first two exponential terms are decomposed via BCH. The first commutator is 
\begin{align}
\left[\hat{A}_{1},\hat{B}_{1}\right] & =\left[\hat{n},\hat{x}\right]\sigma^{z}\\
 & =\left[\hat{n}\sigma^{z},\hat{x}\right],
\end{align}
where the exponent of $\hat{A}_{1}$ is a $z$-conditional rotation
gate and the exponent of $\hat{B}_{1}$ is an unconditional position
displacement. The second commutator is given by the Pauli anticommutation-commutation
relation
\begin{align}
i\left\{ p,n\right\} \sigma^{z} & =i\left(i\left[ip\sigma^{x},in\sigma^{y}\right]\right)\\
 & =\left[ip\sigma^{x},in\sigma^{y}\right]\\
 & =\left[\hat{A}_{2},\hat{B}_{2}\right],
\end{align}
where the exponent of $\hat{A}_{2}$ corresponds to a conditional
momentum shift and the exponent of $\hat{B}_{2}$ is a $y$-conditional
rotation gate.

The infinitesimal $\sigma_{y}$-rotation gate in the $\text{span}\left\{ \left|0\right\rangle ,\left|1\right\rangle \right\} $
Fock space therefore has a lower bound gate depth of nine displacement and rotation gates or 21 in displacement gates.

\subsubsection*{Pauli Z Gate}

The infinitesimal $\sigma_{z}$-rotation gate in the $\text{span}\left\{ \left|0\right\rangle ,\left|1\right\rangle \right\} $
Fock space is 
\begin{align}
U_{\text{span}\left\{ 0,1\right\} ,z} & =e^{i\lambda^{2}\sigma_{\text{eff}}^{z}\sigma^{z}}\\
 & =e^{-\lambda^{2}\left(I-2\hat{a}^{\dagger}\left(I-\hat{n}\right)\hat{a}\right)\sigma^{z}},
\end{align}
whose argument in terms of ladder operators is
\begin{gather}
-\lambda^{2}\left(I-2\hat{a}^{\dagger}\left(I-\hat{n}\right)\hat{a}\right)\sigma^{z}\nonumber \\
=-\lambda^{2}\left(I-2\hat{a}^{\dagger}a+2\hat{a}^{\dagger}\hat{a}^{\dagger}\hat{a}\hat{a}\right)\sigma^{z},
\end{gather}
Given the ladder operator commutator Eq.~\ref{eq:CommutatorCreationAnnihilation},
\begin{equation}
\hat{a}^{\dagger}\hat{a}=\hat{a}\hat{a}^{\dagger}-I,
\end{equation}
the relationship between the fourth-order ladder operator term and
the number operator is
\begin{align}
\hat{a}^{\dagger}\hat{a}^{\dagger}\hat{a}\hat{a} & =\hat{a}^{\dagger}\left(\hat{a}\hat{a}^{\dagger}-I\right)\hat{a}\\
 & =\hat{a}^{\dagger}\hat{a}\hat{a}^{\dagger}\hat{a}-\hat{a}^{\dagger}\hat{a}\\
 & =\hat{n}^{2}-\hat{n}.
\end{align}
The argument of the exponential in terms of number operators is then
\begin{equation}
-\lambda^{2}\left(I-2\hat{a}^{\dagger}\left(I-\hat{n}\right)\hat{a}\right)\sigma^{z}=-\lambda^{2}\left(I-4\hat{n}+2\hat{n}^{2}\right)\sigma^{z}.
\end{equation}
The argument is further simplified given that the state is restricted
to the first two cavity modes, as for $n=0$ and $n=1$ the quantity
$\hat{n}^{2}-\hat{n}$ is zero, such that
\begin{gather}
-\lambda^{2}\left(I-4\hat{n}+2\hat{n}^{2}\right)\sigma^{z}\nonumber \\
=-\lambda^{2}\left(I-2n\right)\sigma^{z}.
\end{gather}
The gate is therefore directly synthesized as the product of the qubit
rotation gate $\exp\left(-\lambda^{2}\sigma^{z}\right)$ and the $z$-conditional
rotation gate $\exp\left(2\lambda^{2}\hat{n}\sigma^{z}\right)$ for
a lower bound gate depth of two.

\subsection{Effective Hubbard-lattice interaction approach\label{subsec:Effective-Hubbard-Lattice}}

An alternative scheme to encode a qubit in a cavity with the instruction set is to map the qumodes to a qubit by imposing an $\hat{n}\left(\hat{n}-1\right)$
anharmonicity where the Jaynes-Cummings Hamiltonian that describes
the quantum system. This anharmonicity term increases the energy gap between
higher levels of the oscillator to effectively restrict propagation
to the $\text{span}\left\{ \left|0\right\rangle ,\left|1\right\rangle \right\} $
Fock space in which there is universal control.

Consider the standard Jaynes-Cummings Hamiltonian 
\begin{equation}
\hat{H}_{\text{JC}}=\omega_{R}\hat{a}^{\dagger}\hat{a}+\frac{\omega_{Q}}{2}\sigma^{z}+g\left(\hat{a}\sigma^{+}+\hat{a}^{\dagger}\sigma^{-}\right),
\end{equation}
where $\omega_{R}$ is the cavity frequency, $\omega_{Q}$ is the
qubit frequency, and $g$ is the coupling parameter. Inclusion of
the simulated $\hat{n}\left(\hat{n}-1\right)$ anharmonicity of strength
$\Gamma$ yields 
\begin{equation}
\hat{H}_{\text{an}}=\omega_{R}\hat{a}^{\dagger}\hat{a}+\Gamma\hat{n}(\hat{n}-1)+\frac{\omega_{Q}}{2}\sigma^{z}+g(\hat{a}\sigma^{+}+\hat{a}^{\dagger}\sigma^{-}),
\end{equation}
and the system is switched between states $\left|0\right\rangle $ and
$\left|1\right\rangle $ with a weak time-$t$-dependent drive of
strength $\Omega$ at the resonance frequency $\omega_{R}$ as
\begin{equation}
\hat{H}_{\text{drive}}\left(t\right)=\Omega e^{i\omega_{R}t}\hat{a}^{\dagger}+\Omega^{\star}e^{-i\omega_{R}t}\hat{a}.
\end{equation}
Synthesis of a propagator of the form $\exp\left(i\lambda^{2}\hat{n}\left(\hat{n}-1\right)\right)$
is then sufficient to employ the native qumodes as a qubit.
Note the choice of $\lambda$ for practical implementation must take
into account both the time step and the fact BCH
yields a square root in the exponential argument. The required propagator
is given by Trotter for
$\exp\left(\left[\hat{A},\hat{B}\right]\lambda^{2}\right)=\exp(i\lambda^{2}\hat{n}^{2}\sigma^{z})$
and $\exp(-i\lambda^{2}\hat{n}\sigma^{z})$. 

The first term is synthesized according to BCH 
with a commutator determined by the Pauli commutation relation Eq.~(\ref{eq:PauliCommutator})
\begin{align}
\left[\hat{A},\hat{B}\right] & =i\hat{n}^{2}\sigma^{z}\\
 & =i\hat{n}^{2}\left(-\frac{i}{2}\left[\sigma^{x},\sigma^{y}\right]\right)\\
 & =\left[\frac{1}{\sqrt{2}}\hat{n}\sigma^{x},\frac{1}{\sqrt{2}}\hat{n}\sigma^{y}\right],
\end{align}
where $\hat{A}$ and $\hat{B}$ correspond to $x$-conditional and
$y$-conditional rotations, respectively. The second term is a $z$-conditional
rotation gate.

The resulting anharmonicity gate therefore has a gate depth of lower
bound five displacement and rotation gates or 45 displacement gates.

\section{Fermi-Hubbard Lattice Dynamics\label{sec:Fermi-Hubbard-Lattice-Dynamics}}

To further demonstrate the power of our strategy, we employ the
approach to simulate fermionic dynamics on qubit-qumode systems. We detail the required operations, with error estimates an area for future work.
We consider the Fermi-Hubbard lattice Hamiltonian

\begin{align}
\hat{H}_{\text{FH}} & =\hat{T}_{\text{FH}}+\hat{V}_{\text{FH}},\label{eq:FermiHubbardHamiltonian}\\
\hat{T}_{\text{FH}} & =-J\sum_{i,\sigma}\hat{c}_{i,\sigma}^{\dagger}\hat{c}_{i+1,\sigma}+\hat{c}_{i+1,\sigma}^{\dagger}\hat{c}_{i,\sigma},\\
\hat{V}_{\text{FH}} & =U\sum_{i}\hat{n}_{i,\uparrow}\hat{n}_{i,\downarrow}.
\end{align}
The kinetic energy term $\hat{T}_{\text{FH}}$ describes the nearest-neighbor
interaction for hopping of a single spin between two sites with hopping
parameter $J$ and spin $\sigma$ given annihilation operators $\left\{ \hat{c}_{j,\sigma}\right\} $
and creation operators $\left\{ \hat{c}_{j,\sigma}^{\dagger}\right\} $
for sites $\left\{ j\right\} $. The potential energy term $\hat{V}_{\text{FH}}$
describes the same-site interaction, which gives the energetic unfavorability
of a spin up $\uparrow$ and spin down $\downarrow$ coexisting on
the same site $i$, where $\hat{n}_{j,\sigma}$ gives the number of
spin $\sigma$ particles on site $j$. According to fermion statistics,
no more than a single particle of a given spin can exist on a single
site.

Each individual set of qumodes represents either a spin up or spin
down particle on a single lattice site, for direct comparison to the
qubit-based schemes of refs.~\cite{Kivlichan.2018.110501,arute2020observation,Cade.2020.235122}.
Each qmode set is connected to a qumode set that represents the same site
of opposite spin to facilitate computation of the potential energy
$\hat{V}_{\text{FH}}$, as well as to qumode sets of the same spin on neighboring
sites to facilitate computation of the kinetic energy $\hat{T}_{\text{FH}}$.
Qumode sets are also connected along Jordan-Wigner strings to take into
account fermionic statistics. 

The $\text{\ensuremath{\left|0\right\rangle }}$ qumode represents
absence of a spin and the $\text{\ensuremath{\left|1\right\rangle }}$
qumode represents presence of a spin. Within each cavity, only the
qumodes in $\text{span}\left\{ \text{\ensuremath{\left|0\right\rangle }},\text{\ensuremath{\left|1\right\rangle }}\right\} $
are considered, as in Section~(\ref{subsec:Universal-Control}), which
prevents leakage into unphysical high-energy qumodes. At the
end of each operation, the qumode must be in either the $\text{\ensuremath{\left|0\right\rangle }}$
or $\text{\ensuremath{\left|1\right\rangle }}$ mode and the transmon
state must also be in the ground state $\text{\ensuremath{\left|g\right\rangle }}$,
which provides an error syndrome and therefore a degree of error detection 
not employed in qubit-based representations of the Fermi-Hubbard lattice.

Propagation of any combination of up and down spins is simulated
with three gates that operate on two sets of qumodes. The first two gates -- the same-site
and hopping gates -- are defined as the propagator of the same-site
and hopping Hamiltonians, respectively. The same-site term of the
Hamiltonian for site $i$ is 
\begin{equation}
\hat{H}_{\text{same}}=U\hat{n}_{i,\uparrow}\hat{n}_{i,\downarrow}.
\end{equation}
This term is zero if only one spin is on a site and $U$ if both spins
are on the same site, which gives the diagonal Hamiltonian in the
reduced $4\times4$ Hilbert space
\begin{equation}
\hat{H}_{\text{same}}=\left[\begin{array}{cccc}
0 & 0 & 0 & 0\\
0 & 0 & 0 & 0\\
0 & 0 & 0 & 0\\
0 & 0 & 0 & U
\end{array}\right]
\end{equation}
and the diagonal propagator $U_{\text{same}}=\text{e}^{-\text{i}\hat{H}_{\text{same}}\tau}$
\begin{equation}
U_{\text{same}}=\left[\begin{array}{cccc}
1 & 0 & 0 & 0\\
0 & 1 & 0 & 0\\
0 & 0 & 1 & 0\\
0 & 0 & 0 & \text{e}^{-\text{i}U\tau}
\end{array}\right].
\end{equation}
The diagonal propagator is recognized as both the conditional cross-Kerr interaction and the controlled-phase (CPHASE) gate
in the reduced subspace $\text{span}\left\{ \left|0\right\rangle ,\left|1\right\rangle \right\} $.
The hopping term of the Hamiltonian for each $\sigma$ spin in sites
$i,\left(i+1\right)$ is 
\begin{align}
H_{\text{hop}} & =-J\left(\hat{c}_{i,\sigma}^{\dagger}\hat{c}_{i+1,\sigma}+\hat{c}_{i+1,\sigma}^{\dagger}\hat{c}_{i,\sigma}\right)\\
 & =-J\left(\hat{c}_{i,\sigma}^{\dagger}\hat{c}_{i+1,\sigma}-\hat{c}_{i,\sigma}\hat{c}_{i+1,\sigma}^{\dagger}\right),
\end{align}
where the latter expression employs the commutator relationship of
the annihilation and creation operators. The hopping Hamiltonian for
the specified mapping is then the off-diagonal matrix 
\begin{equation}
H_{\text{hop}}=\left[\begin{array}{cccc}
0 & 0 & 0 & 0\\
0 & 0 & -t & 0\\
0 & -t & 0 & 0\\
0 & 0 & 0 & 0
\end{array}\right],
\end{equation}
which gives the hopping propagator $U_{\text{hop}}=\text{e}^{-\text{i}H_{\text{hop}}\tau}$
\begin{align}
U_{\text{hop}} & =\left[\begin{array}{cccc}
1 & 0 & 0 & 0\\
0 & \cos\left(t\tau\right) & i\sin\left(t\tau\right) & 0\\
0 & i\sin\left(t\tau\right) & \cos\left(t\tau\right) & 0\\
0 & 0 & 0 & 1
\end{array}\right],
\end{align}
which is recognized as a conditional controlled-phase beamsplitter
restricted to $\text{span}\left\{ \left|0\right\rangle ,\left|1\right\rangle \right\} $, or equivalently a Givens/iSWAP-like gate in the reduced
$\text{span}\left\{ \left|0\right\rangle ,\left|1\right\rangle \right\} $
subspace \cite{Cade.2020.235122,arute2020observation}. The final
gate of the three-gate set incorporates the fermionic statistics of
the spins via the fermionic SWAP (FSWAP) gate \cite{Kivlichan.2018.110501,Cade.2020.235122}.
The state of each set of qumodes is swapped with one of its neighbors with
inclusion of a phase where both spins are present in neighboring sets
as
\begin{equation}
U_{\text{FSWAP}}=\left[\begin{array}{cccc}
1 & 0 & 0 & 0\\
0 & 0 & 1 & 0\\
0 & 1 & 0 & 0\\
0 & 0 & 0 & -1
\end{array}\right],
\end{equation}
which is recognized as the product of a conditional rotation gate
and a beam-splitter on 3D cQED systems.

Finally, initial states are prepared by the universal set of gates
in $\text{span}\left\{ \left|0\right\rangle ,\left|1\right\rangle \right\} $
detailed in Section~\ref{subsec:Universal-Control}.

\subsection{Conditional cross-Kerr (CPHASE) gate}

We consider the infinitesimal conditional cross-Kerr gate
\begin{equation}
U_{\text{cross-Kerr}}=e^{i\lambda^{2}\hat{n}_{1}\hat{n}_{2}\sigma_{z}},
\end{equation}
which is also employed in Gottesman-Kitaev-Preskill (GKP) codes encoded in qubit-qumode systems \cite{royer2022encoding}. 

The argument is expressed in terms of a commutator according to the
Pauli commutation relation Eq.~\ref{eq:PauliCommutator}
\begin{align}
\left[A,B\right] & \lambda^{2}=i\lambda^{2}\hat{n}_{1}\hat{n}_{2}\sigma_{z}\\
 & =i\lambda^{2}\hat{n}_{1}\hat{n}_{2}\left(-\frac{i}{2}\left[\sigma^{x},\sigma^{y}\right]\right)\\
 & =\left[\frac{1}{\sqrt{2}}\hat{n}_{1}\sigma^{x},\frac{1}{\sqrt{2}}\hat{n}_{2}\sigma^{y}\right]\lambda^{2},
\end{align}
where $\hat{A}$ corresponds to an $x$-conditional rotation gate
and $B$ corresponds to a $y$-conditional rotation gate.

The resulting gate features a lower bound gate depth of four displacement and rotation gates or 16 displacement gates.

\subsection{ \texorpdfstring{$\text{Span}\left\{ \left|0\right\rangle ,\left|1\right\rangle \right\} $}{}
conditional beam splitter gate}

In order to generate a $\text{span}\left\{ \left|0\right\rangle ,\left|1\right\rangle \right\} $
that operates only when $\hat{n}_{1}\hat{n}_{2}\ne1$ (\emph{i.e.},
$1-\hat{n}_{1}\hat{n}_{2}=0$), we formulate the infinitesimal conditional
(controlled-phase) beam-splitter gate
\begin{equation}
U_{\text{cond. beam}}=e^{-i\lambda^{2}\left(\hat{a}_{1}^{\dagger}\hat{a}_{2}+\hat{a}_{1}\hat{a}_{2}^{\dagger}\right)\left(1-\hat{n}_{1}\hat{n}_{2}\right)\sigma^{z}},
\end{equation}
which is decomposed via the Trotter-Suzuki decomposition 
in terms of $\exp\left(-i\lambda^{2}\left(\hat{a}_{1}^{\dagger}\hat{a}_{2}+\hat{a}_{1}\hat{a}_{2}^{\dagger}\right)\sigma^{z}\right)$
and $\exp\left(i\lambda^{2}\left(\hat{a}_{1}^{\dagger}\hat{a}_{2}+\hat{a}_{1}\hat{a}_{2}^{\dagger}\right)\left(\hat{n}_{1}\hat{n}_{2}\right)\sigma^{z}\right)$.
The first term is the conditional beam splitter $U_{\text{beam split.}}$
Eq.~\ref{subsec:Conditional-(Controlled-Phase)-Beam-Splitter} and
the second term is decomposed via BCH 
as described below.

Given the expression of the number operator in terms of the phase-space
operators Eq.~\ref{eq:NumbertoPhaseSpace}, the argument of the second
exponential operator is 
\begin{gather}
i\lambda^{2}\left(\hat{a}_{1}^{\dagger}\hat{a}_{2}+\hat{a}_{1}\hat{a}_{2}^{\dagger}\right)\hat{n}_{1}\hat{n}_{2}\sigma^{z}\nonumber \\
=i\lambda^{2}\left(2\left(\hat{x}_{1}\hat{x}_{2}+\hat{p}_{1}\hat{p}_{2}\right)\right)\hat{n}_{1}\hat{n}_{2}\sigma^{z}.
\end{gather}
The term is then expressed as a Trotter decomposition of $\exp\left(\left[\hat{A}_{1},\hat{B}_{1}\right]\lambda^{2}\right)=\exp\left(2i\lambda^{2}\hat{x}_{1}\hat{x}_{2}\hat{n}_{1}\hat{n}_{2}\sigma^{z}\right)$
and $\exp\left(\left[\hat{A}_{2},\hat{B}_{2}\right]\lambda^{2}\right)=\exp\left(2i\lambda^{2}\hat{p}_{1}\hat{p}_{2}\hat{n}_{1}\hat{n}_{2}\sigma^{z}\right)$. 

The first commutator is given by the Pauli commutation relation Eq.~\ref{eq:PauliCommutator}
\begin{align}
\left[\hat{A}_{1},\hat{B}_{1}\right] & =2i\hat{x}_{1}\hat{x}_{2}\hat{n}_{1}\hat{n}_{2}\sigma^{z}\\
 & =2i\hat{x}_{1}\hat{x}_{2}\hat{n}_{1}\hat{n}_{2}\left(-\frac{i}{2}\left[\sigma^{x},\sigma^{y}\right]\right)\\
 & =\left[\hat{x}_{1}\hat{n}_{1}\sigma^{x},\hat{x}_{2}\hat{n}_{2}\sigma^{y}\right].
\end{align}

The $\hat{A}_{1}$ term is determined by a Trotter decomposition such that
\begin{align}
\hat{A}_{1} & =\frac{1}{2}\left\{ \hat{x}_{1},\hat{n}_{1}\right\} \sigma^{x}+\frac{1}{2}\left[\hat{x}_{1},\hat{n}_{1}\right]\sigma^{x}\\
 & =\hat{A}_{1a}+\hat{A}_{1b},
\end{align}
where, according to the anticommutator-to-commutator relation $\hat{A}_{1a}$,
is given by BCH with 
\begin{align}
\left[\hat{A}_{1a^{\prime}},\hat{B}_{1a^{\prime}}\right] & =\frac{1}{2}\left\{ \hat{x}_{1},\hat{n}_{1}\right\} \sigma^{x}\\
 & =\frac{i}{2}\left[i\hat{x}_{1}\sigma^{y},i\hat{n}_{1}\sigma^{z}\right]\\
 & =\left[-\frac{1}{\sqrt{2}}\hat{x}_{1}\sigma^{y},-\frac{1}{\sqrt{2}}\hat{n}_{1}\sigma^{z}\right],
\end{align}
where $\hat{B}_{1a^{\prime}}$ is a $z$-conditional rotation gate.
Distribution of terms yields $\hat{A}_{1b}$ as 
\begin{equation}
\left[\hat{A}_{1b^{\prime}},\hat{B}_{1b^{\prime}}\right]=\left[\frac{1}{\sqrt{2}}\hat{x}_{1},\frac{1}{\sqrt{2}}\hat{n}_{1}\sigma^{x}\right],
\end{equation}
where $B_{1b^{\prime}}$ is an $x$-conditional rotation gate. 

According to the same procedure,
\begin{align}
\hat{B}_{1} & =\frac{1}{2}\left\{ \hat{x}_{2},\hat{n}_{2}\right\} \sigma^{y}+\frac{1}{2}\left[\hat{x}_{2},\hat{n}_{2}\right]\sigma^{y}\\
 & =\hat{B}_{1a}+\hat{B}_{1b},
\end{align}
where $\hat{B}_{1a}$ is given by
\begin{align}
\left[\hat{A}_{1a^{\prime\prime}},\hat{B}_{1a^{\prime\prime}}\right] & =\frac{1}{2}\left\{ \hat{x}_{2},\hat{n}_{2}\right\} \sigma^{y}\\
 & =\frac{i}{2}\left[i\hat{x}_{2}\sigma^{z},i\hat{n}_{1}\sigma^{x}\right]\\
 & =\left[-\frac{1}{\sqrt{2}}\hat{x}_{2}\sigma^{z},-\frac{1}{\sqrt{2}}\hat{n}_{2}\sigma^{x}\right],
\end{align}
with $B_{1a^{\prime\prime}}$ an $x$-conditional rotation, and $B_{1b}$
is given by 
\begin{equation}
\left[\hat{A}_{1b^{\prime\prime}},\hat{B}_{1b^{\prime\prime}}\right]=\left[\frac{1}{\sqrt{2}}\hat{x}_{2},\frac{1}{\sqrt{2}}\hat{n}_{2}\sigma^{y}\right],
\end{equation}
where $B_{1b^{\prime\prime}}$ is a $y$-conditional rotation gate.
The second term follows analogously with the position $x$ replaced
by the momentum $p$.

\subsection{Conditional FSWAP gate}

The FSWAP gate follows immediately from the conditional
cross-Kerr gate detailed above and a complete beam-splitter gate (or
conditional beam-splitter gate detailed above) as 
\begin{equation}
U_{\text{FSWAP}}=U_{\text{cond. Kerr}}U_{\text{cond. beam}}.
\end{equation}

\newpage
\printbibliography
\end{document}